\def \VersionLong {}
\def \VersionFinal {}

\ifdefined \VersionLong
	\newcommand{\LongVersion}[1]{\ifdefined\VersionWithComments{\color{red!40!black}#1}\else#1\fi}
	\newcommand{\ShortVersion}[1]{\ifdefined\VersionWithComments{\color{black!40}#1}\fi}
\else
	\newcommand{\LongVersion}[1]{\ifdefined\VersionWithComments{\color{black!40}#1}\fi}
	\newcommand{\ShortVersion}[1]{\ifdefined\VersionWithComments{\color{red!40!black}#1}\else#1\fi}
\fi

\documentclass[a4paper,10pt,envcountsame]{llncs}

\usepackage[utf8]{inputenc}
\usepackage[english]{babel}


 \usepackage{algorithm}
 \usepackage[noend]{algpseudocode} 
 \usepackage{etoolbox}

 \makeatletter
 \newcommand*{\algrule}[1][\algorithmicindent]{%
   \makebox[#1][l]{%
       \hspace*{.2em}
       \vrule height .75\baselineskip depth .25\baselineskip
   }
 }

 \newcount\ALG@printindent@tempcnta
 \def\ALG@printindent{%
    \ifnum \theALG@nested>0
    \ifx\ALG@text\ALG@x@notext
    \else
    \unskip
    \ALG@printindent@tempcnta=1
    \loop
    \algrule[\csname ALG@ind@\the\ALG@printindent@tempcnta\endcsname]%
    \advance \ALG@printindent@tempcnta 1
    \ifnum \ALG@printindent@tempcnta<\numexpr\theALG@nested+1\relax
    \repeat
    \fi
    \fi
 }
 \patchcmd{\ALG@doentity}{\noindent\hskip\ALG@tlm}{\ALG@printindent}{}{\errmessage{failed to patch}}
 \patchcmd{\ALG@doentity}{\item[]\nointerlineskip}{}{}{} 
 \makeatother



\usepackage{subcaption}
\captionsetup{compatibility=false}

\usepackage{paralist} 
\usepackage{xspace}

\newenvironment{ienumeration}
	{\ifdefined\VersionLong\begin{enumerate}\else\begin{inparaenum}[\itshape i\upshape)]\fi}
	{\ifdefined\VersionLong\end{enumerate}\else\end{inparaenum}\fi}


\usepackage{amsmath} 
\usepackage{amssymb} 
\usepackage{multirow}

\usepackage[misc,geometry]{ifsym} 

\ifdefined\VersionWithComments
	\usepackage{draftwatermark}
	\SetWatermarkText{draft}
	\SetWatermarkScale{15}
	\SetWatermarkColor[gray]{0.9}
\fi

\usepackage[svgnames,table]{xcolor}
\definecolor{darkblue}{rgb}{0.0,0.0,0.6}
\definecolor{darkgreen}{rgb}{0, 0.5, 0}
\definecolor{darkpurple}{rgb}{0.7, 0, 0.7}
\definecolor{darkblue}{rgb}{0, 0, 0.7}

\usepackage[
		colorlinks=true,
	\ifdefined \VersionWithComments
		pagebackref=true,
	\fi
		citecolor=darkgreen,
		linkcolor=darkblue,
		urlcolor=darkpurple,
	]{hyperref}
	
\usepackage[capitalise,english,nameinlink]{cleveref} 
\crefname{line}{\text{line}}{\text{lines}} 
\crefname{assumption}{\text{Assumption}}{\text{Assumptions}} 


\usepackage{tikz}
\usetikzlibrary{arrows,automata,positioning}
\tikzstyle{every node}=[initial text=]
\tikzstyle{location}=[rectangle, rounded corners, minimum size=12pt, draw=black, fill=blue!10, inner sep=2pt]
\tikzstyle{final}=[double]
\tikzstyle{accepting}=[final]
\tikzstyle{PTPMOPT}=[,dashed,color=red,semithick]

\definecolor{coloract}{rgb}{0.50, 0.70, 0.30}
\definecolor{colorclock}{rgb}{0.4, 0.4, 1}
\definecolor{colorconst}{rgb}{0.50, 0.20, 0.00}
\definecolor{colordisc}{rgb}{1, 0, 1}
\definecolor{colorloc}{rgb}{0.4, 0.4, 0.65}
\definecolor{colorparam}{rgb}{1, 0.6, 0.0}

\newif\iftikzgnuplot
\tikzgnuplottrue 
\setlength\textfloatsep{0.3em}
\setlength\abovecaptionskip{0.2em}
\usepackage{gnuplot-lua-tikz}
\usepackage{pgfplotstable}
\pgfplotsset{compat=1.12}
\usepackage{booktabs}


\newcommand{\regionstate}[4][]{
 \node[state,region] (#2) [#1] {
 \begin{tabular}{c}
  #3\\
  #4\\
 \end{tabular}
 }}
 \newcommand{\regioninitialstate}[4][]{
 \node[state,initial,region] (#2) [#1] {
 \begin{tabular}{c}
  #3\\
  #4\\
 \end{tabular}
 }}
 
\newcommand{\regionstatenew}[5][]{
 \node[state,region,#2] (#3) [#1] {
 \begin{tabular}{c}
  #4\\
  #5\\
 \end{tabular}
 }}
\usetikzlibrary{automata,positioning,matrix,shapes.callouts}
\tikzset{
region/.style={
rectangle,
rounded corners,
draw=black,very thick
},
accepting/.style={double distance=2pt}
}



\newcommand{\Z}{{\mathbb{Z}}}

\newcommand{\R}{{\mathbb{R}}}
\newcommand{\Rp}{{\mathbb{R}_{>0}}}

\newcommand{\ttrue}{\mathrm{t{\kern-1.5pt}t}}
\newcommand{\ffalse}{\mathrm{f{\kern-1.5pt}f}}

\newcommand{\Rnn}{\R_{\ge 0}}

\newcommand{\Znn}{\Z_{\ge 0}}

\newcommand{\powerset}[1]{\mathcal{P}({#1})}
\newcommand{\oast}{\circledast}

\makeatletter
\newcommand{\figcaption}[1]{\def\@captype{figure}\caption{#1}}
\newcommand{\tblcaption}[1]{\def\@captype{table}\caption{#1}}
\makeatother


\newif\iftikzgnuplot
\tikzgnuplottrue 
\setlength\textfloatsep{0.3em}
\setlength\abovecaptionskip{0.2em}


\usepackage{arydshln}
\usepackage{enumitem}
\setlistdepth{20}
\renewlist{itemize}{itemize}{20}
\setlist[itemize]{label=\textbullet}


\newcommand{\ARun}{\mathit{ARuns}}

\newcommand{\Path}{\mathit{Paths}}

\newcommand{\timedomain}{\Rnn}

\newcommand{\zerovalue}[1][C]{\mathbf{0}_{#1}}

\newcommand{\Zones}[1][C,\Rnn]{\mathcal{Z}(#1)}

\newtheorem{mytheorem}{Theorem}


\newtheorem{myremark}[mytheorem]{Remark}

\spnewtheorem*{myproof}{Proof}{\itshape}{\rmfamily}

\ifdefined\VersionWithComments
	\usepackage[colorinlistoftodos,textsize=footnotesize]{todonotes}
\else
	\usepackage[disable]{todonotes}
\fi
\newcommand{\gennote}[3]{\todo[linecolor=#2,backgroundcolor=#2!25,bordercolor=#2]{#3: #1}}
\newcommand{\ea}[1]{\gennote{#1}{blue}{ÉA}}
\newcommand{\ih}[1]{{\gennote{#1}{purple}{IH}}}
\newcommand{\mw}[1]{\gennote{#1}{orange}{MW}}
\newcommand{\instructions}[1]{{\gennote{\bfseries #1}{red}{Instructions}}}

\ifdefined\VersionFinal
\else
	\usepackage[pagewise]{lineno} 
	\linenumbers
	
\fi




\newcommand{\absoluteTime}{t}
\newcommand{\relativeTime}{\tau}

\newcommand{\signalState}{a}

\newcommand{\semiring}{\mathbb{S}}
\newcommand{\semiringBase}{S}
\newcommand{\semiringElem}{s}
\newcommand{\semiringPlus}{\oplus}
\newcommand{\semiringPlusUnit}{e_\oplus}
\newcommand{\semiringTimes}{\otimes}
\newcommand{\semiringTimesUnit}{e_\otimes}
\newcommand{\semiringInside}{(\semiringBase, \semiringPlus, \semiringTimes, \semiringPlusUnit, \semiringTimesUnit)}
\newcommand{\semiringWithInside}{\semiring = \semiringInside}
\newcommand{\dist}{\mathrm{Dist}}
\newcommand{\Vfrom}{V_{\mathrm{from}}}
\newcommand{\Vto}{V_{\mathrm{to}}}
\newcommand{\supInfSemiring}{(\R \disjUnion \{\pm\infty\},\sup,\inf,-\infty,+\infty)}
\newcommand{\tropicalSemiring}{(\R\amalg\{+\infty\},\inf,+,+\infty,0)}

\newcommand{\signal}{\sigma}
\newcommand{\signalInside}[1][n]{\signalState_1^{\relativeTime_1}\signalState_2^{\relativeTime_2}\cdots \signalState_{#1}^{\relativeTime_{#1}}}
\newcommand{\signalWithInside}[1][n]{\signal=\signalInside[#1]}
\newcommand{\duration}[1]{|#1|}
\newcommand{\values}{\mathit{Values}}
\newcommand{\absConcat}{\circ}
\newcommand{\datavariables}{X}
\newcommand{\DVar}{\datavariables}
\newcommand{\dvar}{x}
\newcommand{\DDom}{\mathbb{D}}
\newcommand{\datadomain}{\DDom}
\newcommand{\dConstant}{d}
\newcommand{\datavaluations}[1][\datavariables]{\datadomain^{#1}}
\newcommand{\clockvaluations}[1][\Clock]{(\timedomain)^{#1}}
\newcommand{\signals}[1][\datavaluations]{\mathcal{T}({#1})}

\newcommand{\loc}{l}
\newcommand{\Loc}{L}
\newcommand{\initLoc}{\loc_0}
\newcommand{\InitLoc}{\Loc_0}
\newcommand{\accLoc}{\loc_F}
\newcommand{\AccLoc}{\Loc_F}
\newcommand{\cval}{\nu}
\newcommand{\CVal}{\clockvaluations}
\newcommand{\TSWA}{\mathcal{W}}
\newcommand{\TSWAWithInside}{\TSWA=(\TSA,\costFunc)}
\newcommand{\TSA}{\mathcal{A}}
\newcommand{\TSAWithInside}{\mathcal{A}=(\datavariables,\Loc,\InitLoc,\AccLoc,\Clock,\Transition,\Label)}
\newcommand{\Transition}{\Delta}

\newcommand{\LabelDomain}{\Constraint{\datavariables}{\datadomain}}
\newcommand{\Label}{\Lambda}
\newcommand{\Guard}{\Constraint{\Clock}{\Znn}}
\newcommand{\guard}{g}
\newcommand{\Resets}{\powerset{\Clock}}
\newcommand{\resets}{\rho}
\newcommand{\Constraint}[2]{\Phi(#1, #2)}

\newcommand{\reset}[2]{#1[#2:=0]}
\newcommand{\costFunc}{\kappa}
\newcommand{\dguard}{u}
\newcommand{\DGuard}{\LabelDomain}
\newcommand{\Clock}{C}
\newcommand{\clock}{c}
\newcommand{\disjUnion}{\amalg}
\newcommand{\ClockWithAbs}{\Clock \disjUnion \{\absClock\}}
\newcommand{\ZonesWithAbs}{\Zones[\ClockWithAbs]}

\newcommand{\project}[2]{{#1{\downarrow_{#2}}}}

\newcommand{\WTTSstate}{q}
\newcommand{\WTTSState}{Q}

\newcommand{\InitWTTSState}{\WTTSState_{0}}

\newcommand{\AccWTTSState}{\WTTSState_{F}}
\newcommand{\WTTSTransition}{{\to}}
\newcommand{\WTTSTransitionRel}{\to}
\newcommand{\WTTSWeight}{W}
\newcommand{\WTTS}{\mathcal{S}}
\newcommand{\dvalSeq}[1][]{\overline{a#1}}
\newcommand{\DValSeq}{(\datavaluations)^\oast}
\newcommand{\dvalSeqi}[1]{\overline{a_{#1}}}
\newcommand{\WTTSWithInside}{\WTTS = (\WTTSState,\InitWTTSState,\AccWTTSState,\WTTSTransition,\WTTSWeight)}
\newcommand{\WTTSPath}{\pi}
\newcommand{\pathValue}{\mu}
\newcommand{\traceValue}{\alpha}

\newcommand{\matchSet}{\mathcal{M}}
\newcommand{\trimBegin}{\absoluteTime}
\newcommand{\trimEnd}{\absoluteTime'}

\newcommand{\trimSignal}{\signal([\trimBegin,\trimEnd))}
\newcommand{\trimBeginVar}{\clock_{\mathrm{begin}}}
\newcommand{\trimEndVar}{\clock_{\mathrm{end}}}
\newcommand{\zoneMatch}{\Zones[\{\trimBeginVar,\trimEndVar\}]}

\newcommand{\WSTTS}{\mathcal{S}^{\mathrm{sym}}}
\newcommand{\WSTTSWithInside}{\WSTTS = (\WSTTSState, \WSTTSInitState,\WSTTSAccState,\WSTTSTransition, \WSTTSWeight)}
\newcommand{\WSTTSstate}{q^{\mathrm{sym}}}
\newcommand{\WSTTSState}{Q^{\mathrm{sym}}}
\newcommand{\WSTTSInitState}[1][]{\WSTTSState_{#1 0}}
\newcommand{\WSTTSAccState}[1][]{\WSTTSState_{#1 F}}
\newcommand{\WSTTSTransition}{{\to}^{\mathrm{sym}}}

\newcommand{\zone}{Z}
\newcommand{\WSTTSWeight}{W^{\mathrm{sym}}}
\newcommand{\WSTTSPath}{{\pi^{\mathrm{sym}}}}
\newcommand{\absClock}{T}
\newcommand{\symbolicPathValue}{\mu^{\mathrm{sym}}}
\newcommand{\symbolicTraceValue}{\alpha^{\mathrm{sym}}}

\newcommand{\reach}[1][\WSTTS]{\mathrm{Reach}(#1)}

\newcommand{\weights}{w}
\newcommand{\incrementFunc}{\mathit{incr}}
\newcommand{\partIncrementFunc}{\mathit{incr}_{<}}
\newcommand{\incrementalWeight}{\mathit{weight}}

\newcommand{\matching}[1]{{#1}_{\mathrm{match}}}
\newcommand{\init}[1]{{#1}_{\mathit{init}}}


\ifdefined \VersionWithComments
 	\definecolor{colorok}{RGB}{80,80,150}
\else
	\definecolor{colorok}{RGB}{0,0,0}
\fi

\newcommand{\eg}{\textcolor{colorok}{e.\,g.,}\xspace}
\newcommand{\ie}{\textcolor{colorok}{i.\,e.,}\xspace}


\makeatletter
\AtBeginDocument{%
  \@ifpackageloaded{hyperref}
  {\def\@doi#1{\href{https://doi.org/#1}
      {\ttfamily https://doi.org/#1}\egroup}}
  {\def\@doi#1{\ttfamily https://doi.org/#1\egroup}}
  \def\doi{\bgroup\catcode`\_=12\relax\@doi}}
\makeatother

\title{Online Quantitative Timed Pattern Matching with Semiring-Valued Weighted Automata\thanks{%
	\LongVersion{%
		This is the author (and extended) version of the manuscript of the same name published in the proceedings of the 17th International Conference on Formal Modeling and Analysis of Timed Systems (FORMATS 2019).
	The final version is available at \url{www.springer.com}.
	This version contains additional proofs.
	}%
        Thanks are due to Ichiro Hasuo for a lot of useful comments and Sasinee Pruekprasert for a feedback.
	This work is partially supported
	by
	JST ERATO HASUO Metamathematics for Systems Design Project (No.\ JPMJER1603) and
        by JSPS Grants-in-Aid No.\ 15KT0012 \& 18J22498.
}}
\author{Masaki Waga\inst{1,2,3}\orcidID{0000-0001-9360-7490}}
\date{\today{}}
\institute{%
National Institute of Informatics, Tokyo, Japan
\and
SOKENDAI (The Graduate University for Advanced Studies), Tokyo, Japan
\and
JSPS Research Fellow, Tokyo, Japan
}

\sloppy

\usepackage{graphicx}	
\usepackage{version}	
\begin{document}

\pagestyle{plain}

\maketitle

\setcounter{footnote}{0}

\thispagestyle{plain}

\ifdefined \VersionWithComments
	\textcolor{red}{\textbf{This is the version with comments. To disable comments, comment out line~3 in the \LaTeX{} source.}}
\fi

\begin{abstract}
\emph{Monitoring} of a signal plays an essential role in the runtime verification of cyber-physical systems. \emph{Qualitative timed pattern matching} is one of the mathematical formulations of monitoring, which gives a Boolean verdict for each sub-signal according to the satisfaction of the given specification. There are two orthogonal directions of extension of the qualitative timed pattern matching. One direction on the result is \emph{quantitative}: what engineers want is often not a qualitative verdict but the \emph{quantitative measurement} of the satisfaction of the specification.  The other direction on the algorithm is \emph{online} checking: the monitor returns some verdicts before obtaining the entire signal, which enables to monitor a running system. It is desired from application viewpoints. In this paper, we conduct these two extensions, taking an automata-based approach. This is the first \emph{quantitative} and \emph{online} timed pattern matching algorithm to the best of our knowledge. More specifically, we employ what we call \emph{timed symbolic weighted automata} to specify quantitative specifications to be monitored, and we obtain an online algorithm using the \emph{shortest distance} of a weighted variant of the zone graph and \emph{dynamic programming}. Moreover, our problem setting is \emph{semiring-based} and therefore, general. Our experimental results confirm the scalability of our algorithm for specifications with a time-bound.

\keywords{quantitative monitoring,
timed automata,
weighted automata,
signals,
zones,
dynamic programming,
semirings\LongVersion{,
timed pattern matching,
runtime verification}}
\end{abstract}



\instructions{Submissions should not exceed 15 pages in length (not including the bibliography, which is thus not restricted), but may be supplemented with a clearly marked appendix, which will be reviewed at the discretion of the program committee.}

\ea{hello}
\ih{hello}
\mw{hello}

\section{Introduction}\label{section:introduction}

\paragraph{Background}
\emph{Monitoring} a system behavior plays an essential role in the runtime verification or falsification of \emph{cyber-physical systems (CPSs)}, where various formalisms such as temporal logic formulas or automata are used for \emph{specification}. Usually, a CPS is a \emph{real-time} system, and real-time constraints must be included in the specification.
An example of such a specification is that the velocity of a self-driving car should be more than 70 km/h within 3 s after the car enters an empty highway.
\emph{Timed automata}~\cite{Alur1994} is a formalism that captures real-time constraints. They are equipped with clock variables and timing constraints on the transitions.
Applications of monitoring of real-time properties include data classification~\cite{DBLP:conf/hybrid/BombaraVPYB16} and Web services~\cite{DBLP:conf/sigsoft/RaimondiSE08} as well as CPSs (e.g., automotive systems~\cite{DBLP:conf/rv/KaneCDK15} and
 medical systems~\cite{DBLP:journals/jcse/ChenSWL16}).

The behavior of a CPS is usually described as a \emph{real-valued signal} that is mathematically a function $\sigma$ mapping a time $t$ to the condition $\sigma(t)\in\R^n$ of the system at time $t$.
Usual automata notions (e.g., NFA and timed automata) handle only finite alphabets, and in order to monitor signals over $\R^n$, automata must be extended to handle infinite alphabets. \emph{Symbolic automata}~\cite{DBLP:conf/popl/VeanesHLMB12} handle large or even infinite alphabets, including real vectors.
In a symbolic automaton over a real vector space $\mathbb{R}^n$, each location (or transition) is labeled with a \emph{constraint} over $\mathbb{R}^n$ instead of one vector $v \in\mathbb{R}^n$; therefore, one location (or transition) corresponds to infinitely many vectors.


\begin{table}[tbp]
 \centering
 \caption{Comparison of the problem settings with related studies}
 \label{table:related_works}
\scalebox{0.77}{ \begin{tabular}{c||c|c|c|c}
& Quantitative? & Online? & Dense time? & Result of which part?\\
\hline\hline
\cite{DBLP:conf/formats/BakhirkinFNMA18} & No & Yes & Yes & 
\begin{tabular}{c}
 All sub-signals (pattern matching)\\
\end{tabular}\\\hline
\cite{DBLP:conf/formats/BakhirkinFMU17} & Yes & No & Yes & 
\begin{tabular}{c}
 All sub-signals (pattern matching)\\
\end{tabular}\\\hline
\cite{DBLP:journals/tcad/JaksicBGN18} & Yes & Yes & No & The whole signal\\\hline
\cite{DBLP:conf/rv/DeshmukhDGJJS15} & Yes & Yes & Yes & The whole signal\\\hline
\textbf{This Paper} & \textbf{Yes}& \textbf{Yes}& \textbf{Yes}& 
\begin{tabular}{c}
 \textbf{All sub-signals}
 \textbf{(pattern matching)}\\
\end{tabular}\\
 \end{tabular}}
\end{table}

Monitoring can be formulated in various ways. 
They are classified according to the following criteria. 
\cref{table:related_works} shows a comparison of various formulations\LongVersion{ of monitoring problems}.
\begin{description}
 \item[Qualitative vs. quantitative semantics] 
When an alphabet admits subtraction and comparisons, in addition to the qualitative semantics (\ie{} true or false), one can define a \emph{quantitative} semantics (\eg{} robustness) of a signal with respect to the specification~\cite{DBLP:journals/tcs/FainekosP09,DBLP:conf/formats/DonzeM10,DBLP:conf/cav/AkazakiH15,DBLP:conf/formats/BakhirkinFMU17}. \emph{Robust semantics} shows how robustly a signal satisfies (or violates) the given specification. For instance, the specification $v > 70$ is satisfied more robustly by $v = 170$ than by $v = 70.0001$. In the context of CPSs, \emph{robust semantics} for signal temporal logic is used in robustness-guided falsification~\cite{DBLP:conf/cav/Donze10,DBLP:conf/tacas/AnnpureddyLFS11}. \emph{Weighted automata}~\cite{DBLP:journals/iandc/Schutzenberger61b,Droste:2009:HWA:1667106} are employed for expressing such a quantitative semantics~\cite{DBLP:journals/fmsd/JaksicBGNN18,DBLP:journals/tcad/JaksicBGN18}.
 \item[Offline vs. online]
 Consider monitoring of a signal $\signal = \signal_1 \cdot \signal_2$ over a specification $\TSWA$.
 In offline monitoring, the monitor returns the result $\matchSet(\signal,\TSWA)$ after obtaining the entire signal $\signal$.
In contrast, in online monitoring, the monitor starts returning the result before obtaining the entire signal $\signal$.
For example, the monitor may return a partial result $\matchSet(\signal_1,\TSWA)$ for the first part $\signal_1$ before obtaining the second part $\signal_2$.
 \item[Discrete vs. dense time] 
In a discrete time setting, timestamps are natural numbers while, in a dense time setting, timestamps are positive (or non-negative) real numbers.
 \item[Result of which part?] 
Given a signal $\signal$, we may be interested in the properties of different sets of sub-signals of $\signal$. 
The simplest setting is where we are interested only in the whole signal $\signal$ (\eg{}\cite{DBLP:conf/rv/DeshmukhDGJJS15,DBLP:journals/tcad/JaksicBGN18}). 
Another more comprehensive setting is where we are interested in the property of \emph{each} sub-signal of $\signal$; problems in this setting are called \emph{timed pattern matching}~\cite{DBLP:conf/formats/UlusFAM14,DBLP:conf/formats/WagaAH16,DBLP:conf/formats/BakhirkinFMU17}.
\end{description}

\paragraph{Our problem}

\begin{figure}[tb]
 \begin{minipage}{0.49\linewidth}
  \centering
  \includegraphics[scale=0.37]{./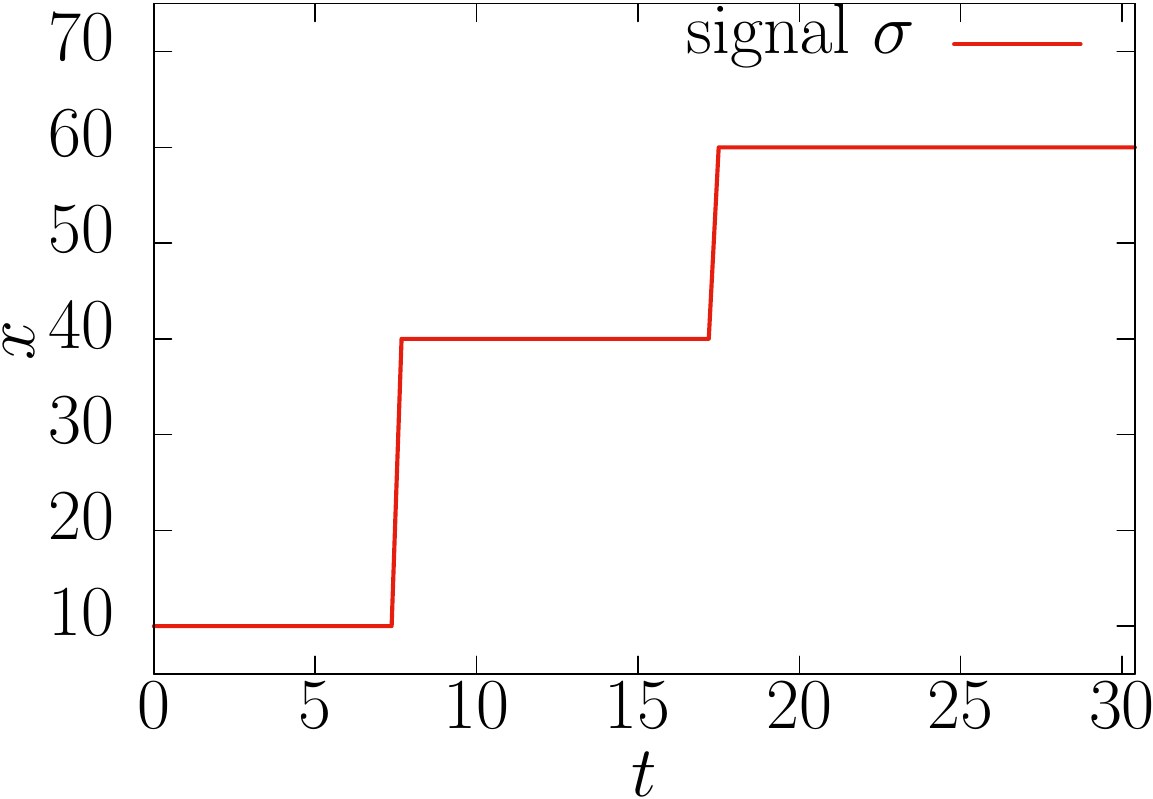}
 \end{minipage}
 \begin{minipage}{0.49\linewidth}
 \centering
 \includegraphics[scale=0.55]{./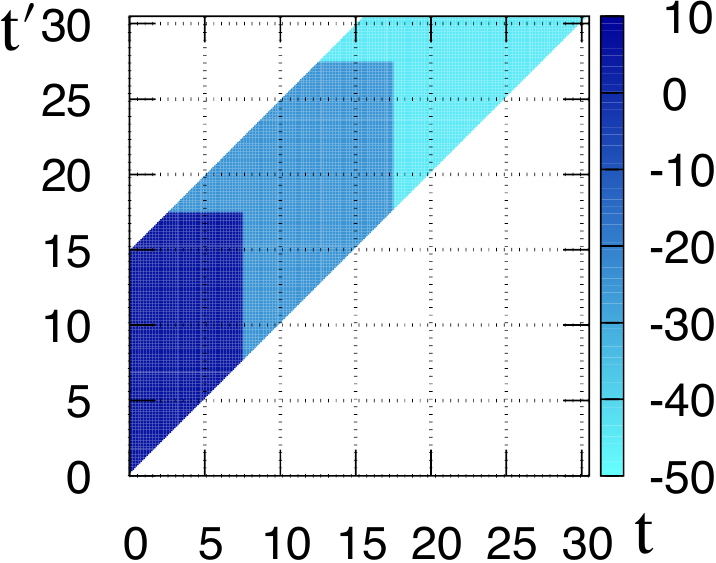}
 \end{minipage}
 \caption{Piecewise-constant signal $\signal$ (left) and an illustration of the quantitative matching function
 $(\matchSet(\signal,\TSWA)) (t,t')$ for $[t,t') \subseteq [0,30.5)$ (right). 
 In the right figure, the score in the white areas is $-\infty$.
 The specification $\TSWA$ is outlined in \cref{ex:intro}. 
 In the right figure, the value at $(3,15)$ is $5$. 
 It shows that the score $\bigl(\matchSet(\signal,\TSWA)\bigr)(3,15)$, for the restriction $\signal\bigl([3,15)\bigr)$ of $\signal$ to the interval $[3,15)$, is $5$.}
 \label{fig:qtpm_example}
\end{figure}
Among the various problem settings of monitoring, we focus on an \emph{online} algorithm for \emph{quantitative timed pattern matching}~\cite{DBLP:conf/formats/BakhirkinFMU17} in a \emph{dense} time setting. 
See \cref{table:related_works}. 
Given a piecewise-constant signal $\signal$ and a specification $\TSWA$ expressed by what we call a \emph{timed symbolic weighted automaton}, our algorithm returns the \emph{quantitative matching function} $\matchSet(\signal,\TSWA)$ that maps each interval $[t,t') \subseteq [0,|\sigma|)$ to the (quantitative) semantics $\bigl(\matchSet(\signal,\TSWA)\bigr)(t,t')$, with respect to $\TSWA$, for the restriction $\signal([t,t'))$ of $\signal$ to the interval $[t,t')$, where $\duration{\signal}$ is the duration of the signal.
An illustration of $\matchSet(\signal,\TSWA)$ is in \cref{fig:qtpm_example}.
In~\cite{DBLP:conf/formats/BakhirkinFMU17}, quantitative timed pattern matching was solved by an offline algorithm using a syntax tree of \emph{signal regular expressions}.
In this paper, we propose an \emph{online} algorithm for quantitative timed pattern matching with automata. To the best of our knowledge, this is the first online algorithm for quantitative timed pattern matching. 
Moreover, our (quantitative) semantics is parameterized by a \emph{semiring} and what we call a \emph{cost function}.
This algebraic formulation makes our problem setting general.

\begin{example}
 \label{ex:intro}
 Let $\signal$ be the piecewise-constant signal in the left of \cref{fig:qtpm_example} and $\TSWA$ be the specification meaning the following.
\begin{itemize}
 \item At first, the value of $x$ stays less than 15, and then the value of $x$ becomes and remains greater than 5 within 5 s.
 \item We are only interested in the behavior within 10 s after the value of $x$ becomes greater than 5.
 \item We want the score showing how robustly the above conditions are satisfied.
\end{itemize}
The right of \cref{fig:qtpm_example} illustrates  the result of quantitative timed pattern matching.
 Quantitative timed pattern matching computes the semantics $\bigl(\matchSet(\signal,\TSWA)\bigr) (t,t')$, with respect to $\TSWA$, for each sub-signal $\signal([t,t'))$ of $\signal$.
 The current semantics shows how robustly the conditions are satisfied.
 The semantics $\bigl(\matchSet(\signal,\TSWA)\bigr)(3,15)$ for the sub-signal $\signal\bigl([3,15)\bigr)$ is $5$, which is the value at $(3,15)$ in the right of \cref{fig:qtpm_example}.
 \LongVersion{ This is because the distance between the first constraint $x < 15$ and the first valuation $x=10$ of the sub-signal $\signal\bigl([3,15)\bigr)$ is $5$, and 
 the distance between the second constraint $x > 5$ and the valuations $x=10$, $x=40$, and  $x=60$ of the sub-signal $\signal\bigl([3,15)\bigr)$ is not smaller than $5$.}
 The semantics $\bigl(\matchSet(\signal,\TSWA)\bigr)(10,15)$ for the sub-signal $\signal\bigl([10,15)\bigr)$ is $-25$, which is the value at $(10,15)$ in the right of \cref{fig:qtpm_example}.
 Thus, the sub-signal $\signal\bigl([3,15)\bigr)$ satisfies the condition specified in $\TSWA$ more robustly than the sub-signal $\signal\bigl([10,15)\bigr)$.

 Our algorithm is \emph{online} and it starts returning the result before obtaining the entire signal $\signal$.
 For example, after obtaining the sub-signal $\signal\bigl([0,7.5)\bigr)$ of the initial 7.5 s, our algorithm returns that for any $[t,t') \subseteq [0,7.5)$, the score
$\bigl(\matchSet(\signal,\TSWA)\bigr)(t,t')$ is $5$.
\end{example}

\paragraph{Our solution}
We formulate quantitative timed pattern matching using the \emph{shortest distance}~\cite{Mohri2009} of semiring-valued (potentially \emph{infinite}) weighted graphs.
We reduce it to the shortest distance of \emph{finite} weighed graphs.
This is in contrast with the qualitative setting: the semantics is defined by the \emph{reachability} in a (potentially \emph{infinite}) graph and it is reduced to the reachability in a \emph{finite} graph.{ The following is an overview.}
 \begin{figure}[tb]
  \centering
 \scalebox{0.7}{ 
 \begin{tikzpicture}[shorten >=1pt,node distance=3.0cm,on grid,auto] 
 \node[state,initial] (q_0) {$\loc_0,x < 15$};
 \node[state,node distance=4.0cm] (q_1)[right=of q_0] {$\loc_1,x > 5$};
 \node[state,accepting,node distance=2.7cm] (q_2)[right=of q_1] {$\loc_2,\top$};

 \path[->] 
 (q_0) edge [above] node {$c < 5$ $/c:=0$} (q_1)
 (q_1) edge [above] node {$c < 10$} (q_2);
 \end{tikzpicture}}
 \small
 \begin{align*}
 \kappa_r\bigl(u,(a_1 a_2\dots a_m)\bigr) &= \inf_{i\in\{1,2,\dots,n\}}\kappa_{r}(u,(a_i))\\
 \kappa_{r}\bigl(\bigwedge_{i = 1}^n (x_i \bowtie_{i} d_i),(a)\bigr) &= \inf_{i\in\{1,2,\dots,n\}}
 \kappa_{r}(x_i \bowtie_{i} d_i,(a)) \ \text{where $\bowtie_{i}\in\{>,\geq,\leq,<\}$}\\
 \kappa_{r}(x \succ d,(a)) &= a(x) - d \quad \text{where $\succ\in\{\geq,>\}$}\\
 \kappa_{r}(x \prec d,(a)) &= d - a(x)\quad \text{where $\prec\in\{\leq,<\}$}
 \end{align*}
 \normalsize
  \caption{Example of a TSWA $\mathcal{W}=(\mathcal{A},\kappa_{r})$ which is the pair of the TSA $\mathcal{A}$ (upper) and the cost function $\kappa_r$ (lower). 
  See \cref{def:tsa} for the precise definition.}
 \label{fig:timed_symbolic_automaton}
 \label{fig:running_example_tsa}
 \end{figure}
\begin{description}
 \item[Problem formulation] 
We introduce \emph{timed symbolic weighted automata (TSWAs)} and define the (quantitative) semantics $\traceValue(\signal,\TSWA)$ of a signal $\signal$ with respect to a TSWA $\TSWA$. 
 Moreover, we define \emph{quantitative timed pattern matching} for a signal and a TSWA. 
A TSWA $\TSWA$ is a pair $(\TSA,\costFunc)$ of a \emph{timed symbolic automaton (TSA)} $\TSA$ --- that we also introduce in this paper --- and a cost function $\costFunc$. 
The cost function $\costFunc$ returns a semiring value at each transition of $\TSA$, and the semiring operations specify how to accumulate such values over time.
This algebraic definition makes our problem general.
\cref{fig:timed_symbolic_automaton} shows an example of a TSWA.
 \item[Algorithm by zones] We give an algorithm for computing our semantics $\traceValue(\signal,\TSWA)$ of a signal $\signal$ by the shortest distance of a \emph{finite} weighted graph. The constructed weighted graph is much like the \emph{zone graph}~\cite{DBLP:conf/ac/BengtssonY03} for reachability analysis of timed automata.
Our algorithm is general and works for any semantics defined on an idempotent and complete semiring. 
(See \cref{example:semirings} later for examples of such semirings.)
\item[Incremental and online algorithms] We present an incremental algorithm for computing the semantics $\traceValue(\signal,\TSWA)$ of a signal $\signal$ with respect to the TSWA $\TSWA$. 
       Based on this incremental algorithm{ for computing $\traceValue(\signal,\TSWA)$},
       we present an online algorithm for quantitative timed pattern matching. 
       To the best of our knowledge, this is the first online algorithm for quantitative timed pattern matching. Our online algorithm for quantitative timed pattern matching works incrementally, much like in \emph{dynamic programming}. \cref{fig:incremental_algorithm} shows an illustration.
\end{description}

\paragraph{Contribution}
We summarize our contributions as follows.
\begin{itemize}
 \item We formulate the semantics of a signal with respect to a TSWA by a shortest distance of a potentially \emph{infinite} weighted graph.
 \item We reduce the above graph to a \emph{finite} weighted graph.
 \item We give an online algorithm for quantitative timed pattern matching.
\end{itemize}

\begin{figure}[tb]
 \centering
 \scalebox{0.7}{
 \begin{tikzpicture}[shorten >=1pt,node distance=3cm,on grid,auto] 
 \node (W) {$\TSWA$};
 \node[node distance=1cm] (weight0)[below=of W] {$\incrementalWeight_0$};
 \node[rectangle,draw,node distance=2.5cm,align=center] (wzc1)[right=of weight0] {
 shortest\\
 distance};
 \node[inner sep=0pt,node distance=2.5cm] (wzc1b)[right=of W] {};
 \node[node distance=1cm] (a1)[below=of wzc1] {$a_1^{\tau_1}$};
 \node[node distance=3cm] (weight1)[right=of wzc1] {$\incrementalWeight_1$};
 \node[rectangle,draw,node distance=2.5cm,align=center] (wzc2)[right=of weight1] {
 shortest\\distance};
 \node[node distance=1cm] (a2)[below=of wzc2] {$a_2^{\tau_2}$};
 \node[inner sep=0pt,node distance=1cm] (wzc2b)[above=of wzc2] {};

 \node[node distance=3cm] (weight2)[right=of wzc2] {$\incrementalWeight_2$};
 \node[node distance=2.0cm] (space1)[right=of weight2] {$\cdots$};
 \node[node distance=7.5cm] (space2)[right=of wzc2b] {$\cdots$};

 \node[node distance=1.5cm] (weightn)[right=of space1] {$\incrementalWeight_n$};
 \node[node distance=1.5cm] (trace_value1)[below=of weight1] {$\matchSet(a_1^{\tau_1},\TSWA)$};
 \node[node distance=1.5cm] (trace_value2)[below=of weight2] {$\matchSet(a_1^{\tau_1}a_2^{\tau_2},\TSWA)$};
 \node[node distance=1.5cm] (trace_valuen)[below=of weightn] {$\matchSet(\signal,\TSWA)$};

 \path[->] 
 (weight0) edge node {} (wzc1)
 (a1) edge node {} (wzc1)
 (wzc1b) edge node {} (wzc1)
 (wzc1) edge node {} (weight1)
 (weight1) edge node {} (wzc2)
 (a2) edge node {} (wzc2)
 (wzc2b) edge node {} (wzc2)
 (wzc2) edge node {} (weight2)
 (weight2) edge node {} (space1)
 (W) edge node {} (space2)
 (space1) edge node {} (weightn)
 (weight1) edge[pos=0.75] node[align=center] {compute the\\partial result} (trace_value1)
(weight2) edge[pos=0.75] node[align=center] {compute the\\partial result} (trace_value2)
 (weightn) edge[pos=0.75] node[align=center] {compute the\\ \emph{entire} result} (trace_valuen)
;
 \end{tikzpicture}}
\caption{Illustration of our online algorithm for quantitative timed pattern matching of a signal $\signalWithInside$ meaning ``the signal value is $a_1$ for $\tau_1$, the signal value is $a_2$ for the next $\tau_2$, $\ldots$'' and a TSWA $\TSWA$. The intermediate data $\incrementalWeight_i$ for the weight computation is represented by zones. The precise definition of the $\incrementalWeight_i$ is introduced later in \cref{def:incremental_weight}.}
\label{fig:incremental_algorithm}
\end{figure}

\paragraph{Related work}

\cref{table:related_works} shows a comparison of the present study with some related studies.
Since the formulation of \emph{qualitative} timed pattern matching~\cite{DBLP:conf/formats/UlusFAM14}, many algorithms have been presented~\cite{DBLP:conf/formats/UlusFAM14,DBLP:conf/tacas/UlusFAM16,DBLP:conf/formats/WagaAH16,DBLP:conf/formats/WagaHS17,DBLP:conf/formats/BakhirkinFNMA18}, including the online algorithms~\cite{DBLP:conf/formats/WagaHS17,DBLP:conf/formats/BakhirkinFNMA18} using timed automata.\LongVersion{ In the consequence, two tools have been presented~\cite{DBLP:conf/cav/Ulus17,DBLP:conf/cpsweek/WagaHS18}.}
\emph{Quantitative} timed pattern matching was formulated and solved by an offline algorithm in~\cite{DBLP:conf/formats/BakhirkinFMU17}.
This offline algorithm is based on the syntax trees of signal regular expressions, and it is difficult to extend for online monitoring.
Weighted automata are used for quantitative monitoring in~\cite{DBLP:conf/sas/ChatterjeeHO16,DBLP:journals/fmsd/JaksicBGNN18,DBLP:journals/tcad/JaksicBGN18}, but the time model was discrete.

The online quantitative monitoring for signal temporal logic~\cite{DBLP:conf/rv/DeshmukhDGJJS15} is one of the closest work. Since we use the clock variables of TSAs to represent the intervals of timed pattern matching, it seems hard to use the algorithm in~\cite{DBLP:conf/rv/DeshmukhDGJJS15} for quantitative timed pattern matching.

\emph{Parametric} timed pattern matching~\cite{DBLP:conf/iceccs/AndreHW18,WA19} is another orthogonal extension of timed pattern matching, where timing constraints are parameterized. Symbolic monitoring~\cite{WAH19} is a further generalization to handle infinite domain data\LongVersion{ \ie{} real values and string labels}. These problems answer feasible parameter valuations and different from our problem.

\paragraph{Organization of the paper}

\cref{sec:preliminary} introduces preliminaries on signals and semirings.
\cref{sec:TSA} defines timed symbolic weighted automata (TSWAs), and our quantitative semantics of signals over a TSWA.
\cref{sec:qtpm} defines the quantitative timed pattern matching problem.
\cref{section:trace_value_algorithm} and \cref{section:qtpm_algorithm} describe our algorithms for computing the quantitative semantics and the quantitative timed pattern matching problem, respectively.
\cref{sec:experiments} presents our experimental results for the sup-inf and tropical semirings, which confirm the scalability of our algorithm under some reasonable assumptions.
\cref{sec:conclusions_and_future_work} presents conclusions and some future perspectives.

\section{Preliminary}
\label{sec:preliminary}

For a set $X$, its powerset is denoted by $\powerset{X}$. 
We use $\varepsilon$ to represent the empty sequence.
All the signals in this paper are piecewise-constant\LongVersion{, which is one of the most common interpolation methods of sampled signals}.

\begin{definition}
 [signal]
 \label{def:signals}
 Let $X$ be a finite set of variables defined over a data domain $\datadomain$.
 A (piecewise-constant) \emph{signal} $\signal$ is a sequence $\signalWithInside$,
 where for each $i \in\{1,2,\dots,n\}$, $\signalState_i \in \datavaluations$ and
 $\tau_i \in \Rp$.
 The set of signals over $\datavaluations$ is denoted by $\signals$.
 The \emph{duration} $\sum_{i=1}^{n}\relativeTime_i$ of a signal $\signal$ is denoted by
 $\duration{\signal}$.
 The sequence $\signalState_1 \circ \signalState_2\circ \dots\circ \signalState_n$ of the values of a signal $\signal$ is denoted by $\values(\signal)$,
 where $\signalState \absConcat \signalState'$ is 
\ShortVersion{$\signalState \absConcat \signalState' = \signalState \signalState'$
 if $\signalState \ne \signalState'$ and
$\signalState \absConcat \signalState' = \signalState$ if $\signalState = \signalState'$.}
\LongVersion{the absorbing concatenation
\begin{displaymath}
 \signalState \absConcat \signalState' =
\begin{cases}
 \signalState \signalState' & \text{if $\signalState \ne \signalState'$}\\
 \signalState    & \text{if $\signalState = \signalState'$}
\end{cases}
\enspace. 
\end{displaymath}}
 We denote the set
 $\{\signalState_1 \absConcat \signalState_2 \absConcat \dots \absConcat \signalState_n \mid 
 n \in\Znn,
 \signalState_1, \signalState_2,\dots, \signalState_n\in \datavaluations\}$ 
 by $(\datavaluations)^\oast$.
 For $\absoluteTime \in [0,\duration{\signal})$, we define $\signal(\absoluteTime) = \signalState_k$, where
 $k$ is such that
 $\sum_{i=1}^{k-1}\relativeTime_i \leq \absoluteTime <\sum_{i=1}^{k}\relativeTime_i$.
 For an interval $[\absoluteTime,\absoluteTime') \subseteq [0,\duration{\signal})$, 
 we define $\signal([\absoluteTime,\absoluteTime')) = \signalState_k^{\sum_{i=1}^{k}\relativeTime_i-\absoluteTime} \signalState_{k+1}^{\relativeTime_{k+1}}\dots \signalState_{l-1}^{\relativeTime_{l-1}}\dots \signalState_l^{\absoluteTime'-\sum_{i=1}^{l-1}\relativeTime_i}$, where
 $k$ and $l$ are such that
 $\sum_{i=1}^{k-1}\relativeTime_i \leq \absoluteTime <\sum_{i=1}^{k}\relativeTime_i$
 and
 $\sum_{i=1}^{l-1}\relativeTime_i < \absoluteTime' \leq\sum_{i=1}^{l}\relativeTime_i$.

\end{definition}

\begin{definition}
 [semiring]
 A system $\semiringWithInside$ is a \emph{semiring} if we have the following.
\begin{itemize}
 \item $(\semiringBase,\semiringPlus,\semiringPlusUnit)$ is a commutative monoid with identity element $\semiringPlusUnit$.
 \item $(\semiringBase,\semiringTimes,\semiringTimesUnit)$ is a monoid with identity element $\semiringTimesUnit$.
 \item For any $\semiringElem,\semiringElem',\semiringElem'' \in \semiringBase$, we have
       $(\semiringElem \semiringPlus \semiringElem') \semiringTimes \semiringElem'' = (\semiringElem \semiringTimes \semiringElem'') \semiringPlus (\semiringElem' \semiringTimes \semiringElem'')$ and
       $\semiringElem \semiringTimes (\semiringElem' \semiringPlus \semiringElem'') = (\semiringElem \semiringTimes \semiringElem') \semiringPlus (\semiringElem \semiringTimes \semiringElem'')$.
 \item For any $\semiringElem \in \semiringBase$, we have
       $\semiringPlusUnit \semiringTimes \semiringElem = \semiringElem \semiringTimes \semiringPlusUnit = \semiringPlusUnit$.
\end{itemize}
\end{definition}

 A semiring $\semiringInside$ is \emph{complete} if 
 for any $\semiringBase' \subseteq \semiringBase$,
 $\bigoplus_{\semiringElem \in \semiringBase'} \semiringElem$ 
 is an element of $\semiringBase$\ShortVersion{ such that: 
 if $\semiringBase' = \emptyset$, $\bigoplus_{\semiringElem \in \semiringBase'} \semiringElem = \semiringPlusUnit$;
 if $\semiringBase' = \{\semiringElem\}$, $\bigoplus_{\semiringElem \in \semiringBase'} \semiringElem = \semiringElem$;
 for any partition $\semiringBase' = \coprod_{i\in I} \semiringBase'_i$, 
 we have $\bigoplus_{\semiringElem \in \semiringBase'} \semiringElem = \bigoplus_{i \in I} \bigl(\bigoplus_{\semiringElem \in \semiringBase'_i} s\bigr)$;
 for any $\semiringElem \in \semiringBase$, 
 we have $\semiringElem \otimes \bigl(\bigoplus_{\semiringElem' \in \semiringBase'} \semiringElem'\bigr) = \bigoplus_{\semiringElem' \in \semiringBase'} (\semiringElem \otimes \semiringElem')$; and 
 for any $\semiringElem \in \semiringBase$, 
 we have $\bigl(\bigoplus_{\semiringElem \in \semiringBase'} \semiringElem\bigr) \semiringTimes \semiringElem' = \bigoplus_{\semiringElem \in \semiringBase'} (\semiringElem \semiringTimes \semiringElem')$.}\LongVersion{ satisfying the following.
 \begin{align*}
  \bigoplus_{\semiringElem \in \semiringBase'} \semiringElem &= \semiringPlusUnit \quad \text{if $\semiringBase' = \emptyset$}
  \qquad\qquad\quad
  \bigoplus_{\semiringElem \in \semiringBase'} \semiringElem = \semiringElem \quad \text{if $\semiringBase' = \{\semiringElem\}$}\\
  \bigoplus_{\semiringElem \in \semiringBase'} \semiringElem &= \bigoplus_{i \in I} 
  \bigl(\bigoplus_{\semiringElem \in \semiringBase'_i} s\bigr) \qquad \text{for any partition $\semiringBase' = \coprod_{i\in I} \semiringBase'_i$}\\
  \semiringElem \otimes \bigl(\bigoplus_{\semiringElem' \in \semiringBase'} \semiringElem'\bigr) &= \bigoplus_{\semiringElem' \in \semiringBase'} (\semiringElem \otimes \semiringElem') \quad \text{and}\quad
  \bigl(\bigoplus_{\semiringElem \in \semiringBase'} \semiringElem\bigr) \semiringTimes \semiringElem' = \bigoplus_{\semiringElem \in \semiringBase'} (\semiringElem \semiringTimes \semiringElem') \qquad \text{for any $\semiringElem \in \semiringBase$}
 \end{align*}}
 A semiring $\semiringWithInside$ is \emph{idempotent} if for any $\semiringElem \in \semiringBase$, 
 $\semiringElem \semiringPlus \semiringElem = \semiringElem$ holds.
For a semiring $\semiringInside$ and
$s_1,s_2,\dots,s_n\in S$, we denote 
$\bigoplus_{i=1}^n s_i = s_1\oplus s_2\oplus\dots\oplus s_n$ and
$\bigotimes_{i=1}^n s_i = s_1\otimes s_2\otimes\dots\otimes s_n$.

\begin{example}
 \label{example:semirings}
 The \emph{Boolean} semiring $(\{\top,\bot\},\lor,\land,\bot,\top)$, the
 \emph{sup-inf} semiring $\supInfSemiring$, and the
 \emph{tropical} semiring $\tropicalSemiring$ are 
 complete and idempotent.
\end{example}

Let $\semiringWithInside$ be a semiring and $G=(V,E,W)$ be a weighted graph over $\semiring$, \ie{} 
$V$ is the finite set of vertices,
$E \subseteq V \times V$ is the finite set of edges, and
$W\colon V \times V \to \semiring$ is the weight function.
For $\Vfrom, \Vto \subseteq V$,
the \emph{shortest distance} from $\Vfrom$ to $\Vto$ is 
\begin{math}
 \dist(\Vfrom,\Vto,V,E,W) =
 \bigoplus_{v\in\Vfrom,v'\in\Vto}\,
 \bigoplus_{v=v_1 v_2 \dots v_n=v' \in \Path(G)} \bigotimes_{i=1}^{n-1} W(v_i,v_{i+1})
\end{math}, where $\Path(G)$ is the set of the paths in\LongVersion{ the directed graph} $G$, \ie{}
$\Path(G) = \{v_1 v_2 \dots v_n \mid \forall i \in \{1,2,\dots,n-1\}.\,, (v_i,v_{i+1}) \in E\}$.
For any complete semiring, the shortest distance problem can be solved by a generalization of the Floyd-Warshall algorithm~\cite{Mohri2009}. 
\LongVersion{Under some conditions, the shortest distance problem can be solved more efficiently by a generalization of the Bellman-Ford algorithm~\cite{Mohri2009}.}

\section{Timed symbolic weighted automata}
\label{sec:TSA}

We propose timed symbolic automata (TSAs), timed symbolic weighted automata (TSWAs), and the (quantitative) semantics of TSWAs.
TSAs are an adaptation of timed automata~\cite{Alur1994} for handling signals over $\datadomain$ rather than signals over a finite alphabet.
In the remainder of this paper, we assume that the data domain $\datadomain$ is equipped with a partial order $\leq$.
A typical example of\LongVersion{ the data domain} $\datadomain$ is the reals $\R$ with the usual order.
We note that TSAs are much like the state-based
variant of timed automata~\cite{EugeneAsarin,DBLP:conf/formats/BakhirkinFNMA18} rather than the
original{, event-based} definition~\cite{Alur1994}.

 For a finite set $\datavariables$ of variables and a poset ($\datadomain,\leq$),
 we denote by $\LabelDomain$ the set of constraints defined by a finite conjunction of inequalities $\dvar \bowtie \dConstant$, where $\dvar \in \DVar$,
 $\dConstant \in \DDom$, and ${\bowtie} \in \{>,\geq,<,\leq\}$.
 We denote $\bigwedge \emptyset \in \LabelDomain$ by $\top$.
 For a finite set $\Clock$ of clock variables,
 a \emph{clock valuation} is a function $\cval \in \CVal$.
For a clock valuation $\cval \in \CVal$ over $\Clock$ and
$\Clock'\subseteq\Clock$, 
we let $\project{\cval}{\Clock'} \in\clockvaluations[\Clock']$ be the clock valuation over $\Clock'$
satisfying $\project{\cval}{\Clock'}(\clock) = \cval(\clock)$
for any $\clock\in\Clock'$.
 For a finite set $\Clock$ of clock variables,
 let $\zerovalue[\Clock]$ be the clock valuation
 $\zerovalue[\Clock] \in \CVal$
 satisfying  $\zerovalue[\Clock](\clock) = 0$ for any $\clock \in \Clock$.
 For a clock valuation $\cval$ over $\Clock$ and $\relativeTime\in\timedomain$, we denote by 
 $\cval + \relativeTime$ the valuation satisfying $(\cval+\relativeTime)(\clock)=\cval(\clock)+\relativeTime$ for any $\clock \in \Clock$.
 For \LongVersion{a clock valuation} $\cval\in\CVal$ and $\resets \subseteq \Clock$, we denote by 
 $\reset{\cval}{\resets}$ the valuation such that
 $(\reset{\cval}{\resets})(x)=0$ for
 $\clock \in \resets$ and
 $(\reset{\cval}{\resets}(\clock)=\cval(\clock)$ for
 $\clock \not\in \resets$.

The definitions of TSAs and TSWAs are as follows.
As shown in \cref{fig:timed_symbolic_automaton}, TSAs are similar to the timed automata in~~\cite{EugeneAsarin,DBLP:conf/formats/BakhirkinFNMA18},
but the locations are labeled with a constraint on the signal values $\datavaluations$ instead of a character in a finite alphabet.

\begin{definition}
 [timed symbolic, timed symbolic weighted automata]
 \label{def:tsa}
 For a poset $(\datadomain,\leq)$, a
 \emph{timed symbolic automaton} (TSA) over $\DDom$ is a
 7-tuple $\TSAWithInside$, where:
\begin{itemize}
 \item $\datavariables$ is a finite set of variables over $\datadomain$;
 \item $\Loc$ is the finite set of locations;
 \item $\InitLoc\subseteq\Loc$ is the set of initial locations;
 \item $\AccLoc\subseteq\Loc$ is the set of accepting locations;
 \item $\Clock$ is the finite set of clock variables;
 \item $\Transition \subseteq \Loc\times\Guard \times\Resets\times\Loc$ is the set of transitions; and
 \item $\Label$ is the labeling function $\Label: \Loc \to \LabelDomain$.
\end{itemize}

 For a poset $(\datadomain,\leq)$ and a complete semiring $\semiringWithInside$,
 a \emph{timed symbolic weighted automaton} (TSWA) over $\DDom$ and $\semiring$ is a pair
 $\TSWAWithInside$ of a TSA $\TSA$ over $\DDom$ and a cost function
 $\costFunc\colon \LabelDomain \times \DValSeq \to \semiringBase$ over $\semiring$.
\end{definition}

The semantics of a TSWA $\TSWAWithInside$ on a signal $\signal$ is defined by the \emph{trace value} $\traceValue(\WTTS)$ of the \emph{weighted timed transition systems (WTTS)} $\WTTS$ of $\signal$ and $\TSWA$.
The trace value $\traceValue(\WTTS)$ depends on the cost function $\costFunc$ and implicitly on its range semiring $\semiring$ as well as the signal $\signal$ and the TSA $\TSA$.
As shown below, the state space of a WTTS $\WTTS$ is $\WTTSState = \Loc\times\clockvaluations\times[0,\duration{\signal}]\times\DValSeq$.
Intuitively, a state $(\loc,\cval,\absoluteTime,\dvalSeq) \in \WTTSState$ of $\WTTS$ consists of: the current location $\loc$; the current clock valuation $\cval$; the current absolute time $\absoluteTime$; and the observed signal value $\dvalSeq$ after the latest transition.
The transition $\WTTSTransition$ of $\WTTS$ is for a transition of $\TSA$ or time elapse.
\mw{We can remove this intuition if we need more space.}

\begin{definition}
 [weighted timed transition systems]
 \label{def:wtts}
 For a signal $\signal\in\signals$ and a TSWA $\TSWAWithInside$ over the data domain $\DDom$ and semiring $\semiring$,
 the \emph{weighted timed transition system (WTTS)} 
 $\WTTS = (\WTTSState,\InitWTTSState,\AccWTTSState,\WTTSTransition,\WTTSWeight)$
 is as follows, where $\TSAWithInside$ is a TSA over $\DDom$ and $\costFunc$ is a cost function over $\semiring$.
 \begin{itemize}
  \item $\WTTSState = \Loc\times\clockvaluations\times[0,\duration{\signal}]\times\DValSeq$
  \item $\InitWTTSState = \{(\initLoc, \zerovalue, 0, \varepsilon) \mid \initLoc \in \InitLoc\}$
  \item $\AccWTTSState = \{ (\accLoc,\cval,\duration{\signal},\varepsilon) \mid \accLoc\in\AccLoc, \cval\in\CVal\}$
  \item $\WTTSTransition \subseteq \WTTSState\times\WTTSState$ is the relation such that $\bigl( (\loc,\cval,\absoluteTime,\dvalSeq), (\loc',\cval',\absoluteTime',\overline{a'}) \bigr) \in \WTTSTransition$ if and only if either of the following holds.
        \begin{description}
         \item[(transition of $\TSA$)] $\exists (\loc,\guard,\resets,\loc')\in\Transition$ satisfying $\cval\models\guard$, $\cval'=\cval[\resets:=0]$, $\absoluteTime'=\absoluteTime$, $\overline{a'} = \varepsilon$, and $\dvalSeq\neq\varepsilon$
         \item[(time elapse)]$\exists \relativeTime\in\Rp$ satisfying $\loc = \loc'$, $\cval'=\cval + \relativeTime$, $\absoluteTime' = \absoluteTime + \relativeTime$, and $\overline{a'} = \dvalSeq \absConcat \values(\signal( [\absoluteTime,\absoluteTime + \relativeTime) ))$
        \end{description}
  \item $\WTTSWeight \bigl( (\loc,\cval,\absoluteTime,\dvalSeq), (\loc',\cval',\absoluteTime',\overline{a'}) \bigr)$ is 
        $\costFunc(\Label(\loc),\dvalSeq)$ if $\overline{a'}=\varepsilon$; and 
        $e_{\otimes}$ if $\overline{a'}\neq\varepsilon$
 \end{itemize}
\end{definition}


\begin{definition}
 [trace value]
 For a WTTS $\WTTSWithInside$, the \emph{trace value} $\traceValue(\WTTS$) is the shortest distance
 $\dist(\InitWTTSState,\AccWTTSState,\WTTSState,\WTTSTransition,\WTTSWeight)$ from $\InitWTTSState$ to $\AccWTTSState$.
\end{definition}

For a signal $\signal$ and a TSWA $\TSWA$, by $\traceValue(\signal,\TSWA)$, we denote the trace value $\traceValue(\WTTS)$ of the WTTS $\WTTS$ of $\signal$ and $\TSWA$.

\begin{example}
 \label{example:cost_functions}
 \label{example:semantics}
 By changing the semiring $\semiring$ and the cost function $\costFunc$, various semantics can be defined by the trace value. Let $\datadomain = \R$.
 For the Boolean semiring $(\{\top,\bot\},\lor,\land,\bot,\top)$ in \cref{example:semirings}, the following function $\costFunc_{b}$ is a prototypical example of a cost function, where $\dguard \in \DGuard$ and 
 $(a_1 a_2\dots a_m) \in \DValSeq$.
 \begin{align*}
 \kappa_{b}\bigl(\dguard,(a_1 a_2\dots a_m)\bigr) &= \bigwedge_{i=1}^{m}\kappa_{b}(\dguard,(a_i))\\
 \kappa_{b}\bigl(\bigwedge_{i = 1}^n (\dvar_i \bowtie_{i} \dConstant_i),(a)\bigr) &= \bigwedge_{i = 1}^n
 \kappa_{b}\bigl(\dvar_i \bowtie_{i} \dConstant_i,(a)\bigr) \quad\text{where $\bowtie_{i}\in\{>,\geq,\leq,<\}$}\\
 \kappa_{b}(\dvar \bowtie \dConstant,(a)) &= 
  \begin{cases}
   \top & \text{if $a \models \dvar \bowtie \dConstant$}\\
   \bot & \text{if $a \not\models \dvar \bowtie \dConstant$}
  \end{cases}
\end{align*}
 For the sup-inf semiring $\supInfSemiring$ in \cref{example:semirings}, 
 the trace value defined by the cost function $\costFunc_{r}$ in \cref{fig:timed_symbolic_automaton} captures the essence of the so-called space robustness~\cite{DBLP:journals/tcs/FainekosP09,DBLP:conf/formats/BakhirkinFMU17}.
 For the tropical semiring $\tropicalSemiring$ in \cref{example:semirings}, an example cost function $\costFunc_{t}$ is as follows.
 \begin{align*}
  \costFunc_{t}\bigl(\dguard,(a_1 a_2\dots a_m)\bigr) &= \sum_{i = 1}^n\costFunc_{r}(\dguard,(a_i))\\
  \costFunc_{t}\bigl(\bigwedge_{i = 1}^n (\dvar_i \bowtie_{i} \dConstant_i),(a)\bigr) &= \sum_{i=1}^n
  \costFunc_{t}(\dvar_i \bowtie_{i} \dConstant_i,(a)) \quad\text{where $\bowtie_{i}\in\{>,\geq,\leq,<\}$}\\
  \costFunc_{t}(\dvar \succ \dConstant,(a)) &= a(\dvar) - \dConstant \quad \text{where $\succ\in\{\geq,>\}$}\\
  \costFunc_{t}(\dvar \prec \dConstant,(a)) &= \dConstant - a(\dvar)\quad \text{where $\prec\in\{\leq,<\}$}
 \end{align*}
\end{example}

\begin{example}
 Let $\TSWAWithInside$ be a TSWA over $\R$ and $\semiring$, where $\TSA$ is the TSA over $\R$ in \cref{fig:timed_symbolic_automaton},
 $\signal$ be the signal
 $\signal=\{x=10\}^{2.5} \{x=40\}^{1.0} \{x=60\}^{3.0}$.
 When $\semiring=\supInfSemiring$ and $\costFunc$ is the cost function $\costFunc_r$ in \cref{example:semantics}, 
 we have $\alpha(\signal,\TSWA) = 5$.
 When $\semiring=\tropicalSemiring$ and $\costFunc$ is the cost function $\costFunc_t$ in \cref{example:semantics}, 
 we have $\alpha(\signal,\TSWA) = 35$.
 \todo{perhaps I should add some intuition}
\end{example}

\section{Quantitative timed pattern matching}
\label{sec:qtpm}

Using TSWAs, we formulate quantitative timed pattern matching as follows.

\begin{definition}
 [quantitative timed pattern matching]
 For a TSWA
 $\TSWA$ over the data domain $\DDom$ and complete semiring $\semiring$,
 and a signal $\signal\in\signals$, the \emph{quantitative matching function}
 $\mathcal{M}(\signal,\TSWA)\colon \mathit{dom}(\signal)\to \semiringBase$ is 
 $(\mathcal{M}(\signal,\TSWA))(\absoluteTime,\absoluteTime') = \traceValue\bigl(\signal\bigl([t,t)\bigr),\TSWA\bigr)$,
 where $\mathit{dom}(\signal)={\{(t,t')\mid 0\le t < t'\leq \duration{\signal}\}}$ and
 $\semiringBase$ is the underlying set of $\semiring$.
 Given a signal $\signal\in\signals$ and 
 a TSWA $\TSWA${ over {the data domain} $\DDom$ and  {complete semiring} $\semiring$},
 the \emph{quantitative timed pattern matching} problem asks for\LongVersion{ the quantitative matching function}
 $\mathcal{M}(\signal,\TSWA)$.
\end{definition}

\begin{example}
  Let
 $\TSWA$ be the TSWA shown in \cref{fig:timed_symbolic_automaton}, 
 which is defined over the reals $\R$ and the sup-inf semiring $\supInfSemiring$, and
 $\signal$ be the signal $\signal=\{x=10\}^{7.5} \{x=40\}^{10.0} \{x=60\}^{13.0}$.
 The quantitative matching function $\mathcal{M}(\signal,\TSWA)$ is as follows. 
 \cref{fig:qtpm_example} shows an illustration.
 \begin{displaymath}
  \bigl(\matchSet(\signal,\TSWA)\bigr)(t,t') =
  \begin{cases}
   5 &\text{when 
   \begin{tabular}{l}
    $t \in [0,7.5), t'\in(0,17.5], t'-t < 10$ or \\
    $t \in [0,7.5), t'\in(10,17.5], t'-t \in[10,15)$
   \end{tabular}}\\
   -25&\text{when 
   \begin{tabular}{l}
    $t \in [7.5,17.5), t'\in(7.5,27.5], t'-t < 10$ or \\
    $t \in [2.5,17.5), t'\in(17.5,27.5], t'-t \in[10,15)$
   \end{tabular}}\\
   -45&\text{when 
   \begin{tabular}{l}
    $t \in [17.5,30.5), t'\in(17.5,30.5], t'-t < 10$ or \\
    $t \in [12.5,30.5), t'\in(27.5,30.5], t'-t \in[10,15)$
   \end{tabular}}\\
  \end{cases}
 \end{displaymath}
\end{example}

Although the domain $\{(t,t')\mid0\leq t< t'\leq |\signal|)\}$ of the quantitative
matching function $\matchSet(\signal,\TSWA)$ is an infinite set, 
$\matchSet(\signal,\TSWA)$ is a piecewise-constant function with finitely many pieces.
Moreover, each piece of $\matchSet(\signal,\TSWA)$ can be represented by a special form of convex polyhedra called \emph{zones}~\cite{DBLP:conf/avmfss/Dill89}.

\begin{definition}[zone]
 For a finite set of clock variables $\Clock$, a \emph{zone} is a $|\Clock|$-dimensional convex polyhedron
 defined by a finite conjunction of the constraints of the form $c \bowtie d$ or 
 $c - c' \bowtie d$, where $c, c' \in C$, ${\bowtie} \in\{>,\geq,\leq,<\}$, and $d \in \R$.
 The set of zones over $\Clock$ is denoted by $\Zones[\Clock]$.
 By a zone $\zone\in\Zones[\Clock]$, we also represent the set 
 $\{\cval \mid \cval\models \zone\} \subseteq \CVal$ of clock valuations.
\end{definition}

\begin{theorem}
 For any TSWA $\TSWA$ over $\DDom$ and $\semiring$ and for any signal $\signal\in\signals$, 
 there is a finite set
 $\{(\zone_1,s_1),(\zone_2,s_2),\dots,(\zone_n,s_n)\} \subseteq \zoneMatch \times \semiringBase$
 such that $\zone_1,\zone_2,\dots,\zone_n$ is a partition of 
 the domain $\{(t,t') \mid 0 \leq t < t'\leq \duration{\signal}\}$, and
 for any $[t,t') \subseteq \timedomain$ satisfying $0 \leq t<t'\leq \duration{\signal}$, 
 there exists 
 $i \in\{1,2,\dots,n\}$ and $\cval \in \zone_i$ satisfying
 $\cval(\trimBeginVar) = \trimBegin$,
 $\cval(\trimEndVar) = \trimEnd$, and
 $(\matchSet(\signal,\TSWA))(\trimBegin,\trimEnd) = s_i$. \qed
\end{theorem}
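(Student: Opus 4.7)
The plan is to adapt the standard zone-graph construction of timed automata to the weighted, trimmed setting. First I would extend the clock set $\Clock$ of $\TSA$ with three fresh never-reset clocks $\trimBeginVar$, $\trimEndVar$, and $\absClock$ that record, respectively, the candidate start $t$ of the sub-signal, its candidate end $t'$, and the global absolute time. A pair $(t,t')$ is then encoded by a run that elapses $t$ time units before the first TSA transition and reaches an accepting location exactly when $\absClock = \trimEndVar = t'$; the value $\traceValue\bigl(\signal([t,t')),\TSWA\bigr)$ coincides with the shortest distance in this augmented WTTS restricted to valuations consistent with the pair $(t,t')$.

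Since $\signal$ is piecewise-constant, there are finitely many breakpoints $0=t_0<t_1<\cdots<t_m=\duration{\signal}$ at which $\values(\signal)$ may change. Between two consecutive breakpoints the input $\dvalSeq$ to the cost function $\costFunc$ is determined by the discrete index of the current signal piece, not by the exact elapsed time. I would therefore construct a \emph{symbolic} WTTS whose states are triples $(\loc,\zone,i)$ with $\loc\in\Loc$, $\zone\in\Zones[\Clock\cup\{\trimBeginVar,\trimEndVar,\absClock\}]$, and $i\in\{0,\dots,m\}$ the piece index. Symbolic successors arise from (i) time elapse bounded by the next breakpoint, (ii) TSA transitions with the usual zone operations (intersection with the guard, reset, future-projection), and (iii) breakpoint crossings $i\mapsto i+1$; each operation preserves zonehood.

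Taking $k$ as the maximum of the constants appearing in the guards of $\TSA$ and of the breakpoints $t_0,\dots,t_m$, the standard $k$-normalisation argument bounds the reachable symbolic state space, so it is finite. Along every symbolic path from a symbolic initial state to a symbolic accepting one, the weights $\costFunc(\Label(\loc),\dvalSeq)$ depend only on $\Label(\loc)$ and the finite discrete history of visited pieces, hence are fixed semiring elements; distributivity and completeness of $\semiring$ then decompose the shortest distance as a finite $\bigoplus$ of such constants along symbolic paths. Projecting each reachable symbolic accepting zone onto the two coordinates $(\trimBeginVar,\trimEndVar)$, and splitting overlaps into a disjoint refinement, yields finitely many pairwise disjoint zones $\zone_1,\dots,\zone_n \in \zoneMatch$ with associated constants $s_1,\dots,s_n$.

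The main obstacle is reconciling the continuous time elapse in the WTTS of \cref{def:wtts} with the piecewise-constant structure of $\signal$ so as to preserve the shortest distance while keeping the symbolic graph finite. Concretely, one must show that two continuous states $(\loc,\cval,\absoluteTime,\dvalSeq)$ and $(\loc,\cval',\absoluteTime',\dvalSeq)$ whose clock valuations lie in the same zone and share the same piece index produce the same shortest distance to the accepting set — a zone-bisimulation argument relying on $\costFunc$ being insensitive to the internal timing within a single piece. Once this correspondence is established, the required partition and weights $s_i$ can be read directly off the finite symbolic graph.
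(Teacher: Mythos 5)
Your overall route---augment $\TSA$ with extra clocks so that the pair $(t,t')$ can be read off the clock valuation at acceptance time, build a zone-based symbolic transition system whose reachable part is finite, and project the accepting zones onto $(\trimBeginVar,\trimEndVar)$---is the same as the one the paper relies on (the matching automaton of \cref{section:qtpm_algorithm} together with the WSTTS of \cref{def:weighted_zone_graph} and \cref{theorem:finiteness}). Two of your steps, however, do not go through as stated. First, the appeal to ``the standard $k$-normalisation argument'' is the wrong tool. Extrapolation with respect to a maximal constant preserves \emph{reachability} but coarsens the zones themselves; here the zones are part of the output, and both the weight of a discrete transition (via the observed block $\dvalSeq$, which is pinned down by the exact constraints on $\absClock$ relative to the breakpoints) and the projection onto $(\trimBeginVar,\trimEndVar)$ would be falsified by normalising. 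Moreover the breakpoints $\sum_{j\leq i}\tau_j$ are arbitrary reals, so the usual count of zones with integer constants bounded by $k$ does not apply. What is actually needed---and what the paper proves in \cref{lemma:zone_bounded,lemma:zone_discrete}---is an induction showing that every difference bound of a reachable zone is a $\Z$-combination of the breakpoints and guard constants and is bounded by $\duration{\signal}$, so that only finitely many \emph{exact} zones occur; no normalisation is performed.

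Second, ``distributivity and completeness'' of $\semiring$ do not suffice to collapse the shortest distance over the uncountably many concrete runs of the WTTS of \cref{def:wtts} into a finite $\semiringPlus$-sum over symbolic paths: one symbolic path corresponds to a continuum of concrete runs of equal value, and summing a value continuum-many times equals summing it once only when $\semiringPlus$ is idempotent. The paper's \cref{corollary:trace_value_correctness} assumes idempotence for exactly this reason, and the ``zone-bisimulation'' obstacle you flag is discharged there by simulation lemmas in both directions plus antisymmetry of the canonical order of an idempotent semiring. Two smaller repairs: a never-reset clock cannot ``record'' the start time $t$ (it would keep growing); the paper instead resets a fresh clock $T'$ when the match begins and recovers $t$ as the clock difference $\absClock - T'$, which is still zone-definable, so the image in $\zoneMatch$ is a zone. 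And your symbolic state $(\loc,\zone,i)$ carries only the current piece index, which does not determine the block $\dvalSeq$ accumulated since the last discrete transition, hence does not determine the next transition weight; the paper keeps $\dvalSeq$ itself in the state and shows (\cref{lemma:dvalSeq_substring}) that only finitely many such blocks arise.
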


\section{Trace value computation by shortest distance}
\label{section:trace_value_algorithm}

We present an algorithm to compute the trace values $\traceValue(\WTTS)$.
Since a WTTS possibly has infinitely many states and transitions (see \cref{def:wtts}),
we need a finite abstraction of it. 
We use zone-based abstraction for what we call \emph{weighted symbolic timed transition systems (WSTTSs)}.
In addition to the clock variables in the TSA,
we introduce a fresh clock variable $\absClock$ to represent the absolute time.

\begin{definition}
 [weighted symbolic timed transition system]
 \label{def:weighted_zone_graph}
 For a TSWA $\TSWAWithInside$ over\LongVersion{ the data domain} $\DDom$ and\LongVersion{ complete semiring} $\semiring$, and a signal $\signalWithInside\in\signals$, where $\TSAWithInside$,
 the \emph{weighted symbolic timed transition system} (WSTTS) is \LongVersion{the 5-tuple} $\WSTTSWithInside$ defined as follows.
 \begin{itemize}
  \item $\WSTTSState = \{(\loc,\zone,\dvalSeq) \in \Loc\times\ZonesWithAbs\times\DValSeq\mid \zone\neq\emptyset,\forall \cval\in \zone.\, \cval(\absClock) \leq \duration{\signal}, \dvalSeq = \varepsilon \text{ or }\dvalSeq \absConcat \signal(\cval(\absClock)) = \dvalSeq \}$
  \item $\WSTTSInitState = \{ (\initLoc, \{\zerovalue[\ClockWithAbs]\}, \varepsilon) \mid \initLoc\in\InitLoc\}$
  \item $\WSTTSAccState = \{ (\accLoc, \zone, \varepsilon)\mid \accLoc\in\AccLoc, \exists \cval\in\zone.\, \cval(\absClock) = \duration{\signal}\}$
  \item $\WSTTSTransition \subseteq \WSTTSState\times\WSTTSState$ is the relation such that $\bigl( (\loc,\zone,\dvalSeq), (\loc',\zone', \dvalSeq[']) \bigr) \in \WSTTSTransition$ if and only if one of the following holds.
        \begin{description}
         \item[(transition of $\TSA$)] there exists $(\loc,\guard,\resets,\loc')\in\Transition$, satisfying $\zone'= \{ \cval[\resets:=0] \mid \cval\in\zone, \cval\models\guard \}$, $\dvalSeq\neq\varepsilon$, and $\dvalSeq['] = \varepsilon$.
         \item[(punctual time elapse)] $\loc=\loc'$, $\dvalSeq['] = \dvalSeq\absConcat\values(\signal([\tilde{\cval}(\absClock),\tilde{\cval}'(\absClock))))$, and there is $i\in\{1,2,\dots,n\}$ satisfying
               $\zone' = \{\cval + \relativeTime \mid \cval \in \zone, \relativeTime\in\Rp\} \cap M_{i,=}$,
               where $\tilde{\cval}\in\zone,\tilde{\cval}'\in\zone'$\footnote{The choice of $\tilde{\cval}$ and $\tilde{\cval}'$ does not change $\signal(\tilde{\cval}(T))$ and $\signal(\tilde{\cval}'(T))$ due to the definition of $\WSTTSState$.}, 
               $M_{i,=} = \{ \cval \mid \cval(\absClock) = \sum_{j=0}^{i} \tau_j\}$.
         \item[(non-punctual time elapse)] $\loc=\loc'$, $\dvalSeq['] = \dvalSeq\absConcat\values(\signal([\tilde{\cval}(\absClock),\tilde{\cval}'(\absClock))))$, and there is $i\in\{1,2,\dots,n\}$ satisfying
               $\zone' = \{\cval + \relativeTime \mid \cval \in \zone, \relativeTime\in\Rp\} \cap M_{i}$,
               where $\tilde{\cval}\in\zone,\tilde{\cval}'\in\zone'$, and
               $M_{i} = \{ \cval \mid \sum_{j=0}^{i-1} \tau_j < \cval(\absClock) < \sum_{j=0}^{i} \tau_j\}$.
        \end{description}
  \item $\WSTTSWeight \bigl( (\loc,\zone,\dvalSeq), (\loc',\zone',\dvalSeq[']) \bigr)$ is 
        $\costFunc(\Label(\loc),\dvalSeq)$ if $\dvalSeq[']=\varepsilon$; and 
        $e_{\otimes}$ if $\dvalSeq[']\neq\varepsilon$
 \end{itemize}
\end{definition}

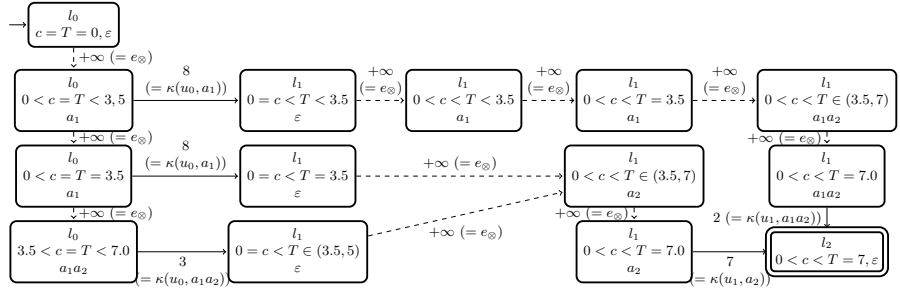
\begin{figure}[tbp]
\scalebox{0.59}{
 \begin{tikzpicture}[shorten >=1pt,node distance=5.0cm,on grid,auto] 
  \regioninitialstate {begin} {$\loc_0$}{$c=T=0, \varepsilon$};
  \regionstatenew[below=of begin] {node distance=1.7cm}{q_00} {$\loc_0$}{$0 < c=T <3,5$\\$a_1$};
  \regionstate[right=of q_00] {q_1f0} {$\loc_1$}{$0=c<T<3.5$\\$\varepsilon$};
  \regionstatenew[below=of q_1f0] {node distance=1.7cm}{q_1f1} {$\loc_1$}{$0=c<T=3.5$\\$\varepsilon$};
  \regionstatenew[right=of q_1f0] {node distance=7.5cm}{q_11} {$\loc_1$}{$0<c<T=3.5$\\$a_1$};
  \regionstatenew[left=of q_11] {node distance=3.8cm}{q_10} {$\loc_1$}{$0<c<T <3.5$\\$a_1$};
  \regionstatenew[below=of q_1f1] {node distance=1.7cm}{q_1f2} {$\loc_1$}{$0=c<T\in(3.5,5)$\\$\varepsilon$};
  \regionstatenew[below=of q_11] {node distance=1.7cm}{q_12} {$\loc_1$}{$0<c<T\in(3.5,7)$\\$a_2$};

  \regionstatenew[below=of q_12] {node distance=1.7cm}{q_13} {$\loc_1$}{$0<c<T=7.0$\\$a_2$};
  \regionstatenew[left=of q_1f1] {}{q_01} {$\loc_0$}{$0<c=T=3.5$\\$a_1$};
  \regionstatenew[left=of q_1f2] {}{q_02} {$\loc_0$}{$3.5<c=T<7.0$\\$a_1a_2$};

  \regionstatenew[right=of q_11] {node distance=4.3cm}{q_1t2} {$\loc_1$}{$0<c<T\in(3.5,7)$\\$a_1a_2$};
  \regionstatenew[right=of q_12] {node distance=4.3cm}{q_1t3} {$\loc_1$}{$0<c<T=7.0$\\$a_1a_2$};

  \regionstatenew[below=of q_1t3] {accepting,node distance=1.7cm}{q_23} {$\loc_2$}{$0<c<T=7,\varepsilon$};

  \path[->]
  (begin)[bend right=0,dashed] edge [right,pos=0.45] node {$+\infty\ (= e_{\otimes})$} (q_00)
  (q_00)[bend right=0,dashed] edge [right,pos=0.45] node {$+\infty\ (= e_{\otimes})$} (q_01)
  (q_01)[bend right=0,dashed] edge [right,pos=0.45] node {$+\infty\ (= e_{\otimes})$} (q_02)
  (q_1f0)[bend right=0,dashed] edge [above] node[align=center] {$+\infty$\\$(= e_{\otimes})$} (q_10)
  (q_1f1)[bend right=0,dashed] edge [above] node {$+\infty\ (= e_{\otimes})$} (q_12)
  (q_1f2)[bend right=0,dashed] edge [below] node[below=0.15cm] {$+\infty\ (= e_{\otimes})$} (q_12)
  (q_10)[bend right=0,dashed] edge [above] node[align=center] {$+\infty$\\$(= e_{\otimes})$} (q_11)
  (q_12)[bend right=0,dashed] edge [left] node {$+\infty\ (= e_{\otimes})$} (q_13)
  (q_11)[bend right=0,dashed] edge [above] node[align=center] {$+\infty$\\ $(= e_{\otimes})$} (q_1t2)
  (q_1t2)[bend right=0,dashed] edge [left] node {$+\infty\ (= e_{\otimes})$} (q_1t3);

  \path[->] 
  (q_00) edge [above] node[align=center] {$8$\\ $(=\costFunc(u_0,a_1))$} (q_1f0)
  (q_01) edge [above] node[align=center] {$8$\\ $(=\costFunc(u_0,a_1))$} (q_1f1)
  (q_02) edge [below] node[align=center] {$3$\\ $(=\costFunc(u_0,a_1a_2))$} (q_1f2)
  (q_1t3)[bend right=0] edge [left] node {$2\ (=\kappa(u_1,a_1a_2))$} (q_23)
  (q_13)[bend right=0] edge [below] node[align=center] {$7$\\ $(=\costFunc(u_1,a_2))$} (q_23);
 \end{tikzpicture}}
 \caption{WSTTS $\WSTTS$ of the TSWA $\TSWA$ in \cref{fig:timed_symbolic_automaton} and the signal $\signal = a_1^{3.5} a_2^{3.5}$, where $u_0=x<15$, $u_1=x>5$, $a_1=\{x=7\}$, and $a_2=\{x=12\}$. The states unreachable from the initial state or unreachable to the accepting state are omitted. The transition for time elapse which can be represented by the composition of other transitions are also omitted. A dashed transition is for the time elapse and a solid transition is for a transition of $\TSA$.}
 \label{fig:step_zone_graph}
\end{figure}
Although the state space $\WSTTSState$ of the WSTTS $\WSTTS$ may be infinite, 
there are only finitely many states reachable from $\WSTTSInitState$ and therefore,
we can construct the reachable part of $\WSTTS$.
See \cref{subsec:finiteness_proof} for the proof.
An example of a WSTTS is shown in \cref{fig:step_zone_graph}.
For a WSTTS $\WSTTS$, we define the \emph{symbolic trace value} $\symbolicTraceValue(\WSTTS)$ as 
the shortest distance
$\dist(\WSTTSInitState,\WSTTSAccState,\WSTTSState,\WSTTSTransition,\WSTTSWeight)$ from $\WSTTSInitState$ to $\WSTTSAccState$.

\begin{theorem}
 \label{corollary:trace_value_correctness}
 Let $\TSWA$ be a TSWA over $\DDom$ and $\semiring$, and
 $\signal\in\signals$ be a signal.
 Let $\WTTS$ and $\WSTTS$ be the WTTS (in \cref{def:wtts}) and WSTTS of $\TSWA$ and $\signal$, respectively.
 If $\semiring$ is idempotent,
 we have $\traceValue(\WTTS) = \symbolicTraceValue(\WSTTS)$.
 \qed
\end{theorem}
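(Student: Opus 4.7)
The plan is to establish a weight-preserving correspondence between paths in the WTTS $\WTTS$ and paths in the WSTTS $\WSTTS$, and then to exploit idempotency of $\semiring$ to collapse the two shortest-distance sums onto each other.

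First I would introduce a simulation relation $R \subseteq \WTTSState \times \WSTTSState$ by declaring $\bigl((\loc,\cval,\absoluteTime,\dvalSeq),(\loc',\zone,\dvalSeq['])\bigr) \in R$ iff $\loc = \loc'$, $\dvalSeq = \dvalSeq[']$, and the extended clock valuation $\cval[\absClock := \absoluteTime]$ lies in $\zone$. By case analysis on the kind of WTTS transition I would then verify two properties:
\begin{itemize}
\item (forward simulation) for every $q \WTTSTransition q'$ in $\WTTS$ and every $q^{\mathrm{sym}}$ with $(q,q^{\mathrm{sym}}) \in R$, there is a unique $q'^{\mathrm{sym}}$ with $q^{\mathrm{sym}} \WSTTSTransition q'^{\mathrm{sym}}$, $(q',q'^{\mathrm{sym}}) \in R$, and $\WTTSWeight(q,q') = \WSTTSWeight(q^{\mathrm{sym}},q'^{\mathrm{sym}})$;
\item (backward simulation) for every $q^{\mathrm{sym}} \WSTTSTransition q'^{\mathrm{sym}}$ and every $q$ with $(q,q^{\mathrm{sym}}) \in R$, there exists a concrete transition $q \WTTSTransition q'$ with $(q',q'^{\mathrm{sym}}) \in R$.
\end{itemize}
The partition of time elapses into the punctual case $M_{i,=}$ and non-punctual case $M_i$ in \cref{def:weighted_zone_graph} is precisely what makes $\values(\signal([\tilde\cval(\absClock),\tilde\cval'(\absClock))))$ a well-defined function of the target zone; this is what forces the $\dvalSeq$-components on the two sides to coincide and what guarantees matching weights (time-elapse transitions carry $\semiringTimesUnit$ on both sides, while discrete transitions carry $\costFunc(\Label(\loc),\dvalSeq)$, which depends only on the data already identified by $R$).

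Iterating forward simulation, every accepting path $\WTTSPath \in \APath(\WTTS)$ projects to a unique accepting path $\Phi(\WTTSPath) \in \APath(\WSTTS)$ with $\bigotimes_i \WTTSWeight = \bigotimes_i \WSTTSWeight$ along the path. Iterating backward simulation, every $\WSTTSPath \in \APath(\WSTTS)$ admits at least one preimage under $\Phi$. Moreover, any two WTTS paths $\WTTSPath_1,\WTTSPath_2$ with $\Phi(\WTTSPath_1) = \Phi(\WTTSPath_2)$ carry the same weight, because transition weights depend only on the pair $(\loc,\dvalSeq)$ and these are identified along the shared symbolic path.

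Finally, partitioning $\APath(\WTTS)$ by $\Phi$ and applying completeness of $\semiring$:
\begin{displaymath}
\traceValue(\WTTS) \;=\; \bigoplus_{\WTTSPath \in \APath(\WTTS)} \WTTSWeight(\WTTSPath) \;=\; \bigoplus_{\WSTTSPath \in \APath(\WSTTS)} \Bigl( \bigoplus_{\WTTSPath \in \Phi^{-1}(\WSTTSPath)} \WTTSWeight(\WTTSPath) \Bigr).
\end{displaymath}
Because every weight inside the inner sum equals $\WSTTSWeight(\WSTTSPath)$ and $\Phi^{-1}(\WSTTSPath)$ is non-empty, idempotency of $\semiringPlus$ collapses each inner sum to $\WSTTSWeight(\WSTTSPath)$, giving $\traceValue(\WTTS) = \symbolicTraceValue(\WSTTS)$. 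The main obstacle I expect is the clean verification of backward simulation for time elapse: given a concrete state whose extended valuation lies in the source zone, I must exhibit a concrete duration $\relativeTime \in \Rp$ placing $(\cval+\relativeTime)[\absClock := \absoluteTime+\relativeTime]$ inside $\{\cval+\relativeTime\} \cap M_i$ (or $M_{i,=}$); this uses convexity of zones and closedness of zones under future and intersection with the strips $M_i, M_{i,=}$.
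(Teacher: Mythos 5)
Your overall architecture --- a weight-preserving correspondence between accepting paths of $\WTTS$ and of $\WSTTS$ in both directions, followed by an idempotency collapse --- is the same as the paper's. (The paper phrases the collapse via the canonical order $\semiringElem\preceq\semiringElem' \iff \semiringElem\semiringPlus\semiringElem'=\semiringElem'$ and two inequalities rather than your fiber partition; both are fine given completeness.) Your forward half is essentially correct and in fact packages into one simulation relation what the paper splits into a singleton-zone step lemma plus a separate zone-monotonicity (``widening'') lemma.

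The genuine gap is in your backward simulation, and it is not the obstacle you flagged. As stated --- for every symbolic transition $q^{\mathrm{sym}} \WSTTSTransition q'^{\mathrm{sym}}$ and \emph{every} concrete $q$ with $(q,q^{\mathrm{sym}})\in R$ there is a concrete $q \WTTSTransitionRel q'$ with $(q',q'^{\mathrm{sym}})\in R$ --- the property is false for discrete transitions: the symbolic successor zone is $\zone'=\{\cval[\resets:=0]\mid \cval\in\zone,\ \cval\models\guard\}$, which only collects the guard-satisfying part of $\zone$, so a concrete state whose valuation lies in $\zone$ but violates $\guard$ admits no matching concrete step (a similar failure can occur for time elapse when the valuation already lies beyond the target strip $M_i$). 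Hence your argument for surjectivity of $\Phi$ does not go through, and surjectivity is exactly what the final step needs: if some fiber $\Phi^{-1}(\WSTTSPath)$ were empty, its inner sum would be $\semiringPlusUnit$ rather than $\symbolicPathValue(\WSTTSPath)$, and the equality would fail. The repair is to reverse the quantifier: show that for every symbolic transition and every valuation $\cval'$ in the \emph{target} zone there exists a valuation $\cval$ in the source zone and a concrete transition between the corresponding concrete states (for the discrete case every $\cval'\in\zone'$ has, by construction of $\zone'$, a preimage $\cval\in\zone$ with $\cval\models\guard$), and then build the concrete preimage of a symbolic accepting path by induction \emph{from the accepting end backwards}, starting from a valuation with $\cval(\absClock)=\duration{\signal}$. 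This is precisely how the paper organizes its WSTTS-to-WTTS lemmas. A smaller point: your claimed uniqueness of the symbolic successor in the forward direction can fail when two distinct elements of $\Transition$ realize the same concrete step, so $\Phi$ must be defined by a choice; this is harmless for the argument.
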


Because of \cref{corollary:trace_value_correctness}, we can compute \LongVersion{the trace value} $\traceValue(\WTTS)$ by
\begin{ienumeration}
 \item constructing the reachable part of $\WSTTS$; and
 \item computing the symbolic trace value $\symbolicTraceValue(\WSTTS)$ using an algorithm for the shortest distance problem.
\end{ienumeration}
For example, the symbolic trace value of the WSTTS in \cref{fig:step_zone_graph} is
$\symbolicTraceValue(\WSTTS) = \max\{\min\{8,2\},\min\{8,7\},\min\{3,7\}\} = 7$.
However, this method requires the whole signal to compute the trace value, and it does not suit for the use in online quantitative timed pattern matching.
Instead, we define the \emph{intermediate weight} $\incrementalWeight_i$ and give an incremental algorithm to compute \LongVersion{the trace value} $\traceValue(\WTTS)$.
\LongVersion{Intuitively, for each state $(\loc,\zone,\dvalSeq) \in \WSTTSState$ of the WSTTS $\WSTTS$,
the \emph{intermediate weight} $\incrementalWeight_i$ assign the shortest distance to reach $(\loc,\zone,\dvalSeq)$
by reading the sub-signal $\signalInside[i]$ of $\signalWithInside$.}

\begin{definition}
 [$\incrementFunc$, $\incrementalWeight_i$]
 \label{def:incremental_weight}
 For
 a TSWA $\TSWAWithInside$ over the data domain $\DDom$ and complete semiring $\semiring$, 
 $\signalState \in \datavaluations$, and
 $\absoluteTime \in \Rp$, 
 the \emph{increment function} 
 \[
 \incrementFunc(\signalState,\absoluteTime)\colon \powerset{\Loc\times\ZonesWithAbs\times\DValSeq\times\semiringBase} \to \powerset{\Loc\times\ZonesWithAbs\times\DValSeq\times\semiringBase}  
 \]
 is as follows, 
 where
 $\TSAWithInside$ and
$(\WSTTSState_{\signalState,\absoluteTime}, \WSTTSInitState[\signalState,\absoluteTime,],\WSTTSAccState[\signalState,\absoluteTime,],\WSTTSTransition_{\signalState,\absoluteTime}, \WSTTSWeight_{\signalState,\absoluteTime})$ is the WSTTS of $\signalState^{\absoluteTime}$ and $\TSWA$.
 \begin{align*}
  \incrementFunc(a,\absoluteTime)(\weights) = 
  \{(&\loc',\zone',\dvalSeq['],\semiringElem') \in \Loc\times\ZonesWithAbs\times\DValSeq\times\semiringBase \mid
  \forall \nu'\in\zone'.\, \nu'(\absClock)= \absoluteTime,\\
 \semiringElem' &=
  \bigoplus_{(\loc,\zone,\dvalSeq,\semiringElem)\in w}
  \semiringElem \semiringTimes \dist(\{(\loc,\zone,\dvalSeq)\},\{(\loc',\zone',\dvalSeq['])\},\WSTTSState_{\signalState,\absoluteTime},\WSTTSTransition_{\signalState,\absoluteTime},\WSTTSWeight_{\signalState,\absoluteTime})
  \}
 \end{align*}
 For a TSWA $\TSWA$ over $\DDom$ and $\semiring$, a signal $\signalWithInside$, and $i\in\{1,2,\dots,n\}$, 
 the \emph{intermediate weight} $\incrementalWeight_i$ is defined as follows, where 
 $T_j = \sum_{k=1}^{j}\tau_k$.
 \begin{displaymath}
  \incrementalWeight_i = \bigl(\incrementFunc\bigl(a_i, T_i \bigr) \circ \dots \circ \incrementFunc\bigl(a_{1}, T_{1} \bigr)\bigr) (\{(\initLoc,\{\zerovalue[\ClockWithAbs]\},\varepsilon,\semiringTimesUnit)\mid \initLoc \in\InitLoc\})
 \end{displaymath}
\end{definition}

\begin{algorithm}[t]
 \caption{Incremental algorithm for trace value computation}
 \label{alg:incremental_trace_value}
 \scalebox{0.9}{
 \parbox{1.1\linewidth}{
 \begin{algorithmic}[1]
  \Require A WSTTS $\WSTTSWithInside$ of $\signalWithInside$ and $\TSWA$
  \Ensure $R$ is the symbolic trace value $\symbolicTraceValue(\WSTTS)$
  \State $\incrementalWeight \gets \{(\initLoc,\{\zerovalue[\ClockWithAbs]\},\varepsilon,\semiringTimesUnit)\mid \initLoc \in\InitLoc\}$;\,$R \gets \semiringPlusUnit$
  \Comment{initialize}
  \For {$i \in \{1,2,\dots,n\}$}
  \State $\incrementalWeight \gets \incrementFunc(a_i,T_i)$, where $T_i = \sum_{k=1}^{i}\tau_k$
  \Comment{We have $\incrementalWeight = \incrementalWeight_i$.}
  \EndFor
  \For {$(\loc,\zone,\dvalSeq,\semiringElem)\in\incrementalWeight$}
  \If {$(\loc,\zone,\dvalSeq) \in \WSTTSAccState$}
  \State $R \gets R \semiringPlus \semiringElem$
  \EndIf
  \EndFor
 \end{algorithmic}
 }}
\end{algorithm}
Because of the following\LongVersion{ theorem}, we can incrementally compute the symbolic trace value $\symbolicTraceValue(\WSTTS)$, which is equal to the trace value $\traceValue(\signal,\TSWA)$, by \cref{alg:incremental_trace_value}.

\begin{theorem}
 \label{theorem:incremental_correctness}
 For any WSTTS $\WSTTS$ of a signal $\signalWithInside$ and a TSWA $\TSWA$\LongVersion{ over $\DDom$ and $\semiring$},
\ShortVersion{we have 
\begin{math}
 \symbolicTraceValue(\WSTTS) = 
  \bigoplus_{(\loc,\zone,\dvalSeq) \in \WSTTSAccState}
  \bigoplus_{(\loc,\zone,\dvalSeq,\semiringElem)\in\incrementalWeight_n}
\semiringElem
\end{math}, where $\WSTTSAccState$ is the accepting states of $\WSTTS$.}
\LongVersion{we have the following, where $\WSTTSAccState$ is the accepting states of $\WSTTS$.
 \begin{displaymath}
  \symbolicTraceValue(\WSTTS) = 
  \bigoplus_{(\loc,\zone,\dvalSeq) \in \WSTTSAccState}
  \bigoplus_{(\loc,\zone,\dvalSeq,\semiringElem)\in\incrementalWeight_n}
  \semiringElem
 \end{displaymath}}
\qed
\end{theorem}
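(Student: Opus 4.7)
The plan is to prove Theorem~\ref{theorem:incremental_correctness} by establishing a stronger invariant on $\incrementalWeight_i$ via induction on $i$, then specializing to accepting states. Precisely, I would prove the following claim: for every $i \in \{0,1,\dots,n\}$ and every $(\loc,\zone,\dvalSeq) \in \WSTTSState$ with $\forall \cval \in \zone.\, \cval(\absClock) = T_i$ (taking $T_0=0$), the unique $\semiringElem$ such that $(\loc,\zone,\dvalSeq,\semiringElem) \in \incrementalWeight_i$ equals the shortest distance $\dist(\WSTTSInitState, \{(\loc,\zone,\dvalSeq)\}, \WSTTSState, \WSTTSTransition, \WSTTSWeight)$ in the full WSTTS $\WSTTS$. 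Given this invariant, the theorem follows immediately by observing that every accepting state has $\cval(\absClock) = \duration{\signal} = T_n$, and summing over all accepting states yields the definition of $\symbolicTraceValue(\WSTTS)$.

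For the base case $i=0$, the initialization sets $\incrementalWeight_0 = \{(\initLoc,\{\zerovalue[\ClockWithAbs]\},\varepsilon,\semiringTimesUnit)\mid \initLoc \in\InitLoc\}$, and these are exactly the initial states of $\WSTTS$ with distance $\semiringTimesUnit$ (the unit of $\semiringTimes$), which matches the empty path. For the inductive step, assuming the claim for $i-1$, I would show that applying $\incrementFunc(a_i, T_i)$ preserves it. The key observation is that any path in $\WSTTS$ from an initial state to a state at time $T_i$ must pass through some state at time $T_{i-1}$ (since time elapses monotonically through the zone partition), so paths decompose uniquely as a prefix reaching $(\loc'',\zone'',\dvalSeq[''])$ at time $T_{i-1}$ followed by a suffix reaching $(\loc',\zone',\dvalSeq['])$ at time $T_i$. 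The suffix lives entirely within the sub-WSTTS of $a_i^{\tau_i}$ used by $\incrementFunc$. Distributivity and the infinite distributivity axiom of a complete semiring then give
\begin{align*}
\dist(\WSTTSInitState, \{(\loc',\zone',\dvalSeq['])\}, \dots) &= \bigoplus_{(\loc'',\zone'',\dvalSeq[''])} \dist(\WSTTSInitState, \{(\loc'',\zone'',\dvalSeq[''])\}, \dots)\\
&\qquad \semiringTimes\; \dist(\{(\loc'',\zone'',\dvalSeq[''])\}, \{(\loc',\zone',\dvalSeq['])\}, \WSTTSState_{a_i,T_i}, \dots),
\end{align*}
which is exactly the value computed by $\incrementFunc(a_i,T_i)$ applied to $\incrementalWeight_{i-1}$.

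The hard part is the path decomposition step. I have to argue rigorously that the sub-WSTTS of $a_i^{\tau_i}$ (translated so that its initial time is $T_{i-1}$ rather than $0$) embeds into $\WSTTS$ in such a way that every path segment of $\WSTTS$ between times $T_{i-1}$ and $T_i$ corresponds bijectively to a path in the sub-WSTTS with the same weight. This relies on two facts: (i)~the zones appearing in $\WSTTS$ respect the time partition induced by the signal changes at times $T_1, \dots, T_n$, which holds by the punctual/non-punctual time elapse cases in \cref{def:weighted_zone_graph}; and (ii)~the cost function $\costFunc$ depends only on the accumulated data values $\dvalSeq$, so splitting at $T_{i-1}$ (where $\dvalSeq = \varepsilon$ on both sides of the boundary, by construction of $\WSTTSState$) preserves the weight of each transition. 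Idempotence of $\semiring$ is needed because a single concrete path may be represented by multiple symbolic paths through overlapping zones after decomposition, so the symbolic shortest distance matches the true trace value (this is already granted by \cref{corollary:trace_value_correctness}, but is implicitly reused to justify that overcounting symbolic paths is harmless under $\semiringPlus$).

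Once the invariant is established, the theorem statement follows by reading off, from $\incrementalWeight_n$, the weights associated to accepting states $(\loc,\zone,\dvalSeq) \in \WSTTSAccState$, applying $\semiringPlus$ over both the choice of accepting state and the weights stored in $\incrementalWeight_n$, and recognizing the resulting sum as $\dist(\WSTTSInitState, \WSTTSAccState, \WSTTSState, \WSTTSTransition, \WSTTSWeight) = \symbolicTraceValue(\WSTTS)$.
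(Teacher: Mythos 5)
Your proposal follows essentially the same route as the paper: an induction on $i$ showing that $\incrementalWeight_i$ stores, for each state at absolute time $T_i$, the shortest distance from $\WSTTSInitState$ in the full WSTTS (the paper's Lemma~\ref{lemma:incremental_and_non_incremental}), justified by a path-decomposition lemma at the boundary times $T_1,\dots,T_n$ and semiring distributivity, and then specialized to the accepting states at time $T_n=\duration{\signal}$. Your explicit remark about overcounting after inserting intermediate punctual states is a reasonable extra precaution, but it does not change the argument's structure.
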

\todo{put an example here?}

\section{Online algorithm for quantitative timed pattern matching}
\label{section:qtpm_algorithm}

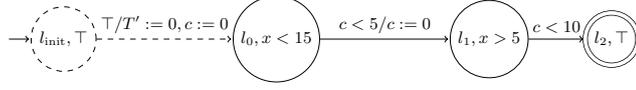
\begin{figure}[tbp]
 \centering
\scalebox{0.70}{
 \begin{tikzpicture}[shorten >=1pt,node distance=4.0cm,on grid,auto] 
 \node[state,initial,dashed] (q_init) {$\loc_{\mathrm{init}},\top$};
 \node[state] (q_0)[right=of q_init] {$\loc_0,x < 15$};
 \node[state] (q_1)[right=of q_0] {$\loc_1,x > 5$};
 \node[state,accepting,node distance=2.3cm] (q_2)[right=of q_1] {$\loc_2,\top$};

 \path[->] 
 (q_init) edge [above,dashed] node {$\top/T':=0,c:=0$} (q_0)
 (q_0) edge [above] node {$c < 5 /c:=0$} (q_1)
 (q_1) edge [above] node {$c < 10$} (q_2);
 \end{tikzpicture}}
 \caption{Matching automaton $\matching{\TSA}$ for the TSA $\TSA$ shown in \cref{fig:running_example_tsa}. The fresh initial location $\loc_{\mathrm{init}}$ and the transition to the original initial location $\loc_0$ are added.}
 \label{fig:example_augumented_tsa}
\end{figure}

In quantitative timed pattern matching, we compute the trace value $\traceValue(\trimSignal,\TSWA)$ for each sub-signal $\trimSignal$.
In order to try matching for each sub-signal $\trimSignal$, we construct the \emph{matching automaton}~\cite{DBLP:conf/formats/BakhirkinFNMA18} $\matching{\TSA}$ from the TSA $\TSA$.
\LongVersion{The matching automaton $\matching{\TSA}$ is constructed by adding a new clock variable $T'$ and a new initial state $\init{\loc}$ to the TSA $\TSA$.
The new clock variable $T'$ represents the duration from the beginning $\trimBegin$ of the sub-signal $\trimSignal$.
The new state $\init{\loc}$ is used to start the sub-signal in the middle of the signal.
We add transitions from $\init{\loc}$ to each initial state $\initLoc$ of $\TSA$, resetting all of the clock variables.}
\cref{fig:example_augumented_tsa} shows an example of $\matching{\TSA}$.
We also define the auxiliary $\partIncrementFunc$ for our online algorithm for quantitative timed pattern matching.

\begin{definition}
 [matching automaton~\cite{DBLP:conf/formats/BakhirkinFNMA18} $\matching{\TSA}$]
 For a TSA $\TSAWithInside$ over $\DDom$, the \emph{matching automaton} is the TSA
 $\matching{\TSA} = (\datavariables, \Loc \disjUnion \{\init{\loc}\},\{\init{\loc}\},\AccLoc,\Clock \disjUnion\{T'\},\Transition',\Label')$ over $\DDom$, 
 where \LongVersion{the transition is}
 $\Transition'=\Transition \disjUnion \{(\init{\loc},\top,\Clock\amalg\{T'\},\initLoc)\mid \initLoc \in \InitLoc\}$, 
 \LongVersion{the labeling function is}
 $\Label'(\init{\loc}) = \top$, and
 $\Label'(\loc) = \Label(\loc)$ for $\loc\in \Loc$.
\end{definition}

\begin{algorithm}[t]
 \caption{Online algorithm for quantitative timed pattern matching}
 \label{alg:qtpm}
 \scalebox{0.9}{
 \parbox{1.1\linewidth}{
 \begin{algorithmic}[1]
  \Require A signal $\signalWithInside$ and a TSWA $\TSWAWithInside$
  \Ensure $M$ is the quantitative matching function $\mathcal{M}(\signal,\TSWA)$.
  \State $\matching{\TSA} \gets$ the matching automaton of $\TSA$ \label{algline:constr_matching_autom}
  \State $\incrementalWeight \gets \{(\initLoc,\{\zerovalue[\Clock \disjUnion \{T,T'\}]\},\varepsilon,\semiringTimesUnit)\mid \initLoc \in\InitLoc\}$;\, for each $[t,t') \subseteq [0,|\signal|)$, $M(t,t') \gets \semiringPlusUnit$
  \For{$i \in \{1,2,\dots,n\}$}
  \State $\incrementalWeight \gets (\partIncrementFunc(a_i, T_i))(\incrementalWeight)$, where $T_i = \sum_{k=1}^{i}\tau_k$ \label{algline:compute_part_weight}
  \For {$(\loc,\zone,\varepsilon,\semiringElem) \in \incrementalWeight, \cval \in \zone$}
  \If {$\loc\in \AccLoc$}
  \State $M(\cval(T')-\cval(T),\cval(T')) \gets M(\cval(T')-\cval(T),\cval(T')) \oplus \semiringElem$. \label{algline:add_weigh}
  \EndIf
  \EndFor
  \State $\incrementalWeight \gets (\incrementFunc(a_i, T_i))(\incrementalWeight)$, where $T_i = \sum_{k=1}^{i}\tau_k$ \label{algline:compute_incremental_weight}
  \EndFor
 \end{algorithmic}
 }}
\end{algorithm}


\begin{definition}
 [$\partIncrementFunc$]
 For
 a TSWA $\TSWAWithInside$ over the data domain $\DDom$ and complete semiring $\semiring$, 
 $\signalState \in \datavaluations$, and
 $\absoluteTime \in \Rp$,
 \LongVersion{the partial increment function} \ShortVersion{$\partIncrementFunc(\signalState,\absoluteTime)$}
\LongVersion{ \[
 \partIncrementFunc(\signalState,\absoluteTime)\colon \powerset{\Loc\times\ZonesWithAbs\times\DValSeq\times\semiringBase} \to \powerset{\Loc\times\ZonesWithAbs\times\DValSeq\times\semiringBase}  
 \]}
 is as follows, 
 where
 $\TSAWithInside$ 
and $(\WSTTSState_{\signalState,\absoluteTime}, \WSTTSInitState[\signalState,\absoluteTime,],\WSTTSAccState[\signalState,\absoluteTime,],\WSTTSTransition_{\signalState,\absoluteTime}, \WSTTSWeight_{\signalState,\absoluteTime})$ is the WSTTS of the TSWA $\TSWA$ and the constant signal $\signalState^{\absoluteTime}$.
{\small \begin{align*}
  \partIncrementFunc(a,\absoluteTime)(\weights) = 
  \{(&\loc',\zone',\dvalSeq['],\semiringElem') \in \Loc\times\ZonesWithAbs\times\DValSeq\times\semiringBase \mid
  \forall \nu'\in\zone'.\, \nu'(\absClock)< \absoluteTime,\\
 \semiringElem' &=
  \bigoplus_{(\loc,\zone,\dvalSeq,\semiringElem)\in w}
  \semiringElem \semiringTimes \dist(\{(\loc,\zone,\dvalSeq)\},\{(\loc',\zone',\dvalSeq['])\},\WSTTSState_{\signalState,\absoluteTime},\WSTTSTransition_{\signalState,\absoluteTime},\WSTTSWeight_{\signalState,\absoluteTime})
  \}
 \end{align*}}
\end{definition}

\cref{alg:qtpm} shows our online algorithm for quantitative timed pattern matching.
We construct the {matching automaton} $\matching{\TSA}$ from the TSA $\TSA$ (\cref{algline:constr_matching_autom}), and
we try matching by reading each constant sub-signal $\signalState_i^{\relativeTime_i}$ of the signal $\signalWithInside$ much like the illustration in \cref{fig:incremental_algorithm}.
For each $i$, first,
we consume a prefix $\signalState_i^{\relativeTime'_i}$ of $\signalState_i^{\relativeTime_i} = \signalState_i^{\relativeTime'_i} \signalState_i^{\relativeTime''_i}$ and 
update\LongVersion{ the intermediate weight} $\incrementalWeight$ (\cref{algline:compute_part_weight}). 
Then, we update the result $M$ for each\LongVersion{ weight} $(\loc,\zone,\varepsilon,\semiringElem) \in \incrementalWeight$\LongVersion{ labelled with an accepting location}\ShortVersion{ if $\loc \in \AccLoc$} (\cref{algline:add_weigh}).
Finally, we consume the remaining part $\signalState_i^{\relativeTime''_i}$ and 
update\LongVersion{ the intermediate weight} $\incrementalWeight$ (\cref{algline:compute_incremental_weight}).

\paragraph{Complexity discussion}

In general, the time and space complexities of \cref{alg:qtpm} are polynomial to the length $n$ of the signal $\signalWithInside$ due to the bound of the size of the reachability part of the WSTTS.
On the other hand, if the TSWA has a time-bound and the sampling frequency of the signal is also bounded (such as in \cref{fig:overshoot_pattern,fig:ringing_pattern}), time and space complexities are linear and constant to the length $n$ of the signal, respectively.

\section{Experiments}
\label{sec:experiments}
We implemented our online algorithm for quantitative timed pattern matching in C++ and conducted experiments to answer the following research questions. \LongVersion{We suppose that the input piecewise-constant signals are interpolations of the actual signals by sampling.}
\begin{description}
 \item[RQ1] Is the practical performance of \cref{alg:qtpm} realistic?
 \item[RQ2] Is \cref{alg:qtpm} online capable, \ie{} does it perform in linear time and  constant space, with respect to the number of the entries in the signal?
 \item[RQ3] Can \cref{alg:qtpm} handle denser logs, \ie{} what is the performance with respect to the sampling frequency of the signal?
\end{description}
Our implementation is in \url{https://github.com/MasWag/qtpm}.
We conducted the experiments on an Amazon EC2 c4.large instance (2 vCPUs and 3.75 GiB RAM)\LongVersion{ running Ubuntu 18.04 LTS (64 bit)}.\LongVersion{ We compiled the implementation by GCC-4.9.3.} For the measurement of the execution time and memory usage,  we used GNU time and took an average of 20 executions. 
We could not compare with~\cite{DBLP:conf/formats/BakhirkinFMU17} because their implementation is not publicly available.

As the complete semiring $\semiring$,
we used the sup-inf semiring $\supInfSemiring$ and the tropical semiring $\tropicalSemiring$ in \cref{example:semirings}.
We used the cost functions $\costFunc_r$ in \cref{example:cost_functions} for the sup-inf semiring, and
$\costFunc_t$ in \cref{example:cost_functions} for the tropical semiring.
\LongVersion{\paragraph{Benchmarks}}
We used the automotive benchmark problems shown in \cref{fig:overshoot_pattern,fig:ringing_pattern,fig:overshoot_unbounded_pattern}.
A summary of quantitative timed pattern matching is on the right of each figure.
The specified behaviors in the TSWAs are taken from ST-Lib~\cite{kapinski2016st} and known to be useful for automotive control applications.
See \cref{appendix:comparizon_performance_benchmarks} for a performance comparison among the benchmarks.

%

 \begin{figure}[tbp]
 \begin{minipage}{0.5\linewidth}
  \centering
  \scalebox{0.60}{
  \begin{tikzpicture}[shorten >=1pt,node distance=4.5cm,on grid,auto]
 \node[state,initial,align=center] (s_0) {$v_{\mathit{ref}}<35$\\ $|v - v_{\mathit{ref}}|<10$};
 \node[state,node distance=3.5cm,align=center] (s_1) [right of=s_0] {$v_{\mathit{ref}}>35$\\ $|v - v_{\mathit{ref}}|>10$};
   \node[state,accepting,node distance=3.0cm] (s_2) [right of=s_1] {$\top$};
   
 \path[->] 
  (s_0) edge [above] node {$c < 10$} (s_1)
  (s_1) edge [above] node {$c < 150$} (s_2);
  \end{tikzpicture}}
\end{minipage}
 \begin{minipage}{0.5\linewidth}
  \centering
 \includegraphics[scale=0.330]{./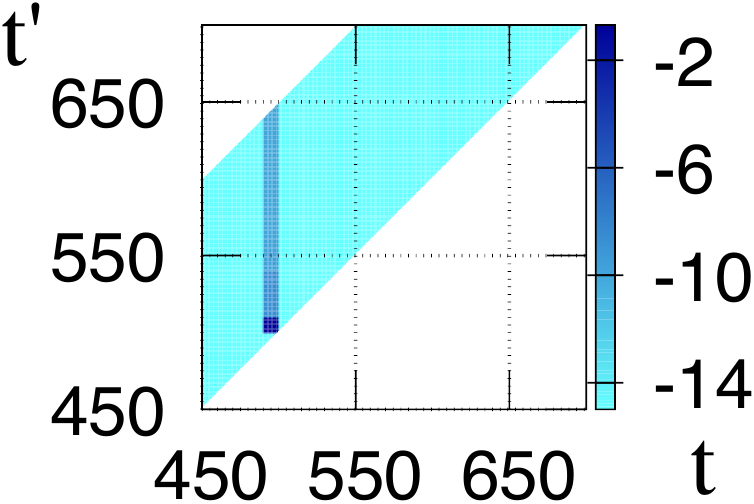}
 \includegraphics[scale=0.330]{./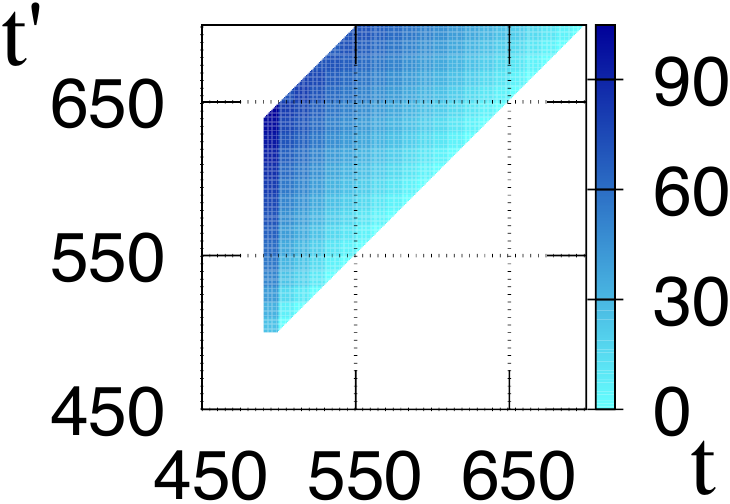} 
 \end{minipage}
  \caption{\textsc{Overshoot}: The set of input signals is generated by the cruise control model~\cite{cruise-control-simulink}.
  The TSA is for the settling when the reference value of the
  velocity is changed from $v_{\mathit{ref}} < 35$ to $v_{\mathit{ref}} > 35$.
  The left and right maps are for the sup-inf and tropical semirings, respectively.}
  \label{fig:overshoot_pattern}
 \begin{minipage}{0.5\linewidth}
  \centering
  \scalebox{0.65}{
  \begin{tikzpicture}[shorten >=1pt,node distance=4.5cm,on grid,auto]
 \node[state,initial] (rise) {$\mathsf{rise}$};
 \node[state] (after_rise) at (1.5,0.5) {$\top$};
 \node[state,node distance=3.2cm] (fall) [right of=rise] {$\mathsf{fall}$};n
 \node[state] (after_fall) at (1.5,-0.5) {$\top$};
 \node[state,accepting,node distance=2.0cm] (fin) [right of=fall] {$\top$};
   
 \path[->] 
  (rise) edge [above left,pos=0.2] node[above=0.2cm] {
   \begin{tabular}{c}
    $c_1 < 20$\\
    $c_2 < 80$
   \end{tabular}} (after_rise)
  (after_rise) edge [above right,pos=0.8] node[above=0.2cm] {
   \begin{tabular}{c}
    $c_1 < 20$\\
    $c_2 < 80$
   \end{tabular}} (fall)
  (fall) edge [below right,pos=-0.3] node[below=0.3cm] {
   \begin{tabular}{c}
    $c_1 < 20,c_2 < 80$\\
    $/c_1:=0$
   \end{tabular}} (after_fall)
  (after_fall) edge [below left,pos=0.8] node[below=0.2cm] {
   \begin{tabular}{c}
    $c_1 < 20$\\
    $c_2 < 80$
   \end{tabular}} (rise)
  (fall) edge [above] node {
   \begin{tabular}{c}
    $c_1 < 20$\\
    $c_2 < 80$
   \end{tabular}} (fin);
  \end{tikzpicture}}
\end{minipage}
 \begin{minipage}{0.5\linewidth}
  \centering
 \includegraphics[scale=0.330]{./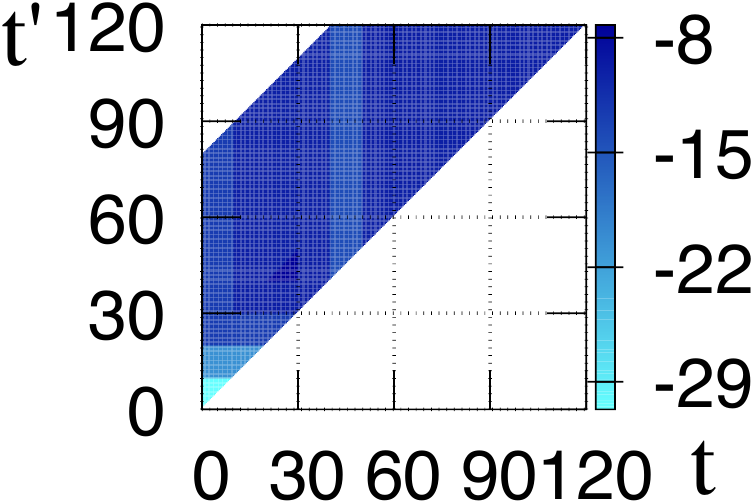}
 \includegraphics[scale=0.330]{./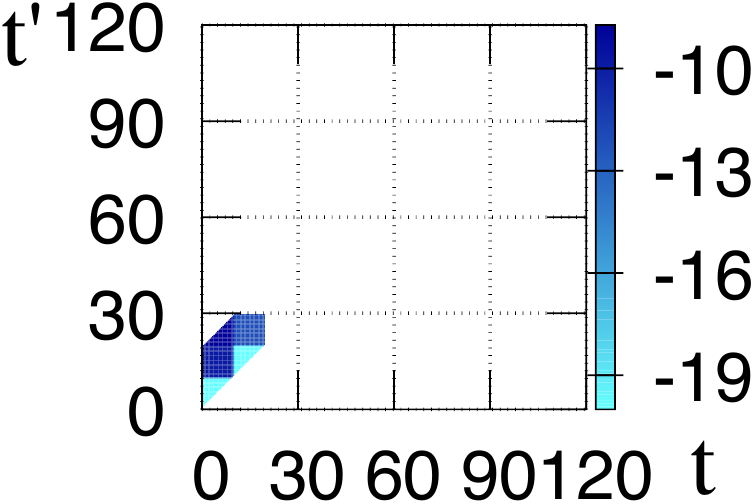}
 \end{minipage}
  \caption{\textsc{Ringing}: The set of input signals is generated by the same model~\cite{cruise-control-simulink} as that in \textsc{Overshoot}.
  The TSA is for the frequent rise and fall of the signal in 80 s.
  The constraints $\mathsf{rise}$ and $\mathsf{fall}$ are $\mathsf{rise} =v(t) - v(t-10)>10$ and $\mathsf{fall} =v(t) - v(t-10)<-10$.
  The left and right maps are for the sup-inf and tropical semirings, respectively.}
  \label{fig:ringing_pattern}
 \begin{minipage}{0.5\linewidth}
  \centering
  \scalebox{0.65}{
  \begin{tikzpicture}[shorten >=1pt,node distance=4.5cm,on grid,auto]
 \node[state,initial] (s_0) {
   \begin{tabular}{c}
    $v_{\mathit{ref}}<35$\\
    $|v - v_{\mathit{ref}}|<10$\\
   \end{tabular}};
 \node[state,node distance=3.5cm] (s_1) [right of=s_0] {
   \begin{tabular}{c}
    $v_{\mathit{ref}}>35$\\
    $|v - v_{\mathit{ref}}|>10$\\
   \end{tabular}};
   \node[state,accepting,node distance=2.3cm] (s_2) [right of=s_1] {$\top$};
   
 \path[->] 
  (s_0) edge [above] node {$c < 10$} (s_1)
  (s_1) edge [above,color=red] node {\color{red} $\top$} (s_2);
  \end{tikzpicture}}
\end{minipage}
  \begin{minipage}{0.5\linewidth}
  \centering
  \includegraphics[scale=0.330]{./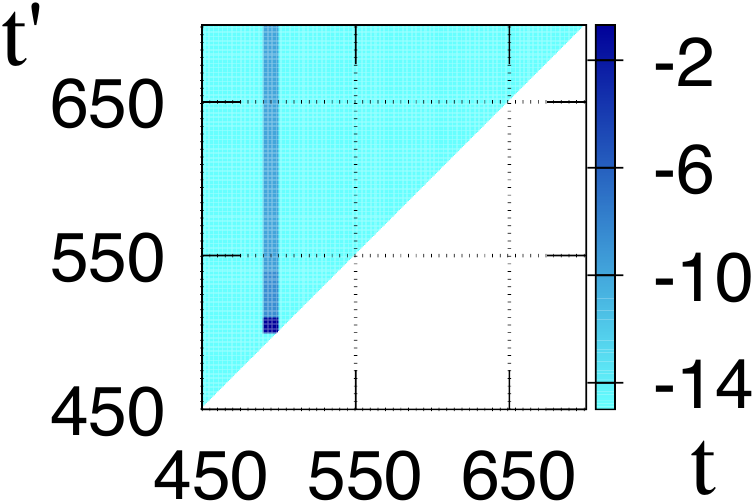}
  \includegraphics[scale=0.330]{./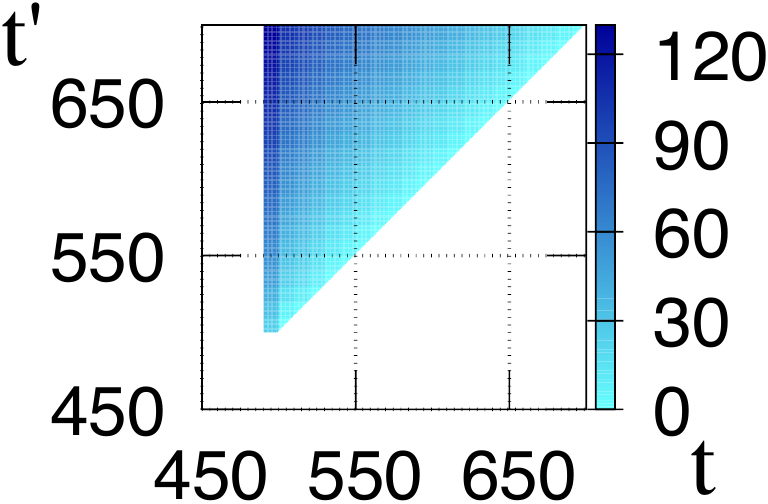}
 \end{minipage}
  \caption{\textsc{Overshoot (Unbounded)}: The set of input signals is generated by the same model~\cite{cruise-control-simulink} as that in \textsc{Overshoot}.
  The TSA is almost the same as that in \textsc{Overshoot}, but the time-bound ($c < 150$) is removed.
 The left and right maps are for the sup-inf and tropical semirings, respectively.}
  \label{fig:overshoot_unbounded_pattern}
\begin{minipage}{0.49\textwidth} 
 \centering
 \scalebox{0.40}{\begin{tikzpicture}[gnuplot]
\tikzset{every node/.append style={font={\fontsize{14.0pt}{16.8pt}\selectfont}}}
\path (0.000,0.000) rectangle (12.500,8.750);
\gpcolor{color=gp lt color border}
\gpsetlinetype{gp lt border}
\gpsetdashtype{gp dt solid}
\gpsetlinewidth{1.00}
\draw[gp path] (1.850,1.379)--(2.030,1.379);
\draw[gp path] (11.725,1.379)--(11.545,1.379);
\node[gp node right,font={\Large}] at (1.592,1.379) {$0$};
\draw[gp path] (1.850,2.370)--(2.030,2.370);
\draw[gp path] (11.725,2.370)--(11.545,2.370);
\node[gp node right,font={\Large}] at (1.592,2.370) {$20$};
\draw[gp path] (1.850,3.362)--(2.030,3.362);
\draw[gp path] (11.725,3.362)--(11.545,3.362);
\node[gp node right,font={\Large}] at (1.592,3.362) {$40$};
\draw[gp path] (1.850,4.353)--(2.030,4.353);
\draw[gp path] (11.725,4.353)--(11.545,4.353);
\node[gp node right,font={\Large}] at (1.592,4.353) {$60$};
\draw[gp path] (1.850,5.344)--(2.030,5.344);
\draw[gp path] (11.725,5.344)--(11.545,5.344);
\node[gp node right,font={\Large}] at (1.592,5.344) {$80$};
\draw[gp path] (1.850,6.335)--(2.030,6.335);
\draw[gp path] (11.725,6.335)--(11.545,6.335);
\node[gp node right,font={\Large}] at (1.592,6.335) {$100$};
\draw[gp path] (1.850,7.327)--(2.030,7.327);
\draw[gp path] (11.725,7.327)--(11.545,7.327);
\node[gp node right,font={\Large}] at (1.592,7.327) {$120$};
\draw[gp path] (1.850,8.318)--(2.030,8.318);
\draw[gp path] (11.725,8.318)--(11.545,8.318);
\node[gp node right,font={\Large}] at (1.592,8.318) {$140$};
\draw[gp path] (1.850,1.379)--(1.850,1.559);
\draw[gp path] (1.850,8.318)--(1.850,8.138);
\node[gp node center,font={\Large}] at (1.850,0.948) {$0$};
\draw[gp path] (3.496,1.379)--(3.496,1.559);
\draw[gp path] (3.496,8.318)--(3.496,8.138);
\node[gp node center,font={\Large}] at (3.496,0.948) {$1$};
\draw[gp path] (5.142,1.379)--(5.142,1.559);
\draw[gp path] (5.142,8.318)--(5.142,8.138);
\node[gp node center,font={\Large}] at (5.142,0.948) {$2$};
\draw[gp path] (6.788,1.379)--(6.788,1.559);
\draw[gp path] (6.788,8.318)--(6.788,8.138);
\node[gp node center,font={\Large}] at (6.788,0.948) {$3$};
\draw[gp path] (8.433,1.379)--(8.433,1.559);
\draw[gp path] (8.433,8.318)--(8.433,8.138);
\node[gp node center,font={\Large}] at (8.433,0.948) {$4$};
\draw[gp path] (10.079,1.379)--(10.079,1.559);
\draw[gp path] (10.079,8.318)--(10.079,8.138);
\node[gp node center,font={\Large}] at (10.079,0.948) {$5$};
\draw[gp path] (11.725,1.379)--(11.725,1.559);
\draw[gp path] (11.725,8.318)--(11.725,8.138);
\node[gp node center,font={\Large}] at (11.725,0.948) {$6$};
\draw[gp path] (1.850,8.318)--(1.850,1.379)--(11.725,1.379)--(11.725,8.318)--cycle;
\node[gp node center,rotate=-270,font={\fontsize{14.0pt}{16.8pt}\selectfont}] at (0.387,4.848) {Execution time [s]};
\node[gp node center,font={\fontsize{14.0pt}{16.8pt}\selectfont}] at (6.787,0.302) {Number of entries of the signal [$\times 10,000$]};
\node[gp node right,font={\fontsize{14.0pt}{16.8pt}\selectfont}] at (9.739,7.922) {\textsc{Overshoot}, sup-inf};
\gpcolor{rgb color={0.580,0.000,0.827}}
\gpsetlinewidth{5.00}
\draw[gp path] (9.997,7.922)--(11.209,7.922);
\draw[gp path] (2.838,1.458)--(3.825,1.537)--(4.813,1.619)--(5.800,1.698)--(6.788,1.779)%
  --(7.775,1.860)--(8.763,1.944)--(9.750,2.024)--(10.738,2.103)--(11.725,2.186);
\gpsetpointsize{8.00}
\gppoint{gp mark 2}{(2.838,1.458)}
\gppoint{gp mark 2}{(3.825,1.537)}
\gppoint{gp mark 2}{(4.813,1.619)}
\gppoint{gp mark 2}{(5.800,1.698)}
\gppoint{gp mark 2}{(6.788,1.779)}
\gppoint{gp mark 2}{(7.775,1.860)}
\gppoint{gp mark 2}{(8.763,1.944)}
\gppoint{gp mark 2}{(9.750,2.024)}
\gppoint{gp mark 2}{(10.738,2.103)}
\gppoint{gp mark 2}{(11.725,2.186)}
\gppoint{gp mark 2}{(10.603,7.922)}
\gpcolor{color=gp lt color border}
\node[gp node right,font={\fontsize{14.0pt}{16.8pt}\selectfont}] at (9.739,7.491) {\textsc{Ringing} sup-inf};
\gpcolor{rgb color={0.000,0.620,0.451}}
\draw[gp path] (9.997,7.491)--(11.209,7.491);
\draw[gp path] (2.838,2.026)--(3.825,2.676)--(4.813,3.314)--(5.800,3.959)--(6.788,4.610)%
  --(7.775,5.252)--(8.763,5.907)--(9.750,6.540)--(10.738,7.202)--(11.725,7.872);
\gppoint{gp mark 6}{(2.838,2.026)}
\gppoint{gp mark 6}{(3.825,2.676)}
\gppoint{gp mark 6}{(4.813,3.314)}
\gppoint{gp mark 6}{(5.800,3.959)}
\gppoint{gp mark 6}{(6.788,4.610)}
\gppoint{gp mark 6}{(7.775,5.252)}
\gppoint{gp mark 6}{(8.763,5.907)}
\gppoint{gp mark 6}{(9.750,6.540)}
\gppoint{gp mark 6}{(10.738,7.202)}
\gppoint{gp mark 6}{(11.725,7.872)}
\gppoint{gp mark 6}{(10.603,7.491)}
\gpcolor{color=gp lt color border}
\node[gp node right,font={\fontsize{14.0pt}{16.8pt}\selectfont}] at (9.739,7.060) {\textsc{Overshoot}, tropical};
\gpcolor{rgb color={0.337,0.706,0.914}}
\draw[gp path] (9.997,7.060)--(11.209,7.060);
\draw[gp path] (2.838,1.474)--(3.825,1.571)--(4.813,1.668)--(5.800,1.764)--(6.788,1.862)%
  --(7.775,1.958)--(8.763,2.057)--(9.750,2.154)--(10.738,2.250)--(11.725,2.344);
\gppoint{gp mark 8}{(2.838,1.474)}
\gppoint{gp mark 8}{(3.825,1.571)}
\gppoint{gp mark 8}{(4.813,1.668)}
\gppoint{gp mark 8}{(5.800,1.764)}
\gppoint{gp mark 8}{(6.788,1.862)}
\gppoint{gp mark 8}{(7.775,1.958)}
\gppoint{gp mark 8}{(8.763,2.057)}
\gppoint{gp mark 8}{(9.750,2.154)}
\gppoint{gp mark 8}{(10.738,2.250)}
\gppoint{gp mark 8}{(11.725,2.344)}
\gppoint{gp mark 8}{(10.603,7.060)}
\gpcolor{color=gp lt color border}
\node[gp node right,font={\fontsize{14.0pt}{16.8pt}\selectfont}] at (9.739,6.629) {\textsc{Ringing} tropical};
\gpcolor{rgb color={0.902,0.624,0.000}}
\draw[gp path] (9.997,6.629)--(11.209,6.629);
\draw[gp path] (2.838,1.710)--(3.825,2.032)--(4.813,2.360)--(5.800,2.682)--(6.788,3.009)%
  --(7.775,3.329)--(8.763,3.665)--(9.750,3.992)--(10.738,4.313)--(11.725,4.634);
\gppoint{gp mark 4}{(2.838,1.710)}
\gppoint{gp mark 4}{(3.825,2.032)}
\gppoint{gp mark 4}{(4.813,2.360)}
\gppoint{gp mark 4}{(5.800,2.682)}
\gppoint{gp mark 4}{(6.788,3.009)}
\gppoint{gp mark 4}{(7.775,3.329)}
\gppoint{gp mark 4}{(8.763,3.665)}
\gppoint{gp mark 4}{(9.750,3.992)}
\gppoint{gp mark 4}{(10.738,4.313)}
\gppoint{gp mark 4}{(11.725,4.634)}
\gppoint{gp mark 4}{(10.603,6.629)}
\gpcolor{color=gp lt color border}
\gpsetlinewidth{1.00}
\draw[gp path] (1.850,8.318)--(1.850,1.379)--(11.725,1.379)--(11.725,8.318)--cycle;
\gpdefrectangularnode{gp plot 1}{\pgfpoint{1.850cm}{1.379cm}}{\pgfpoint{11.725cm}{8.318cm}}
\end{tikzpicture}
\end{minipage}
 \hfill
\begin{minipage}{0.49\textwidth} 
 \centering
 \scalebox{0.40}{\begin{tikzpicture}[gnuplot]
\tikzset{every node/.append style={font={\fontsize{14.0pt}{16.8pt}\selectfont}}}
\path (0.000,0.000) rectangle (12.500,8.750);
\gpcolor{color=gp lt color border}
\gpsetlinetype{gp lt border}
\gpsetdashtype{gp dt solid}
\gpsetlinewidth{1.00}
\draw[gp path] (1.850,1.379)--(2.030,1.379);
\draw[gp path] (11.725,1.379)--(11.545,1.379);
\node[gp node right,font={\Large}] at (1.592,1.379) {$6.9$};
\draw[gp path] (1.850,2.150)--(2.030,2.150);
\draw[gp path] (11.725,2.150)--(11.545,2.150);
\node[gp node right,font={\Large}] at (1.592,2.150) {$7$};
\draw[gp path] (1.850,2.921)--(2.030,2.921);
\draw[gp path] (11.725,2.921)--(11.545,2.921);
\node[gp node right,font={\Large}] at (1.592,2.921) {$7.1$};
\draw[gp path] (1.850,3.692)--(2.030,3.692);
\draw[gp path] (11.725,3.692)--(11.545,3.692);
\node[gp node right,font={\Large}] at (1.592,3.692) {$7.2$};
\draw[gp path] (1.850,4.463)--(2.030,4.463);
\draw[gp path] (11.725,4.463)--(11.545,4.463);
\node[gp node right,font={\Large}] at (1.592,4.463) {$7.3$};
\draw[gp path] (1.850,5.234)--(2.030,5.234);
\draw[gp path] (11.725,5.234)--(11.545,5.234);
\node[gp node right,font={\Large}] at (1.592,5.234) {$7.4$};
\draw[gp path] (1.850,6.005)--(2.030,6.005);
\draw[gp path] (11.725,6.005)--(11.545,6.005);
\node[gp node right,font={\Large}] at (1.592,6.005) {$7.5$};
\draw[gp path] (1.850,6.776)--(2.030,6.776);
\draw[gp path] (11.725,6.776)--(11.545,6.776);
\node[gp node right,font={\Large}] at (1.592,6.776) {$7.6$};
\draw[gp path] (1.850,7.547)--(2.030,7.547);
\draw[gp path] (11.725,7.547)--(11.545,7.547);
\node[gp node right,font={\Large}] at (1.592,7.547) {$7.7$};
\draw[gp path] (1.850,8.318)--(2.030,8.318);
\draw[gp path] (11.725,8.318)--(11.545,8.318);
\node[gp node right,font={\Large}] at (1.592,8.318) {$7.8$};
\draw[gp path] (1.850,1.379)--(1.850,1.559);
\draw[gp path] (1.850,8.318)--(1.850,8.138);
\node[gp node center,font={\Large}] at (1.850,0.948) {$0$};
\draw[gp path] (3.496,1.379)--(3.496,1.559);
\draw[gp path] (3.496,8.318)--(3.496,8.138);
\node[gp node center,font={\Large}] at (3.496,0.948) {$1$};
\draw[gp path] (5.142,1.379)--(5.142,1.559);
\draw[gp path] (5.142,8.318)--(5.142,8.138);
\node[gp node center,font={\Large}] at (5.142,0.948) {$2$};
\draw[gp path] (6.788,1.379)--(6.788,1.559);
\draw[gp path] (6.788,8.318)--(6.788,8.138);
\node[gp node center,font={\Large}] at (6.788,0.948) {$3$};
\draw[gp path] (8.433,1.379)--(8.433,1.559);
\draw[gp path] (8.433,8.318)--(8.433,8.138);
\node[gp node center,font={\Large}] at (8.433,0.948) {$4$};
\draw[gp path] (10.079,1.379)--(10.079,1.559);
\draw[gp path] (10.079,8.318)--(10.079,8.138);
\node[gp node center,font={\Large}] at (10.079,0.948) {$5$};
\draw[gp path] (11.725,1.379)--(11.725,1.559);
\draw[gp path] (11.725,8.318)--(11.725,8.138);
\node[gp node center,font={\Large}] at (11.725,0.948) {$6$};
\draw[gp path] (1.850,8.318)--(1.850,1.379)--(11.725,1.379)--(11.725,8.318)--cycle;
\node[gp node center,rotate=-270,font={\fontsize{14.0pt}{16.8pt}\selectfont}] at (0.387,4.848) {Memory usage [MiB]};
\node[gp node center,font={\fontsize{14.0pt}{16.8pt}\selectfont}] at (6.787,0.302) {Number of entries of the signal [$\times 10,000$]};
\node[gp node right,font={\fontsize{14.0pt}{16.8pt}\selectfont}] at (9.739,7.922) {\textsc{Overshoot}, sup-inf};
\gpcolor{rgb color={0.580,0.000,0.827}}
\gpsetlinewidth{5.00}
\draw[gp path] (9.997,7.922)--(11.209,7.922);
\draw[gp path] (2.838,2.362)--(3.825,1.966)--(4.813,2.374)--(5.800,2.094)--(6.788,2.132)%
  --(7.775,1.993)--(8.763,2.374)--(9.750,2.358)--(10.738,2.090)--(11.725,2.412);
\gpsetpointsize{8.00}
\gppoint{gp mark 2}{(2.838,2.362)}
\gppoint{gp mark 2}{(3.825,1.966)}
\gppoint{gp mark 2}{(4.813,2.374)}
\gppoint{gp mark 2}{(5.800,2.094)}
\gppoint{gp mark 2}{(6.788,2.132)}
\gppoint{gp mark 2}{(7.775,1.993)}
\gppoint{gp mark 2}{(8.763,2.374)}
\gppoint{gp mark 2}{(9.750,2.358)}
\gppoint{gp mark 2}{(10.738,2.090)}
\gppoint{gp mark 2}{(11.725,2.412)}
\gppoint{gp mark 2}{(10.603,7.922)}
\gpcolor{color=gp lt color border}
\node[gp node right,font={\fontsize{14.0pt}{16.8pt}\selectfont}] at (9.739,7.491) {\textsc{Ringing}, sup-inf};
\gpcolor{rgb color={0.000,0.620,0.451}}
\draw[gp path] (9.997,7.491)--(11.209,7.491);
\draw[gp path] (2.838,8.032)--(3.825,7.842)--(4.813,8.113)--(5.800,8.246)--(6.788,7.753)%
  --(7.775,8.065)--(8.763,8.122)--(9.750,7.911)--(10.738,8.161)--(11.725,8.207);
\gppoint{gp mark 6}{(2.838,8.032)}
\gppoint{gp mark 6}{(3.825,7.842)}
\gppoint{gp mark 6}{(4.813,8.113)}
\gppoint{gp mark 6}{(5.800,8.246)}
\gppoint{gp mark 6}{(6.788,7.753)}
\gppoint{gp mark 6}{(7.775,8.065)}
\gppoint{gp mark 6}{(8.763,8.122)}
\gppoint{gp mark 6}{(9.750,7.911)}
\gppoint{gp mark 6}{(10.738,8.161)}
\gppoint{gp mark 6}{(11.725,8.207)}
\gppoint{gp mark 6}{(10.603,7.491)}
\gpcolor{color=gp lt color border}
\node[gp node right,font={\fontsize{14.0pt}{16.8pt}\selectfont}] at (9.739,7.060) {\textsc{Overshoot}, tropical};
\gpcolor{rgb color={0.337,0.706,0.914}}
\draw[gp path] (9.997,7.060)--(11.209,7.060);
\draw[gp path] (2.838,3.090)--(3.825,2.941)--(4.813,2.978)--(5.800,3.260)--(6.788,3.180)%
  --(7.775,3.052)--(8.763,3.054)--(9.750,3.007)--(10.738,3.019)--(11.725,2.888);
\gppoint{gp mark 8}{(2.838,3.090)}
\gppoint{gp mark 8}{(3.825,2.941)}
\gppoint{gp mark 8}{(4.813,2.978)}
\gppoint{gp mark 8}{(5.800,3.260)}
\gppoint{gp mark 8}{(6.788,3.180)}
\gppoint{gp mark 8}{(7.775,3.052)}
\gppoint{gp mark 8}{(8.763,3.054)}
\gppoint{gp mark 8}{(9.750,3.007)}
\gppoint{gp mark 8}{(10.738,3.019)}
\gppoint{gp mark 8}{(11.725,2.888)}
\gppoint{gp mark 8}{(10.603,7.060)}
\gpcolor{color=gp lt color border}
\node[gp node right,font={\fontsize{14.0pt}{16.8pt}\selectfont}] at (9.739,6.629) {\textsc{Ringing}, tropical};
\gpcolor{rgb color={0.902,0.624,0.000}}
\draw[gp path] (9.997,6.629)--(11.209,6.629);
\draw[gp path] (2.838,5.996)--(3.825,6.005)--(4.813,6.237)--(5.800,6.186)--(6.788,6.180)%
  --(7.775,6.226)--(8.763,6.246)--(9.750,6.436)--(10.738,6.366)--(11.725,6.439);
\gppoint{gp mark 4}{(2.838,5.996)}
\gppoint{gp mark 4}{(3.825,6.005)}
\gppoint{gp mark 4}{(4.813,6.237)}
\gppoint{gp mark 4}{(5.800,6.186)}
\gppoint{gp mark 4}{(6.788,6.180)}
\gppoint{gp mark 4}{(7.775,6.226)}
\gppoint{gp mark 4}{(8.763,6.246)}
\gppoint{gp mark 4}{(9.750,6.436)}
\gppoint{gp mark 4}{(10.738,6.366)}
\gppoint{gp mark 4}{(11.725,6.439)}
\gppoint{gp mark 4}{(10.603,6.629)}
\gpcolor{color=gp lt color border}
\gpsetlinewidth{1.00}
\draw[gp path] (1.850,8.318)--(1.850,1.379)--(11.725,1.379)--(11.725,8.318)--cycle;
\gpdefrectangularnode{gp plot 1}{\pgfpoint{1.850cm}{1.379cm}}{\pgfpoint{11.725cm}{8.318cm}}
\end{tikzpicture}
\end{minipage}
 \caption{Change in execution time (left) and memory usage (right) for \textsc{Overshoot} and \textsc{Ringing} with the number of the entries of the signals}
  \label{fig:result_long_signal}

\begin{minipage}{0.49\textwidth} 
 \centering
 \scalebox{0.40}{\begin{tikzpicture}[gnuplot]
\tikzset{every node/.append style={font={\fontsize{14.0pt}{16.8pt}\selectfont}}}
\path (0.000,0.000) rectangle (12.500,8.750);
\gpcolor{color=gp lt color border}
\gpsetlinetype{gp lt border}
\gpsetdashtype{gp dt solid}
\gpsetlinewidth{1.00}
\draw[gp path] (1.850,1.379)--(2.030,1.379);
\draw[gp path] (11.725,1.379)--(11.545,1.379);
\node[gp node right,font={\Large}] at (1.592,1.379) {$0$};
\draw[gp path] (1.850,2.370)--(2.030,2.370);
\draw[gp path] (11.725,2.370)--(11.545,2.370);
\node[gp node right,font={\Large}] at (1.592,2.370) {$20$};
\draw[gp path] (1.850,3.362)--(2.030,3.362);
\draw[gp path] (11.725,3.362)--(11.545,3.362);
\node[gp node right,font={\Large}] at (1.592,3.362) {$40$};
\draw[gp path] (1.850,4.353)--(2.030,4.353);
\draw[gp path] (11.725,4.353)--(11.545,4.353);
\node[gp node right,font={\Large}] at (1.592,4.353) {$60$};
\draw[gp path] (1.850,5.344)--(2.030,5.344);
\draw[gp path] (11.725,5.344)--(11.545,5.344);
\node[gp node right,font={\Large}] at (1.592,5.344) {$80$};
\draw[gp path] (1.850,6.335)--(2.030,6.335);
\draw[gp path] (11.725,6.335)--(11.545,6.335);
\node[gp node right,font={\Large}] at (1.592,6.335) {$100$};
\draw[gp path] (1.850,7.327)--(2.030,7.327);
\draw[gp path] (11.725,7.327)--(11.545,7.327);
\node[gp node right,font={\Large}] at (1.592,7.327) {$120$};
\draw[gp path] (1.850,8.318)--(2.030,8.318);
\draw[gp path] (11.725,8.318)--(11.545,8.318);
\node[gp node right,font={\Large}] at (1.592,8.318) {$140$};
\draw[gp path] (1.850,1.379)--(1.850,1.559);
\draw[gp path] (1.850,8.318)--(1.850,8.138);
\node[gp node center,font={\Large}] at (1.850,0.948) {$1$};
\draw[gp path] (2.947,1.379)--(2.947,1.559);
\draw[gp path] (2.947,8.318)--(2.947,8.138);
\node[gp node center,font={\Large}] at (2.947,0.948) {$2$};
\draw[gp path] (4.044,1.379)--(4.044,1.559);
\draw[gp path] (4.044,8.318)--(4.044,8.138);
\node[gp node center,font={\Large}] at (4.044,0.948) {$3$};
\draw[gp path] (5.142,1.379)--(5.142,1.559);
\draw[gp path] (5.142,8.318)--(5.142,8.138);
\node[gp node center,font={\Large}] at (5.142,0.948) {$4$};
\draw[gp path] (6.239,1.379)--(6.239,1.559);
\draw[gp path] (6.239,8.318)--(6.239,8.138);
\node[gp node center,font={\Large}] at (6.239,0.948) {$5$};
\draw[gp path] (7.336,1.379)--(7.336,1.559);
\draw[gp path] (7.336,8.318)--(7.336,8.138);
\node[gp node center,font={\Large}] at (7.336,0.948) {$6$};
\draw[gp path] (8.433,1.379)--(8.433,1.559);
\draw[gp path] (8.433,8.318)--(8.433,8.138);
\node[gp node center,font={\Large}] at (8.433,0.948) {$7$};
\draw[gp path] (9.531,1.379)--(9.531,1.559);
\draw[gp path] (9.531,8.318)--(9.531,8.138);
\node[gp node center,font={\Large}] at (9.531,0.948) {$8$};
\draw[gp path] (10.628,1.379)--(10.628,1.559);
\draw[gp path] (10.628,8.318)--(10.628,8.138);
\node[gp node center,font={\Large}] at (10.628,0.948) {$9$};
\draw[gp path] (11.725,1.379)--(11.725,1.559);
\draw[gp path] (11.725,8.318)--(11.725,8.138);
\node[gp node center,font={\Large}] at (11.725,0.948) {$10$};
\draw[gp path] (1.850,8.318)--(1.850,1.379)--(11.725,1.379)--(11.725,8.318)--cycle;
\node[gp node center,rotate=-270,font={\fontsize{14.0pt}{16.8pt}\selectfont}] at (0.387,4.848) {Execution time [s]};
\node[gp node center,font={\fontsize{14.0pt}{16.8pt}\selectfont}] at (6.787,0.302) {Number of entries of the signal [$\times 100$]};
\node[gp node right,font={\fontsize{13.0pt}{15.6pt}\selectfont}] at (9.853,7.938) {\textsc{Overshoot (Unbounded)}, sup-inf};
\gpcolor{rgb color={0.580,0.000,0.827}}
\gpsetlinewidth{5.00}
\draw[gp path] (10.092,7.938)--(11.228,7.938);
\draw[gp path] (1.850,1.386)--(2.947,1.425)--(4.044,1.526)--(5.142,1.715)--(6.239,2.046)%
  --(7.336,2.530)--(8.433,3.206)--(9.531,4.136)--(10.628,5.306)--(11.725,6.938);
\gpsetpointsize{8.00}
\gppoint{gp mark 3}{(1.850,1.386)}
\gppoint{gp mark 3}{(2.947,1.425)}
\gppoint{gp mark 3}{(4.044,1.526)}
\gppoint{gp mark 3}{(5.142,1.715)}
\gppoint{gp mark 3}{(6.239,2.046)}
\gppoint{gp mark 3}{(7.336,2.530)}
\gppoint{gp mark 3}{(8.433,3.206)}
\gppoint{gp mark 3}{(9.531,4.136)}
\gppoint{gp mark 3}{(10.628,5.306)}
\gppoint{gp mark 3}{(11.725,6.938)}
\gppoint{gp mark 3}{(10.660,7.938)}
\gpcolor{color=gp lt color border}
\node[gp node right,font={\fontsize{13.0pt}{15.6pt}\selectfont}] at (9.853,7.538) {\textsc{Overshoot (Unbounded)}, tropical};
\gpcolor{rgb color={0.000,0.620,0.451}}
\draw[gp path] (10.092,7.538)--(11.228,7.538);
\draw[gp path] (1.850,1.388)--(2.947,1.437)--(4.044,1.557)--(5.142,1.786)--(6.239,2.160)%
  --(7.336,2.734)--(8.433,3.527)--(9.531,4.565)--(10.628,5.877)--(11.725,7.469);
\gppoint{gp mark 4}{(1.850,1.388)}
\gppoint{gp mark 4}{(2.947,1.437)}
\gppoint{gp mark 4}{(4.044,1.557)}
\gppoint{gp mark 4}{(5.142,1.786)}
\gppoint{gp mark 4}{(6.239,2.160)}
\gppoint{gp mark 4}{(7.336,2.734)}
\gppoint{gp mark 4}{(8.433,3.527)}
\gppoint{gp mark 4}{(9.531,4.565)}
\gppoint{gp mark 4}{(10.628,5.877)}
\gppoint{gp mark 4}{(11.725,7.469)}
\gppoint{gp mark 4}{(10.660,7.538)}
\gpcolor{color=gp lt color border}
\gpsetlinewidth{1.00}
\draw[gp path] (1.850,8.318)--(1.850,1.379)--(11.725,1.379)--(11.725,8.318)--cycle;
\gpdefrectangularnode{gp plot 1}{\pgfpoint{1.850cm}{1.379cm}}{\pgfpoint{11.725cm}{8.318cm}}
\end{tikzpicture}
\end{minipage}
\begin{minipage}{0.49\textwidth} 
 \centering
 \scalebox{0.40}{\begin{tikzpicture}[gnuplot]
\tikzset{every node/.append style={font={\fontsize{14.0pt}{16.8pt}\selectfont}}}
\path (0.000,0.000) rectangle (12.500,8.750);
\gpcolor{color=gp lt color border}
\gpsetlinetype{gp lt border}
\gpsetdashtype{gp dt solid}
\gpsetlinewidth{1.00}
\draw[gp path] (1.850,1.379)--(2.030,1.379);
\draw[gp path] (11.725,1.379)--(11.545,1.379);
\node[gp node right,font={\Large}] at (1.592,1.379) {$0$};
\draw[gp path] (1.850,2.767)--(2.030,2.767);
\draw[gp path] (11.725,2.767)--(11.545,2.767);
\node[gp node right,font={\Large}] at (1.592,2.767) {$50$};
\draw[gp path] (1.850,4.155)--(2.030,4.155);
\draw[gp path] (11.725,4.155)--(11.545,4.155);
\node[gp node right,font={\Large}] at (1.592,4.155) {$100$};
\draw[gp path] (1.850,5.542)--(2.030,5.542);
\draw[gp path] (11.725,5.542)--(11.545,5.542);
\node[gp node right,font={\Large}] at (1.592,5.542) {$150$};
\draw[gp path] (1.850,6.930)--(2.030,6.930);
\draw[gp path] (11.725,6.930)--(11.545,6.930);
\node[gp node right,font={\Large}] at (1.592,6.930) {$200$};
\draw[gp path] (1.850,8.318)--(2.030,8.318);
\draw[gp path] (11.725,8.318)--(11.545,8.318);
\node[gp node right,font={\Large}] at (1.592,8.318) {$250$};
\draw[gp path] (1.850,1.379)--(1.850,1.559);
\draw[gp path] (1.850,8.318)--(1.850,8.138);
\node[gp node center,font={\Large}] at (1.850,0.948) {$1$};
\draw[gp path] (2.947,1.379)--(2.947,1.559);
\draw[gp path] (2.947,8.318)--(2.947,8.138);
\node[gp node center,font={\Large}] at (2.947,0.948) {$2$};
\draw[gp path] (4.044,1.379)--(4.044,1.559);
\draw[gp path] (4.044,8.318)--(4.044,8.138);
\node[gp node center,font={\Large}] at (4.044,0.948) {$3$};
\draw[gp path] (5.142,1.379)--(5.142,1.559);
\draw[gp path] (5.142,8.318)--(5.142,8.138);
\node[gp node center,font={\Large}] at (5.142,0.948) {$4$};
\draw[gp path] (6.239,1.379)--(6.239,1.559);
\draw[gp path] (6.239,8.318)--(6.239,8.138);
\node[gp node center,font={\Large}] at (6.239,0.948) {$5$};
\draw[gp path] (7.336,1.379)--(7.336,1.559);
\draw[gp path] (7.336,8.318)--(7.336,8.138);
\node[gp node center,font={\Large}] at (7.336,0.948) {$6$};
\draw[gp path] (8.433,1.379)--(8.433,1.559);
\draw[gp path] (8.433,8.318)--(8.433,8.138);
\node[gp node center,font={\Large}] at (8.433,0.948) {$7$};
\draw[gp path] (9.531,1.379)--(9.531,1.559);
\draw[gp path] (9.531,8.318)--(9.531,8.138);
\node[gp node center,font={\Large}] at (9.531,0.948) {$8$};
\draw[gp path] (10.628,1.379)--(10.628,1.559);
\draw[gp path] (10.628,8.318)--(10.628,8.138);
\node[gp node center,font={\Large}] at (10.628,0.948) {$9$};
\draw[gp path] (11.725,1.379)--(11.725,1.559);
\draw[gp path] (11.725,8.318)--(11.725,8.138);
\node[gp node center,font={\Large}] at (11.725,0.948) {$10$};
\draw[gp path] (1.850,8.318)--(1.850,1.379)--(11.725,1.379)--(11.725,8.318)--cycle;
\node[gp node center,rotate=-270,font={\fontsize{14.0pt}{16.8pt}\selectfont}] at (0.387,4.848) {Memory Usage [MiB]};
\node[gp node center,font={\fontsize{14.0pt}{16.8pt}\selectfont}] at (6.787,0.302) {Number of entries of the signal [$\times 100$]};
\node[gp node right,font={\fontsize{13.0pt}{15.6pt}\selectfont}] at (9.853,7.938) {\textsc{Overshoot (Unbounded)}, sup-inf};
\gpcolor{rgb color={0.580,0.000,0.827}}
\gpsetlinewidth{5.00}
\draw[gp path] (10.092,7.938)--(11.228,7.938);
\draw[gp path] (1.850,1.624)--(2.947,1.776)--(4.044,2.021)--(5.142,2.367)--(6.239,2.810)%
  --(7.336,3.348)--(8.433,3.987)--(9.531,4.720)--(10.628,5.551)--(11.725,6.476);
\gpsetpointsize{8.00}
\gppoint{gp mark 3}{(1.850,1.624)}
\gppoint{gp mark 3}{(2.947,1.776)}
\gppoint{gp mark 3}{(4.044,2.021)}
\gppoint{gp mark 3}{(5.142,2.367)}
\gppoint{gp mark 3}{(6.239,2.810)}
\gppoint{gp mark 3}{(7.336,3.348)}
\gppoint{gp mark 3}{(8.433,3.987)}
\gppoint{gp mark 3}{(9.531,4.720)}
\gppoint{gp mark 3}{(10.628,5.551)}
\gppoint{gp mark 3}{(11.725,6.476)}
\gppoint{gp mark 3}{(10.660,7.938)}
\gpcolor{color=gp lt color border}
\node[gp node right,font={\fontsize{13.0pt}{15.6pt}\selectfont}] at (9.853,7.538) {\textsc{Overshoot (Unbounded)}, tropical};
\gpcolor{rgb color={0.000,0.620,0.451}}
\draw[gp path] (10.092,7.538)--(11.228,7.538);
\draw[gp path] (1.850,1.645)--(2.947,1.835)--(4.044,2.145)--(5.142,2.575)--(6.239,3.128)%
  --(7.336,3.797)--(8.433,4.591)--(9.531,5.501)--(10.628,6.531)--(11.725,7.680);
\gppoint{gp mark 4}{(1.850,1.645)}
\gppoint{gp mark 4}{(2.947,1.835)}
\gppoint{gp mark 4}{(4.044,2.145)}
\gppoint{gp mark 4}{(5.142,2.575)}
\gppoint{gp mark 4}{(6.239,3.128)}
\gppoint{gp mark 4}{(7.336,3.797)}
\gppoint{gp mark 4}{(8.433,4.591)}
\gppoint{gp mark 4}{(9.531,5.501)}
\gppoint{gp mark 4}{(10.628,6.531)}
\gppoint{gp mark 4}{(11.725,7.680)}
\gppoint{gp mark 4}{(10.660,7.538)}
\gpcolor{color=gp lt color border}
\gpsetlinewidth{1.00}
\draw[gp path] (1.850,8.318)--(1.850,1.379)--(11.725,1.379)--(11.725,8.318)--cycle;
\gpdefrectangularnode{gp plot 1}{\pgfpoint{1.850cm}{1.379cm}}{\pgfpoint{11.725cm}{8.318cm}}
\end{tikzpicture}
\end{minipage}
 \caption{Change in execution time (left) and memory usage (right) for \textsc{Overshoot (Unbounded)} with the number of the entries of the signals}
 \label{fig:result_long_signal_unbounded}
\end{figure}

\paragraph{RQ1: Practical Performance}

\cref{fig:result_long_signal,fig:result_long_signal_unbounded} show the execution time and memory usage of our quantitative timed pattern matching for the TSWAs $\mathcal{W}$ 
and signals $\sigma$.
Here, we fixed the sampling frequency to be 0.1 Hz and changed the duration $|\sigma|$ of the signal from 60,000 s to 600,000 s in \textsc{Overshoot} and \textsc{Ringing}, and from 1,000 s to 10,000 s in \textsc{Overshoot (Unbounded)}.

In \cref{fig:result_long_signal}, we observe that \cref{alg:qtpm} handles the log with 60,000 entries in less than 20 s with less than 7.1 MiB of memory usage for \textsc{Overshoot},
and in about 1 or 2 minutes with less than 7.8 MiB of memory usage for \textsc{Ringing}.
In \cref{fig:result_long_signal_unbounded}, we observe that \cref{alg:qtpm} handles the log with 10,000 entries in less than 120 s with less than 250 MiB of memory usage for \textsc{Overshoot (Unbounded)}.
Although the quantitative timed pattern matching problem is complex, we conclude that its practical performance is realistic.

\paragraph{RQ2: Change in Speed and Memory Usage with Signal Size}
\cref{fig:result_long_signal,fig:result_long_signal_unbounded} show the execution time and memory usage of our quantitative timed pattern matching.
See RQ1 for the detail of our experimental setting.

In \cref{fig:result_long_signal}, for the TSAs with time-bound, we observe that the execution time is linear with respect to the duration $\duration{\signal}$ of the input signals and the memory usage is more or less 
constant with respect to \LongVersion{the duration }$\duration{\signal}$\LongVersion{ of the input signals}. 
This performance is essential for a monitor to keep monitoring a running system.

In \cref{fig:result_long_signal_unbounded}, for the TSA without any time-bound, we observe that the execution time is cubic and the memory usage is quadratic with respect to the number of the entries in $\duration{\signal}$.
The memory usage increases quadratically with the number of the entries because 
the intermediate weight $\incrementalWeight_j$ has an entry for each initial interval $[\tau_i,\tau_{i+1})$ of the trimming and for each interval $[\tau_k,\tau_{k+1})$ where the transition occurred.
The execution time increases cubically with respect to the number of the entries because the shortest distance is computed for each entry of $\incrementalWeight_j$.
However, we note that our quantitative timed pattern matching still works when the number of the entries is relatively small.

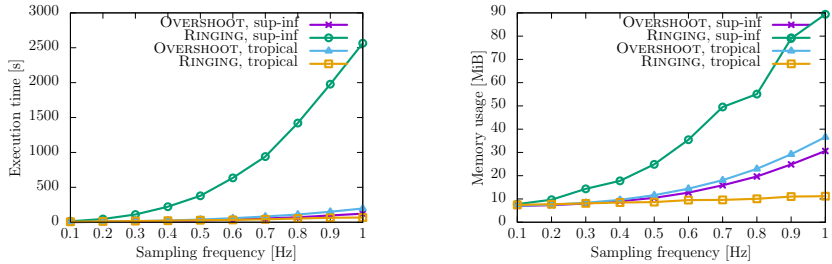
\begin{figure}[tbp]
 \begin{minipage}{0.49\textwidth} 
 \centering
 \scalebox{0.40}{\begin{tikzpicture}[gnuplot]
\tikzset{every node/.append style={font={\fontsize{14.0pt}{16.8pt}\selectfont}}}
\path (0.000,0.000) rectangle (12.500,8.750);
\gpcolor{color=gp lt color border}
\gpsetlinetype{gp lt border}
\gpsetdashtype{gp dt solid}
\gpsetlinewidth{1.00}
\draw[gp path] (2.108,1.379)--(2.288,1.379);
\draw[gp path] (11.725,1.379)--(11.545,1.379);
\node[gp node right,font={\Large}] at (1.850,1.379) {$0$};
\draw[gp path] (2.108,2.536)--(2.288,2.536);
\draw[gp path] (11.725,2.536)--(11.545,2.536);
\node[gp node right,font={\Large}] at (1.850,2.536) {$500$};
\draw[gp path] (2.108,3.692)--(2.288,3.692);
\draw[gp path] (11.725,3.692)--(11.545,3.692);
\node[gp node right,font={\Large}] at (1.850,3.692) {$1000$};
\draw[gp path] (2.108,4.849)--(2.288,4.849);
\draw[gp path] (11.725,4.849)--(11.545,4.849);
\node[gp node right,font={\Large}] at (1.850,4.849) {$1500$};
\draw[gp path] (2.108,6.005)--(2.288,6.005);
\draw[gp path] (11.725,6.005)--(11.545,6.005);
\node[gp node right,font={\Large}] at (1.850,6.005) {$2000$};
\draw[gp path] (2.108,7.162)--(2.288,7.162);
\draw[gp path] (11.725,7.162)--(11.545,7.162);
\node[gp node right,font={\Large}] at (1.850,7.162) {$2500$};
\draw[gp path] (2.108,8.318)--(2.288,8.318);
\draw[gp path] (11.725,8.318)--(11.545,8.318);
\node[gp node right,font={\Large}] at (1.850,8.318) {$3000$};
\draw[gp path] (2.108,1.379)--(2.108,1.559);
\draw[gp path] (2.108,8.318)--(2.108,8.138);
\node[gp node center,font={\Large}] at (2.108,0.948) {$0.1$};
\draw[gp path] (3.177,1.379)--(3.177,1.559);
\draw[gp path] (3.177,8.318)--(3.177,8.138);
\node[gp node center,font={\Large}] at (3.177,0.948) {$0.2$};
\draw[gp path] (4.245,1.379)--(4.245,1.559);
\draw[gp path] (4.245,8.318)--(4.245,8.138);
\node[gp node center,font={\Large}] at (4.245,0.948) {$0.3$};
\draw[gp path] (5.314,1.379)--(5.314,1.559);
\draw[gp path] (5.314,8.318)--(5.314,8.138);
\node[gp node center,font={\Large}] at (5.314,0.948) {$0.4$};
\draw[gp path] (6.382,1.379)--(6.382,1.559);
\draw[gp path] (6.382,8.318)--(6.382,8.138);
\node[gp node center,font={\Large}] at (6.382,0.948) {$0.5$};
\draw[gp path] (7.451,1.379)--(7.451,1.559);
\draw[gp path] (7.451,8.318)--(7.451,8.138);
\node[gp node center,font={\Large}] at (7.451,0.948) {$0.6$};
\draw[gp path] (8.519,1.379)--(8.519,1.559);
\draw[gp path] (8.519,8.318)--(8.519,8.138);
\node[gp node center,font={\Large}] at (8.519,0.948) {$0.7$};
\draw[gp path] (9.588,1.379)--(9.588,1.559);
\draw[gp path] (9.588,8.318)--(9.588,8.138);
\node[gp node center,font={\Large}] at (9.588,0.948) {$0.8$};
\draw[gp path] (10.656,1.379)--(10.656,1.559);
\draw[gp path] (10.656,8.318)--(10.656,8.138);
\node[gp node center,font={\Large}] at (10.656,0.948) {$0.9$};
\draw[gp path] (11.725,1.379)--(11.725,1.559);
\draw[gp path] (11.725,8.318)--(11.725,8.138);
\node[gp node center,font={\Large}] at (11.725,0.948) {$1$};
\draw[gp path] (2.108,8.318)--(2.108,1.379)--(11.725,1.379)--(11.725,8.318)--cycle;
\node[gp node center,rotate=-270,font={\fontsize{14.0pt}{16.8pt}\selectfont}] at (0.387,4.848) {Execution time [s]};
\node[gp node center,font={\fontsize{14.0pt}{16.8pt}\selectfont}] at (6.916,0.302) {Sampling frequency [Hz]};
\node[gp node right,font={\fontsize{14.0pt}{16.8pt}\selectfont}] at (9.739,7.922) {\textsc{Overshoot}, sup-inf};
\gpcolor{rgb color={0.580,0.000,0.827}}
\gpsetlinewidth{5.00}
\draw[gp path] (9.997,7.922)--(11.209,7.922);
\draw[gp path] (2.108,1.383)--(3.177,1.390)--(4.245,1.402)--(5.314,1.417)--(6.382,1.438)%
  --(7.451,1.467)--(8.519,1.503)--(9.588,1.543)--(10.656,1.603)--(11.725,1.662);
\gpsetpointsize{8.00}
\gppoint{gp mark 2}{(2.108,1.383)}
\gppoint{gp mark 2}{(3.177,1.390)}
\gppoint{gp mark 2}{(4.245,1.402)}
\gppoint{gp mark 2}{(5.314,1.417)}
\gppoint{gp mark 2}{(6.382,1.438)}
\gppoint{gp mark 2}{(7.451,1.467)}
\gppoint{gp mark 2}{(8.519,1.503)}
\gppoint{gp mark 2}{(9.588,1.543)}
\gppoint{gp mark 2}{(10.656,1.603)}
\gppoint{gp mark 2}{(11.725,1.662)}
\gppoint{gp mark 2}{(10.603,7.922)}
\gpcolor{color=gp lt color border}
\node[gp node right,font={\fontsize{14.0pt}{16.8pt}\selectfont}] at (9.739,7.491) {\textsc{Ringing}, sup-inf};
\gpcolor{rgb color={0.000,0.620,0.451}}
\draw[gp path] (9.997,7.491)--(11.209,7.491);
\draw[gp path] (2.108,1.409)--(3.177,1.485)--(4.245,1.634)--(5.314,1.893)--(6.382,2.254)%
  --(7.451,2.848)--(8.519,3.555)--(9.588,4.666)--(10.656,5.954)--(11.725,7.312);
\gppoint{gp mark 6}{(2.108,1.409)}
\gppoint{gp mark 6}{(3.177,1.485)}
\gppoint{gp mark 6}{(4.245,1.634)}
\gppoint{gp mark 6}{(5.314,1.893)}
\gppoint{gp mark 6}{(6.382,2.254)}
\gppoint{gp mark 6}{(7.451,2.848)}
\gppoint{gp mark 6}{(8.519,3.555)}
\gppoint{gp mark 6}{(9.588,4.666)}
\gppoint{gp mark 6}{(10.656,5.954)}
\gppoint{gp mark 6}{(11.725,7.312)}
\gppoint{gp mark 6}{(10.603,7.491)}
\gpcolor{color=gp lt color border}
\node[gp node right,font={\fontsize{14.0pt}{16.8pt}\selectfont}] at (9.739,7.060) {\textsc{Overshoot}, tropical};
\gpcolor{rgb color={0.337,0.706,0.914}}
\draw[gp path] (9.997,7.060)--(11.209,7.060);
\draw[gp path] (2.108,1.383)--(3.177,1.395)--(4.245,1.413)--(5.314,1.435)--(6.382,1.467)%
  --(7.451,1.512)--(8.519,1.568)--(9.588,1.634)--(10.656,1.726)--(11.725,1.834);
\gppoint{gp mark 8}{(2.108,1.383)}
\gppoint{gp mark 8}{(3.177,1.395)}
\gppoint{gp mark 8}{(4.245,1.413)}
\gppoint{gp mark 8}{(5.314,1.435)}
\gppoint{gp mark 8}{(6.382,1.467)}
\gppoint{gp mark 8}{(7.451,1.512)}
\gppoint{gp mark 8}{(8.519,1.568)}
\gppoint{gp mark 8}{(9.588,1.634)}
\gppoint{gp mark 8}{(10.656,1.726)}
\gppoint{gp mark 8}{(11.725,1.834)}
\gppoint{gp mark 8}{(10.603,7.060)}
\gpcolor{color=gp lt color border}
\node[gp node right,font={\fontsize{14.0pt}{16.8pt}\selectfont}] at (9.739,6.629) {\textsc{Ringing}, tropical};
\gpcolor{rgb color={0.902,0.624,0.000}}
\draw[gp path] (9.997,6.629)--(11.209,6.629);
\draw[gp path] (2.108,1.394)--(3.177,1.406)--(4.245,1.417)--(5.314,1.429)--(6.382,1.446)%
  --(7.451,1.455)--(8.519,1.475)--(9.588,1.500)--(10.656,1.526)--(11.725,1.534);
\gppoint{gp mark 4}{(2.108,1.394)}
\gppoint{gp mark 4}{(3.177,1.406)}
\gppoint{gp mark 4}{(4.245,1.417)}
\gppoint{gp mark 4}{(5.314,1.429)}
\gppoint{gp mark 4}{(6.382,1.446)}
\gppoint{gp mark 4}{(7.451,1.455)}
\gppoint{gp mark 4}{(8.519,1.475)}
\gppoint{gp mark 4}{(9.588,1.500)}
\gppoint{gp mark 4}{(10.656,1.526)}
\gppoint{gp mark 4}{(11.725,1.534)}
\gppoint{gp mark 4}{(10.603,6.629)}
\gpcolor{color=gp lt color border}
\gpsetlinewidth{1.00}
\draw[gp path] (2.108,8.318)--(2.108,1.379)--(11.725,1.379)--(11.725,8.318)--cycle;
\gpdefrectangularnode{gp plot 1}{\pgfpoint{2.108cm}{1.379cm}}{\pgfpoint{11.725cm}{8.318cm}}
\end{tikzpicture}
\end{minipage}
 \begin{minipage}{0.49\textwidth} 
 \centering
 \scalebox{0.40}{\begin{tikzpicture}[gnuplot]
\tikzset{every node/.append style={font={\fontsize{14.0pt}{16.8pt}\selectfont}}}
\path (0.000,0.000) rectangle (12.500,8.750);
\gpcolor{color=gp lt color border}
\gpsetlinetype{gp lt border}
\gpsetdashtype{gp dt solid}
\gpsetlinewidth{1.00}
\draw[gp path] (1.592,1.379)--(1.772,1.379);
\draw[gp path] (11.725,1.379)--(11.545,1.379);
\node[gp node right,font={\Large}] at (1.334,1.379) {$0$};
\draw[gp path] (1.592,2.150)--(1.772,2.150);
\draw[gp path] (11.725,2.150)--(11.545,2.150);
\node[gp node right,font={\Large}] at (1.334,2.150) {$10$};
\draw[gp path] (1.592,2.921)--(1.772,2.921);
\draw[gp path] (11.725,2.921)--(11.545,2.921);
\node[gp node right,font={\Large}] at (1.334,2.921) {$20$};
\draw[gp path] (1.592,3.692)--(1.772,3.692);
\draw[gp path] (11.725,3.692)--(11.545,3.692);
\node[gp node right,font={\Large}] at (1.334,3.692) {$30$};
\draw[gp path] (1.592,4.463)--(1.772,4.463);
\draw[gp path] (11.725,4.463)--(11.545,4.463);
\node[gp node right,font={\Large}] at (1.334,4.463) {$40$};
\draw[gp path] (1.592,5.234)--(1.772,5.234);
\draw[gp path] (11.725,5.234)--(11.545,5.234);
\node[gp node right,font={\Large}] at (1.334,5.234) {$50$};
\draw[gp path] (1.592,6.005)--(1.772,6.005);
\draw[gp path] (11.725,6.005)--(11.545,6.005);
\node[gp node right,font={\Large}] at (1.334,6.005) {$60$};
\draw[gp path] (1.592,6.776)--(1.772,6.776);
\draw[gp path] (11.725,6.776)--(11.545,6.776);
\node[gp node right,font={\Large}] at (1.334,6.776) {$70$};
\draw[gp path] (1.592,7.547)--(1.772,7.547);
\draw[gp path] (11.725,7.547)--(11.545,7.547);
\node[gp node right,font={\Large}] at (1.334,7.547) {$80$};
\draw[gp path] (1.592,8.318)--(1.772,8.318);
\draw[gp path] (11.725,8.318)--(11.545,8.318);
\node[gp node right,font={\Large}] at (1.334,8.318) {$90$};
\draw[gp path] (1.592,1.379)--(1.592,1.559);
\draw[gp path] (1.592,8.318)--(1.592,8.138);
\node[gp node center,font={\Large}] at (1.592,0.948) {$0.1$};
\draw[gp path] (2.718,1.379)--(2.718,1.559);
\draw[gp path] (2.718,8.318)--(2.718,8.138);
\node[gp node center,font={\Large}] at (2.718,0.948) {$0.2$};
\draw[gp path] (3.844,1.379)--(3.844,1.559);
\draw[gp path] (3.844,8.318)--(3.844,8.138);
\node[gp node center,font={\Large}] at (3.844,0.948) {$0.3$};
\draw[gp path] (4.970,1.379)--(4.970,1.559);
\draw[gp path] (4.970,8.318)--(4.970,8.138);
\node[gp node center,font={\Large}] at (4.970,0.948) {$0.4$};
\draw[gp path] (6.096,1.379)--(6.096,1.559);
\draw[gp path] (6.096,8.318)--(6.096,8.138);
\node[gp node center,font={\Large}] at (6.096,0.948) {$0.5$};
\draw[gp path] (7.221,1.379)--(7.221,1.559);
\draw[gp path] (7.221,8.318)--(7.221,8.138);
\node[gp node center,font={\Large}] at (7.221,0.948) {$0.6$};
\draw[gp path] (8.347,1.379)--(8.347,1.559);
\draw[gp path] (8.347,8.318)--(8.347,8.138);
\node[gp node center,font={\Large}] at (8.347,0.948) {$0.7$};
\draw[gp path] (9.473,1.379)--(9.473,1.559);
\draw[gp path] (9.473,8.318)--(9.473,8.138);
\node[gp node center,font={\Large}] at (9.473,0.948) {$0.8$};
\draw[gp path] (10.599,1.379)--(10.599,1.559);
\draw[gp path] (10.599,8.318)--(10.599,8.138);
\node[gp node center,font={\Large}] at (10.599,0.948) {$0.9$};
\draw[gp path] (11.725,1.379)--(11.725,1.559);
\draw[gp path] (11.725,8.318)--(11.725,8.138);
\node[gp node center,font={\Large}] at (11.725,0.948) {$1$};
\draw[gp path] (1.592,8.318)--(1.592,1.379)--(11.725,1.379)--(11.725,8.318)--cycle;
\node[gp node center,rotate=-270,font={\fontsize{14.0pt}{16.8pt}\selectfont}] at (0.387,4.848) {Memory usage [MiB]};
\node[gp node center,font={\fontsize{14.0pt}{16.8pt}\selectfont}] at (6.658,0.302) {Sampling frequency [Hz]};
\node[gp node right,font={\fontsize{14.0pt}{16.8pt}\selectfont}] at (9.739,7.922) {\textsc{Overshoot}, sup-inf};
\gpcolor{rgb color={0.580,0.000,0.827}}
\gpsetlinewidth{5.00}
\draw[gp path] (9.997,7.922)--(11.209,7.922);
\draw[gp path] (1.592,1.919)--(2.718,1.945)--(3.844,1.999)--(4.970,2.077)--(6.096,2.186)%
  --(7.221,2.359)--(8.347,2.603)--(9.473,2.895)--(10.599,3.294)--(11.725,3.740);
\gpsetpointsize{8.00}
\gppoint{gp mark 2}{(1.592,1.919)}
\gppoint{gp mark 2}{(2.718,1.945)}
\gppoint{gp mark 2}{(3.844,1.999)}
\gppoint{gp mark 2}{(4.970,2.077)}
\gppoint{gp mark 2}{(6.096,2.186)}
\gppoint{gp mark 2}{(7.221,2.359)}
\gppoint{gp mark 2}{(8.347,2.603)}
\gppoint{gp mark 2}{(9.473,2.895)}
\gppoint{gp mark 2}{(10.599,3.294)}
\gppoint{gp mark 2}{(11.725,3.740)}
\gppoint{gp mark 2}{(10.603,7.922)}
\gpcolor{color=gp lt color border}
\node[gp node right,font={\fontsize{14.0pt}{16.8pt}\selectfont}] at (9.739,7.491) {\textsc{Ringing}, sup-inf};
\gpcolor{rgb color={0.000,0.620,0.451}}
\draw[gp path] (9.997,7.491)--(11.209,7.491);
\draw[gp path] (1.592,1.977)--(2.718,2.125)--(3.844,2.485)--(4.970,2.750)--(6.096,3.297)%
  --(7.221,4.113)--(8.347,5.198)--(9.473,5.626)--(10.599,7.468)--(11.725,8.272);
\gppoint{gp mark 6}{(1.592,1.977)}
\gppoint{gp mark 6}{(2.718,2.125)}
\gppoint{gp mark 6}{(3.844,2.485)}
\gppoint{gp mark 6}{(4.970,2.750)}
\gppoint{gp mark 6}{(6.096,3.297)}
\gppoint{gp mark 6}{(7.221,4.113)}
\gppoint{gp mark 6}{(8.347,5.198)}
\gppoint{gp mark 6}{(9.473,5.626)}
\gppoint{gp mark 6}{(10.599,7.468)}
\gppoint{gp mark 6}{(11.725,8.272)}
\gppoint{gp mark 6}{(10.603,7.491)}
\gpcolor{color=gp lt color border}
\node[gp node right,font={\fontsize{14.0pt}{16.8pt}\selectfont}] at (9.739,7.060) {\textsc{Overshoot}, tropical};
\gpcolor{rgb color={0.337,0.706,0.914}}
\draw[gp path] (9.997,7.060)--(11.209,7.060);
\draw[gp path] (1.592,1.927)--(2.718,1.959)--(3.844,2.023)--(4.970,2.120)--(6.096,2.275)%
  --(7.221,2.491)--(8.347,2.772)--(9.473,3.147)--(10.599,3.631)--(11.725,4.200);
\gppoint{gp mark 8}{(1.592,1.927)}
\gppoint{gp mark 8}{(2.718,1.959)}
\gppoint{gp mark 8}{(3.844,2.023)}
\gppoint{gp mark 8}{(4.970,2.120)}
\gppoint{gp mark 8}{(6.096,2.275)}
\gppoint{gp mark 8}{(7.221,2.491)}
\gppoint{gp mark 8}{(8.347,2.772)}
\gppoint{gp mark 8}{(9.473,3.147)}
\gppoint{gp mark 8}{(10.599,3.631)}
\gppoint{gp mark 8}{(11.725,4.200)}
\gppoint{gp mark 8}{(10.603,7.060)}
\gpcolor{color=gp lt color border}
\node[gp node right,font={\fontsize{14.0pt}{16.8pt}\selectfont}] at (9.739,6.629) {\textsc{Ringing}, tropical};
\gpcolor{rgb color={0.902,0.624,0.000}}
\draw[gp path] (9.997,6.629)--(11.209,6.629);
\draw[gp path] (1.592,1.955)--(2.718,1.975)--(3.844,2.004)--(4.970,2.032)--(6.096,2.046)%
  --(7.221,2.112)--(8.347,2.119)--(9.473,2.155)--(10.599,2.229)--(11.725,2.241);
\gppoint{gp mark 4}{(1.592,1.955)}
\gppoint{gp mark 4}{(2.718,1.975)}
\gppoint{gp mark 4}{(3.844,2.004)}
\gppoint{gp mark 4}{(4.970,2.032)}
\gppoint{gp mark 4}{(6.096,2.046)}
\gppoint{gp mark 4}{(7.221,2.112)}
\gppoint{gp mark 4}{(8.347,2.119)}
\gppoint{gp mark 4}{(9.473,2.155)}
\gppoint{gp mark 4}{(10.599,2.229)}
\gppoint{gp mark 4}{(11.725,2.241)}
\gppoint{gp mark 4}{(10.603,6.629)}
\gpcolor{color=gp lt color border}
\gpsetlinewidth{1.00}
\draw[gp path] (1.592,8.318)--(1.592,1.379)--(11.725,1.379)--(11.725,8.318)--cycle;
\gpdefrectangularnode{gp plot 1}{\pgfpoint{1.592cm}{1.379cm}}{\pgfpoint{11.725cm}{8.318cm}}
\end{tikzpicture}
\end{minipage}
 \caption{Change in execution time (left) and memory usage (right) for \textsc{Overshoot} and \textsc{Ringing} with the sampling frequency}
\label{fig:result_high_freq}
\end{figure}
\paragraph{RQ3: Change in Speed and Memory Usage with Sampling Frequency}
\cref{fig:result_high_freq} shows the execution time and memory usage\LongVersion{ of our quantitative timed pattern matching} for each TSWA $\TSWA$ and signal $\signal$ of \textsc{Overshoot} and \textsc{Ringing}.
Here, we fixed the number of the entries to be 6,000 and changed the sampling frequency from 0.1 Hz to 1.0 Hz.

In \cref{fig:result_high_freq}, we observe that the execution time is cubic, and the memory usage is more or less quadratic with respect to the sampling frequency of the signals.
This is because the number of the entries in a certain duration is linear to the sampling frequency, 
which increases the number of the reachability states of the WSTTSs quadratically. 
Despite the steep curve of the execution time, we also observe that the execution time is smaller than the duration of the signal.
Therefore, our algorithm is online capable at least for these sampling frequencies.

\section{Conclusions and future work}
\label{sec:conclusions_and_future_work}

Using an automata-based approach, we proposed an online algorithm for quantitative timed pattern matching.\LongVersion{ The key idea of this approach is the reduction to the shortest distance of a weighted graph using zones.}

Comparison of the expressiveness of TSWAs with other formalisms \eg{} \emph{signal temporal logic}~\cite{DBLP:conf/formats/MalerN04} or \emph{signal regular expressions}~\cite{DBLP:conf/formats/BakhirkinFMU17} is future work.
\LongVersion{Another future work is the comparison with the quantitative semantics based on the distance between traces presented in~\cite{DBLP:journals/tcad/JaksicBGN18}.}
\newpage
\ifdefined\VersionLong
	\bibliographystyle{alpha} 
	\newcommand{\IJFCS}{International Journal of Foundations of Computer Science}
	\newcommand{\JLAP}{Journal of Logic and Algebraic Programming}
	\newcommand{\LNCS}{Lecture Notes in Computer Science}
	\newcommand{\STTT}{International Journal on Software Tools for Technology Transfer}
	\newcommand{\ToPNoC}{Transactions on Petri Nets and Other Models of Concurrency}
\else
	\bibliographystyle{splncs04} 
	\newcommand{\IJFCS}{International Journal of Foundations of Computer Science}
	\newcommand{\JLAP}{Journal of Logic and Algebraic Programming}
	\newcommand{\LNCS}{Lecture Notes in Computer Science}
	\newcommand{\STTT}{International Journal on Software Tools for Technology Transfer}
	\newcommand{\ToPNoC}{Transactions on Petri Nets and Other Models of Concurrency}
\fi
\bibliography{dblp_refs}

\LongVersion{\newpage
\appendix
\section{Omitted Proof}

\begin{definition}
 [path value]
 For a WTTS $\WTTSWithInside$, a sequence $\WTTSstate_0,\WTTSstate_1,\dots,\WTTSstate_n$ of $\WTTSState$ is a \emph{path} of $\WTTS$
 if we have $\WTTSstate_0\WTTSTransitionRel\WTTSstate_1\WTTSTransitionRel\dots\WTTSTransitionRel\WTTSstate_n$.
 For a WTTS $\WTTSWithInside$ and a path $\WTTSPath = \WTTSstate_0,\WTTSstate_1,\dots,\WTTSstate_n$ of $\WTTS$, the \emph{path value} is 
 $\pathValue(\WTTSPath) = \bigotimes_{i=1}^n \WTTSWeight(\WTTSstate_{i-1},\WTTSstate_{i})$.
\end{definition}

For any WTTS $\WTTSWithInside$, we have
 $\traceValue(\WTTS) = \bigoplus_{\WTTSPath\in\ARun(\WTTS)} \pathValue(\WTTSPath)$, where $\ARun(\WTTS)$ is the set of paths of $\WTTSstate_0,\WTTSstate_1,\dots,\WTTSstate_n$ of $\WTTS$ satisfying $\WTTSstate_0\in\InitWTTSState$ and $\WTTSstate_n\in\AccWTTSState$.

\subsection{Finiteness of the reachable part of WSTTSs}
\label{subsec:finiteness_proof}

For a WSTTS $\WSTTSWithInside$, we denote the reachable set by $\reach = \WSTTSInitState \cup \{\WSTTSstate \in \WSTTSState\mid \exists \WSTTSstate_0 \in \WSTTSInitState,\WSTTSstate_1,\WSTTSstate_2,\dots,\WSTTSstate_m \in \WSTTSState.\,\WSTTSstate_0,\WSTTSstate_1,\dots,\WSTTSstate_m,\WSTTSstate \text{ is a path of $\WSTTS$} \}$. 

\begin{lemma}
 \label{lemma:zone_bounded}
 Let $\WSTTSWithInside$ be a WSTTS of a signal $\signalWithInside$ and a TSWA $\TSWA$.
 For any $(\loc,\zone,\dvalSeq) \in \reach$, $\cval\in\zone$, and $\clock\in\Clock$, we have $0 \le \cval(\clock) \leq \duration{\signal}$.
\end{lemma}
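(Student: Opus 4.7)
\begin{myproof}[Proof plan]
The plan is to prove a slightly stronger invariant, namely that for every reachable symbolic state $(\loc,\zone,\dvalSeq)\in\reach$ and every valuation $\cval\in\zone$ one has $0\le \cval(\clock)\le \cval(\absClock)$ for all $\clock\in\Clock$. Together with the built-in constraint $\cval(\absClock)\le \duration{\signal}$ in the definition of $\WSTTSState$ (\cref{def:weighted_zone_graph}), this immediately gives the stated bounds.

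I would prove the invariant by induction on the length $k$ of a shortest path from $\WSTTSInitState$ to $(\loc,\zone,\dvalSeq)$. For $k=0$ the state is of the form $(\initLoc,\{\zerovalue[\ClockWithAbs]\},\varepsilon)$, so the single valuation $\zerovalue[\ClockWithAbs]$ assigns $0$ to every clock, and the invariant holds trivially. For the step, suppose $(\loc,\zone,\dvalSeq)$ satisfies the invariant and that $\bigl((\loc,\zone,\dvalSeq),(\loc',\zone',\dvalSeq['])\bigr)\in\WSTTSTransition$.

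There are three cases, following the three clauses of \cref{def:weighted_zone_graph}. In the TSA-transition case, each $\cval'\in\zone'$ is of the form $\cval[\resets:=0]$ for some $\cval\in\zone$ with $\cval\models\guard$; since $\absClock\notin\resets$ (it is a fresh clock not in $\Clock$), $\cval'(\absClock)=\cval(\absClock)$, while for $\clock\in\Clock$ we either have $\cval'(\clock)=0\le\cval'(\absClock)$ or $\cval'(\clock)=\cval(\clock)\le\cval(\absClock)=\cval'(\absClock)$ by the induction hypothesis. In the (punctual or non-punctual) time-elapse cases, each $\cval'\in\zone'$ equals $\cval+\relativeTime$ for some $\cval\in\zone$ and $\relativeTime\in\Rp$; since addition of $\relativeTime$ is uniform across all clocks including $\absClock$, the inequality $\cval(\clock)\le\cval(\absClock)$ is preserved and non-negativity is obvious. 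Closing under these three cases yields the invariant, and intersecting $\zone'$ with $M_i$ or $M_{i,=}$ only shrinks the zone, so the invariant is maintained.

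No single step looks particularly delicate; the main thing to be careful about is to verify that resets in $\Transition$ cannot touch the fresh clock $\absClock$, which is immediate from \cref{def:tsa} since $\resets\subseteq\Clock$ and $\absClock\notin\Clock$. Once the invariant $\cval(\clock)\le\cval(\absClock)$ is established, the upper bound $\cval(\clock)\le\duration{\signal}$ follows from the state-space restriction in \cref{def:weighted_zone_graph}, and the lower bound $\cval(\clock)\ge 0$ from the base case together with the fact that time-elapse only adds non-negative reals and resets set clocks to $0$.
\end{myproof}
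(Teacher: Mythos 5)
Your proposal is correct and follows essentially the same route as the paper's own proof: both establish the invariant $\cval(\clock)\le\cval(\absClock)$ by induction over reachable states, using that $\absClock\notin\resets$ in the discrete-transition case and that time elapse shifts all clocks uniformly, and then conclude via the constraint $\cval(\absClock)\le\duration{\signal}$ built into $\WSTTSState$. The only cosmetic difference is that you split the time-elapse case into punctual and non-punctual subcases where the paper handles them together by casing on whether the target's word component is $\varepsilon$.
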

\begin{proof}
 Since for any $(\loc,\zone,\dvalSeq)\in\WSTTSState$ and $\cval\in\zone$, we have $\cval(\absClock) \leq \duration{\signal}$, it suffice to prove $\cval(\clock) \le \cval(\absClock)$ for any $\clock\in\Clock$.
  Let $(\loc,\zone,\dvalSeq) \in \reach$.
 If $(\loc,\zone,\dvalSeq) \in \WSTTSInitState$, we have $\zone = \zerovalue[\ClockWithAbs]$ and for any $\cval\in\zone$ and for any $\clock\in\Clock$, we have $\cval(\clock) = \cval(\absClock) = 0$.

 Assume $(\loc,\zone,\dvalSeq) \not\in \WSTTSInitState$ and let $(\loc',\zone',\dvalSeq') \in \reach$ satisfying $\bigl( (\loc',\zone',\dvalSeq'), (\loc,\zone,\dvalSeq)\bigr) \in \WSTTSTransition$.
 If $\dvalSeq = \varepsilon$, there exists $(\loc,\guard,\resets,\loc')\in\Transition$ satisfying $\zone= \{ \cval'[\resets:=0] \mid \cval'\in\zone', \cval'\models\guard \}$.
 Since $\absClock\not\in\rho$, $\forall \cval'\in\zone',\clock\in\Clock.\, \cval'(\clock) \leq \cval'(\absClock)$ implies
 $\forall \cval\in\zone,\clock\in\Clock.\, \cval(\clock) \leq \cval(\absClock)$.

 If $\dvalSeq \neq \varepsilon$, for any $\cval \in\zone$, there are $\cval'\in\zone'$ and $\relativeTime\in\Rp$ satisfying $\cval = \cval' + \relativeTime$.
 Therefore, $\forall \cval'\in\zone',\clock\in\Clock.\, \cval'(\clock) \leq \cval'(\absClock)$ implies
 $\forall \cval\in\zone,\clock\in\Clock.\, \cval(\clock) \leq \cval(\absClock)$.
 \qed
\end{proof}

 For a nonempty zone $\zone\in\ZonesWithAbs$ and $\clock,\clock'\in \Clock \amalg \{\absClock,0\}$, we define ${\prec}_{\zone,\clock,\clock'} \in\{{<},{\leq}\}$ and $d_{\zone,\clock,\clock'} \in \R \amalg \{\infty\}$ be the smallest elements satisfying the following, where we define ${<}$ is smaller than ${\leq}$ and we denote $\cval(0) = 0$.
 \begin{displaymath}
  \zone = \Bigl\{\cval \, \Bigm|\, \bigwedge_{\clock,\clock' \in \Clock \amalg \{\absClock,0\}} (\cval(\clock) - \cval(\clock')) \prec_{\zone,\clock,\clock'} d_{\zone,\clock,\clock'}\Bigr\}
 \end{displaymath}

\begin{lemma}
\label{lemma:zone_discrete}
 Let $\WSTTSWithInside$ be a WSTTS of a signal $\signalWithInside$ and a TSWA $\TSWA$.
 For any $(\loc,\zone,\dvalSeq) \in \reach$ and $\clock,\clock'\in\Clock\amalg\{\absClock,0\}$, 
 we have $d_{\zone,\clock,\clock'} = \infty$ or
 there is $k_i \in \Z$ for each $i\in\{0,1,\dots,n\}$ satisfying $d_{\zone,\clock,\clock'} = k_0 + \sum_{i=1}^{n} (k_i (\sum_{j=1}^{i} \tau_j))$.
\end{lemma}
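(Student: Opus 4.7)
The plan is to prove, by induction on the length of a shortest path from an initial state of $\WSTTS$, that every zone $\zone$ appearing in $\reach$ has all canonical DBM entries $d_{\zone,\clock,\clock'}$ either equal to $\infty$ or lying in the $\Z$-module
\[
 \Lambda \;=\; \Z\cdot 1 \;+\; \sum_{i=1}^{n} \Z\cdot T_i \;\subseteq\; \R,
 \qquad \text{where } T_i \;=\; \textstyle\sum_{j=1}^{i}\tau_j.
\]
The two structural facts driving the argument are: (i) $\Lambda$ is closed under addition, subtraction, and negation, so any finite $\Z$-linear combination of elements of $\Lambda$ stays in $\Lambda$; and (ii) DBM canonicalization (Floyd--Warshall closure) produces each new entry as a finite sum of previously existing entries, hence preserves membership in $\Lambda \cup \{\infty\}$.

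For the base case, each initial state carries the zone $\{\zerovalue[\ClockWithAbs]\}$, whose canonical DBM has all entries equal to $0 \in \Lambda$. For the inductive step I would fix $(\loc',\zone',\dvalSeq') \in \reach$ satisfying the property and treat the three kinds of transitions of $\WSTTS$ in turn. For a transition of $\TSA$, the zone update $\zone = \{\cval'[\resets:=0] \mid \cval'\in\zone',\ \cval'\models\guard\}$ intersects $\zone'$ with a guard $\guard$ whose atomic constraints $\clock \bowtie d$ have integer constants $d \in \Znn \subseteq \Lambda$, and then sets to $0$ the DBM rows/columns indexed by clocks in $\resets$; both operations introduce only entries in $\Lambda$, and canonicalization keeps us in $\Lambda \cup\{\infty\}$ by (ii). For a (non-)punctual time elapse, the unbounded future operator $\{\cval'+\tau \mid \cval'\in\zone',\ \tau\in\Rp\}$ preserves every DBM entry $d_{\zone',\clock,\clock'}$ with $\clock,\clock'\neq 0$, sets every upper bound $d_{\zone',\clock,0}$ to $\infty$, and preserves lower bounds $d_{\zone',0,\clock}$; intersecting afterwards with $M_{i,=}$ or $M_i$ adds only constraints whose constants are among $\pm T_{i-1}, \pm T_i$ (with the convention $T_0 = 0$), which lie in $\Lambda$. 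Canonicalization then composes entries in $\Lambda \cup \{\infty\}$ via finite sums, and the induction hypothesis closes the step.

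The main obstacle is making the DBM bookkeeping for canonicalization rigorous: one must argue that the $\prec_{\zone,\clock,\clock'}$ and $d_{\zone,\clock,\clock'}$ specified uniquely in the displayed canonical representation just before the lemma can indeed be obtained by Floyd--Warshall closure of the ``obvious'' non-canonical DBM built by the above operations, and hence that each $d_{\zone,\clock,\clock'}$ is a finite integer combination of previously present constants. This is standard for timed-automaton zones, but the novelty here is the presence of the absolute-time clock $\absClock$ and the real-valued constants $T_i$ coming from the time-elapse intersections with $M_{i,=}$ and $M_i$; once one observes that these real constants enter with coefficient $\pm 1$, the $\Z$-module $\Lambda$ is preserved by every update, and the claim follows.
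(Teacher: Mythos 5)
Your proposal is correct and follows essentially the same route as the paper's proof: induction on reachability from $\WSTTSInitState$, with the inductive step observing that each zone update (guard intersection and reset for a transition of $\TSA$; the future operator followed by intersection with $M_{i,=}$ or $M_i$ for time elapse) only introduces constraint constants that are integers or of the form $\pm\sum_{j\le i}\tau_j$, and that canonicalization expresses each new $d_{\zone,\clock,\clock'}$ as a $\Z$-linear combination of previously present entries. Your packaging of the invariant as membership in the $\Z$-module $\Lambda\cup\{\infty\}$ is a slightly cleaner way of stating what the paper tracks coefficient-by-coefficient, but it is the same argument.
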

\begin{proof}
 If $(\loc,\zone,\dvalSeq) \in \WSTTSInitState$, we have $\zone = \zerovalue[\ClockWithAbs]$ and we have $d_{\zone,\clock,\clock'} = 0$ for each $\clock,\clock'\in\Clock\amalg\{\absClock,0\}$.

 Assume $(\loc,\zone,\dvalSeq) \not\in \WSTTSInitState$ and let $(\loc',\zone',\dvalSeq') \in \reach$ satisfying $\bigl( (\loc',\zone',\dvalSeq'), (\loc,\zone,\dvalSeq)\bigr) \in \WSTTSTransition$.
 If $\dvalSeq = \varepsilon$, there exists $(\loc,\guard,\resets,\loc')\in\Transition$ satisfying $\zone= \{ \cval'[\resets:=0] \mid \cval'\in\zone', \cval'\models\guard \}$.
 For each $\clock\in\rho$, we have $d_{\zone,\clock,0} = d_{\zone,0,\clock} = 0$.
 For each $\clock,\clock' \in \Clock \amalg \{\absClock,0\}$, $d_{\zone,\clock,\clock'}$ is the shortest distance in the graph interpretation of $\zone'$, 
 where for each $\clock\in\rho$, $d_{\zone',\clock,0}$ and $d_{\zone,0,\clock}$ are replaced with $0$. (See e.g.,~\cite{DBLP:conf/ac/BengtssonY03} for the graph interpretation of a zone.)
 Therefore, for each $\clock,\clock'\in\Clock\amalg\{\absClock,0\}$, there are $k_{\clock'',\clock'''} \in\Z$ satisfying $d_{\zone,\clock,\clock'} = \sum_{\clock'',\clock''' \in \Clock\amalg\{\absClock,0\}} k_{\clock'',\clock'''} d_{\zone',\clock'',\clock'''}$.
 By induction hypothesis, 
 For any $\clock,\clock'\in\Clock\amalg\{\absClock,0\}$, 
 we have $d_{\zone,\clock,\clock'} = \infty$ or
 there is $k_i \in \Z$ for each $i\in\{0,1,\dots,n\}$ satisfying $d_{\zone,\clock,\clock'} = k_0 + \sum_{i=1}^{n} (k_i (\sum_{j=1}^{i} \tau_j))$.

 If $\dvalSeq \neq \varepsilon$, $d_{\zone,\clock,\clock'}$ are computed by the following procedure.
 \begin{enumerate}
  \item For each $\clock\in\Clock$, we replace $(d_{\zone',\clock,0},{\prec}_{\zone',\clock,0})$ with $(\infty,{<})$.
  \item We replace $(d_{\zone',\absClock,0},{\prec}_{\zone',\clock,0})$ and $(d_{\zone',0,\absClock},{\prec}_{\zone',0,\clock})$ with 
        $\bigl(\sum_{j=0}^i\tau_j,{<}\bigr)$ and $\bigl(\sum_{j=0}^{i-1}\tau_j,{<}\bigr)$, or
        $\bigl(\sum_{j=0}^i\tau_j,{\le}\bigr)$ and $\bigl(\sum_{j=0}^{i}\tau_j,{\le}\bigr)$, respectively.
  \item We take the shortest distance in the graph interpretation of $\zone'$, where some weights are replaced by the above.
 \end{enumerate}
 Therefore, for each $\clock,\clock'\in\Clock\amalg\{\absClock,0\}$ and for each $i\in\{0,1,\dots,n\}$, there are $k_{\clock'',\clock'''} \in\Z$ and $k_i \in \Z$ satisfying the following.
 \[
  d_{\zone,\clock,\clock'} = \sum_{\clock'',\clock''' \in \Clock\amalg\{\absClock,0\}} k_{\clock'',\clock'''} d_{\zone',\clock'',\clock'''} + \sum_{i=1}^{n} (k_i (\sum_{j=1}^{i} \tau_j))
 \]
 By induction hypothesis, 
 For any $\clock,\clock'\in\Clock\amalg\{\absClock,0\}$, 
 we have $d_{\zone,\clock,\clock'} = \infty$ or
 there is $k_i \in \Z$ for each $i\in\{0,1,\dots,n\}$ satisfying $d_{\zone,\clock,\clock'} = k_0 + \sum_{i=1}^{n} (k_i (\sum_{j=1}^{i} \tau_j))$.
\qed
\end{proof}

\begin{lemma}
\label{lemma:dvalSeq_substring}
 For any WSTTS $\WSTTSWithInside$ and
 for any $(\loc,\zone,\dvalSeq) \in \reach$, either $\dvalSeq = \varepsilon$ holds or there is $\absoluteTime\in\Rnn$ such that for any $\cval\in\zone$, $\dvalSeq = \values(\signal([t,\cval(\absClock))))$ holds.
\end{lemma}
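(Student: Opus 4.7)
My plan is to proceed by induction on the length of a shortest path in $\WSTTS$ from an initial state to $(\loc,\zone,\dvalSeq)$. The base case is immediate: for any $(\loc,\zone,\dvalSeq) \in \WSTTSInitState$, the definition of $\WSTTSInitState$ forces $\dvalSeq = \varepsilon$, so the disjunction holds.

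For the inductive step I pick a predecessor $(\loc'',\zone'',\dvalSeq'') \in \reach$ with $\bigl((\loc'',\zone'',\dvalSeq''),(\loc,\zone,\dvalSeq)\bigr) \in \WSTTSTransition$ and split on the three kinds of transitions in \cref{def:weighted_zone_graph}. If the transition is a TSA transition, then $\dvalSeq = \varepsilon$ by definition and there is nothing to prove. In the remaining two cases (punctual or non-punctual time elapse) the definition supplies $\tilde{\cval}'' \in \zone''$ and $\tilde{\cval} \in \zone$ with $\dvalSeq = \dvalSeq'' \absConcat \values(\signal([\tilde{\cval}''(\absClock),\tilde{\cval}(\absClock))))$. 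I then set $t := \tilde{\cval}''(\absClock)$ when $\dvalSeq'' = \varepsilon$, and $t := t''$ supplied by the inductive hypothesis when $\dvalSeq'' \neq \varepsilon$.

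The heart of the proof is to verify $\dvalSeq = \values(\signal([t,\cval(\absClock))))$ for every $\cval \in \zone$. I will rely on two auxiliary facts. First, $\values(\cdot)$ distributes over signal concatenation, i.e.\ $\values(\signal_1 \cdot \signal_2) = \values(\signal_1) \absConcat \values(\signal_2)$, from which a routine calculation shows $\values(\signal([t,\cval(\absClock)))) = \values(\signal([t,\tilde{\cval}''(\absClock)))) \absConcat \values(\signal([\tilde{\cval}''(\absClock),\cval(\absClock))))$, matching $\dvalSeq''$ on the left factor by the inductive hypothesis (or by definition if $\dvalSeq'' = \varepsilon$). Second, every reachable zone confines $\cval(\absClock)$ to a single constant piece of $\signal$; this is a small side induction, since initial zones satisfy $\absClock = 0$, each time-elapse transition explicitly intersects with a single region $M_{i,=}$ or $M_i$, and TSA transitions do not touch $\absClock$. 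Thus $\cval(\absClock)$ either coincides with $\tilde{\cval}(\absClock)$ (punctual case) or lies with it inside the open interval $M_i$ on which $\signal \equiv a_i$, so $\values(\signal([\tilde{\cval}''(\absClock),\cval(\absClock))))$ equals $\values(\signal([\tilde{\cval}''(\absClock),\tilde{\cval}(\absClock))))$ and the two sides match.

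The main obstacle will be the non-punctual case: I must argue that varying the endpoint of the final half-open sub-interval $[\tilde{\cval}''(\absClock),\cval(\absClock))$ inside a single constant piece of $\signal$ does not change the output of $\values$. This will reduce to the absorbing identity $a \absConcat a = a$ together with $\values(a^{\relativeTime}) = a$ for any $\relativeTime > 0$, after which the distributivity fact closes the inductive step.
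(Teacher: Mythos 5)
Your proposal is correct and follows essentially the same route as the paper's proof: an induction over reachability, with the base case given by the initial states, a case split on the transition type (TSA transitions reset $\dvalSeq$ to $\varepsilon$, time elapses extend it by absorbing concatenation), and the inductive hypothesis supplying the left endpoint $t$. The paper's own argument is terser and leaves implicit the two auxiliary facts you spell out (distributivity of $\values$ over concatenation, and the confinement of each reachable zone's $\absClock$-projection to a single region $M_{i,=}$ or $M_i$); making them explicit is what actually justifies the ``for any $\cval\in\zone$'' quantifier, so your added detail strengthens rather than diverges from the paper's proof.
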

\begin{proof}
 Let $(\loc,\zone,\dvalSeq) \in \reach$.
 If $(\loc,\zone,\dvalSeq) \in \WSTTSInitState$, we have $\dvalSeq = \varepsilon$.

 If $(\loc,\zone,\dvalSeq) \not\in \WSTTSInitState$, there is $(\loc',\zone',\dvalSeq') \in \reach$ such that $\bigl( (\loc',\zone',\dvalSeq'), (\loc,\zone,\dvalSeq)\bigr) \in \WSTTSTransition$.
 If $\dvalSeq \neq \varepsilon$, we have $\dvalSeq = \dvalSeq' \absConcat \values(\signal([\cval(\absClock),\cval'(\absClock))))$.
 By induction hypothesis, there is $\cval'\in\zone'$ such that for any $\cval\in\zone$, we have $\dvalSeq = \values(\signal([\cval'(\absClock),\cval(\absClock))))$ or 
 there is $\absoluteTime\in\Rnn$ such that for any $\cval\in\zone$, $\dvalSeq = \values(\signal([t,\cval(\absClock))))$ holds.
 \qed
\end{proof}

\begin{theorem}
 [finiteness]
 \label{theorem:finiteness}
 For any WSTTS $\WSTTS$, there are only finitely many states reachable from $\WSTTSInitState$.
\end{theorem}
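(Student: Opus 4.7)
The plan is to factor $\reach\subseteq\Loc\times\ZonesWithAbs\times\DValSeq$ and argue that the reachable projection on each coordinate is finite. The location coordinate is finite by the definition of a TSA. For the data-valuation sequence coordinate, I invoke \cref{lemma:dvalSeq_substring}: every reachable $\dvalSeq$ is either $\varepsilon$ or a contiguous substring $\values(\signal([t,\cval(\absClock))))$ of $\values(\signal)$, and $\values(\signal)\in(\datavaluations)^\oast$ has at most $n$ symbols, so only finitely many substrings arise.

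For the zone coordinate I would use the canonical difference-bound matrix (DBM) representation: each $\zone\in\ZonesWithAbs$ is encoded by the matrix of tight bounds $(d_{\zone,\clock,\clock'},\prec_{\zone,\clock,\clock'})$ indexed by $(\clock,\clock')\in(\Clock\amalg\{\absClock,0\})^{2}$. Two ingredients combine. First, \cref{lemma:zone_bounded} places every reachable $\cval$ in $[0,\duration{\signal}]^{\Clock\amalg\{\absClock\}}$, so every finite DBM entry lies in $[-\duration{\signal},\duration{\signal}]$. Second, \cref{lemma:zone_discrete} restricts every finite entry to the shape $k_0+\sum_{i=1}^{n}k_iT_i$ with $k_i\in\Z$ and $T_i=\sum_{j=1}^{i}\tau_j$. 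Provided the integer coefficients are bounded, only finitely many DBM entries appear; together with the two values of $\prec$, this yields finitely many reachable zones and hence the claim.

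The main obstacle is bounding the coefficients $k_i$, since the boundedness of the real value $d_{\zone,\clock,\clock'}$ alone does not bound $|k_i|$ when the additive group $\Z+\Z T_1+\dots+\Z T_n$ can be dense in $\R$. My plan is to refine the inductive argument in the proof of \cref{lemma:zone_discrete}: each atomic zone update (reset, guarded transition, punctual or non-punctual time elapse, followed by shortest-path canonicalization) can increase each $|k_i|$ by only a structural constant depending on $\TSA$. Since reachability visits a finite number of combinatorial configurations (location, interval index of $\absClock$ among the $M_i$ and $M_{i,=}$, and data-valuation sequence), any execution either eventually repeats a combinatorial configuration with the same canonical DBM, or strictly shrinks the zone within the bounded box $[0,\duration{\signal}]^{\Clock\amalg\{\absClock\}}$; the latter cannot happen indefinitely because the set of admissible DBM entries is a discrete set when restricted by the coefficient bound propagated inductively. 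Combining finite $\Loc$, finite $\dvalSeq$, and finite zones yields the finiteness of $\reach$.
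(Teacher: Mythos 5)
Your decomposition is exactly the paper's: you reduce finiteness of $\reach$ to finiteness of the reachable projections onto $\Loc$, $\ZonesWithAbs$, and $\DValSeq$, invoking \cref{lemma:zone_bounded}, \cref{lemma:zone_discrete}, and \cref{lemma:dvalSeq_substring} in precisely the roles they play in the paper's proof, and your handling of the location and data-valuation-sequence coordinates coincides with the paper's. Where you diverge is that you decline to conclude finiteness of the zone coordinate directly from boundedness plus membership in the group $\Z+\Z T_1+\dots+\Z T_n$, on the grounds that this group can be dense in $\R$ when the $T_i$ are rationally independent. That objection is sound: the paper's proof of \cref{theorem:finiteness} makes exactly the inference you distrust, and with the integer coefficients $k_i$ unconstrained as in the statement of \cref{lemma:zone_discrete}, the bound $|d_{\zone,\clock,\clock'}|\le\duration{\signal}$ from \cref{lemma:zone_bounded} does not confine the entries to a finite set. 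So you have located a genuine weak point rather than manufactured one.

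Your proposed repair, however, is not a proof. A per-step bound of the form ``each update increases $|k_i|$ by a structural constant'' only yields a bound growing with the length of the reachability path, and paths in $\reach$ have unbounded length (arbitrarily many alternations of a discrete transition with a positive time elapse inside a single region $M_i$ are possible), so no global bound on the $k_i$ follows. The dichotomy you then assert --- every execution either revisits a combinatorial configuration with the same canonical DBM or strictly shrinks the zone --- is unjustified: a time elapse translates a zone rather than shrinking it, and a combinatorial configuration can recur with an incomparable DBM. Finally, the appeal to ``the set of admissible DBM entries is discrete when restricted by the coefficient bound propagated inductively'' presupposes the uniform coefficient bound you are in the middle of establishing, so the argument is circular. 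What is actually needed is a strengthening of \cref{lemma:zone_discrete} itself: a single constant $K$, depending only on $\TSWA$ and the signal length $n$ but not on the reachable state, with $\sum_{i=0}^{n}|k_i|\le K$ for every finite canonical entry. This can be extracted by redoing the induction in terms of the last-reset times $\cval(\absClock)-\cval(\clock)$ of the clocks, each of which is confined to an interval whose endpoints are among the $T_i$ shifted by guard constants; the set $\{k_0+\sum_i k_i T_i : \sum_i|k_i|\le K\}\cap[-\duration{\signal},\duration{\signal}]$ is then finite and the rest of your (and the paper's) argument goes through.
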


\begin{proof}
 The locations $\Loc$ is a finite set.
 By \cref{lemma:zone_bounded} and \cref{lemma:zone_discrete}, the number of zones appearing in $\reach$ is finitely many.
 By \cref{lemma:dvalSeq_substring}, the subsequences $\dvalSeq$ appearing in $\reach$ is finitely many.
 Therefore, $\reach$ is a finite set.
 \qed
\end{proof}

\subsection{Proof of \cref{corollary:trace_value_correctness}}

First, we define symbolic path value. 

\begin{definition}
 [symbolic path value]
 For a WSTTS $\WSTTSWithInside$, a sequence $\WSTTSstate_0,\WSTTSstate_1,\dots,\WSTTSstate_n$ of $\WSTTSState$ is a \emph{path} of $\WSTTS$
 if for any $i\in\{1,\dots,n\}$, we have $(\WSTTSstate_{i-1},\WSTTSstate_{i}) \in \WSTTSTransition$.
 For a WSTTS $\WSTTSWithInside$ and a path $\WSTTSPath=\WSTTSstate_0, \WSTTSstate_1, \dots \WSTTSstate_n$ of $\WSTTS$, the \emph{symbolic path value} is
 \begin{math}
  \symbolicPathValue(\WSTTSPath)= \bigotimes_{i=1}^n \WSTTSWeight(\WSTTSstate_{i-1},\WSTTSstate_{i})
 \end{math}.
\end{definition}

Similarly to the trace value $\traceValue(\WTTS)$, for any WSTTS $\WSTTSWithInside$, we have 
$\symbolicTraceValue(\WSTTS) = \bigoplus_{\WSTTSPath\in\ARun(\WSTTS)} \symbolicPathValue(\WSTTSPath)$,
where $\ARun(\WSTTS)$ is the set of paths of $\WSTTSstate_0,\WSTTSstate_1,\dots,\WSTTSstate_n$ of $\WSTTS$ satisfying $\WSTTSstate_0\in\WSTTSInitState$ and $\WSTTSstate_n\in\WSTTSAccState$.

For a semiring $\semiringWithInside$, we denote the canonical order by ${\preceq} \subseteq \semiringBase\times \semiringBase$, where $\semiringElem\preceq \semiringElem' \iff \semiringElem \semiringPlus \semiringElem'=\semiringElem'$. 
When $\semiring$ is idempotent, $\semiringElem=\semiringElem'$ if and only of $\semiringElem \preceq \semiringElem'$ and $\semiringElem' \preceq \semiringElem$ because:
if $\semiringElem = \semiringElem'$, we have
$\semiringElem \semiringPlus \semiringElem'=\semiringElem' \semiringPlus \semiringElem' = \semiringElem'$ and
$\semiringElem' \semiringPlus \semiringElem=\semiringElem \semiringPlus \semiringElem = \semiringElem$; and
if $\semiringElem \preceq \semiringElem'$ and $\semiringElem' \preceq \semiringElem$, we have $\semiringElem = \semiringElem \semiringPlus \semiringElem'= \semiringElem' \semiringPlus \semiringElem = \semiringElem'$.

For simplicity, we assume that for any signal $\signal = a_{1}^{\tau_{1}} a_{2}^{\tau_{2}} \dots a_{n}^{\tau_{n}} \in\signals$ and for any $i\in\{1,2,\dots,n-1\}$, we have $a_i \neq a_{i+1}$.

\begin{lemma}
 \label{lemma:WSTTS_widening}
 Let $\signalWithInside\in\signals$ be a signal, let $\TSWA$ be a TSWA, and let $\WSTTSWithInside$ be a WSTTS of $\signal$ and $\TSWA$.
 For any $(\loc,\zone_1,\dvalSeq), (\loc',\zone_1',\dvalSeq[']) \in \WSTTSState$ if $\bigl((\loc,\zone_1,\dvalSeq), (\loc',\zone_1',\dvalSeq['])\bigr) \in \WSTTSTransition$ holds, 
 for any $(\loc,\zone_2,\dvalSeq)\in \WSTTSState$ satisfying $\zone_1 \subseteq \zone_2$, there exists $(\loc',\zone_2',\dvalSeq[']) \in\WSTTSState$ satisfying $\zone_1' \subseteq \zone_2'$ and $\bigl((\loc,\zone_2,\dvalSeq), (\loc',\zone_2',\dvalSeq['])\bigr) \in \WSTTSTransition$.
\end{lemma}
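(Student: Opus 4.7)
The plan is to argue by case analysis on the type of WSTTS transition (as given in \cref{def:weighted_zone_graph}): transition of $\TSA$, punctual time elapse, and non-punctual time elapse. In each case, I will define $\zone_2'$ by applying exactly the same geometric construction to $\zone_2$ that was used to produce $\zone_1'$ from $\zone_1$, and then verify three things: $\zone_1' \subseteq \zone_2'$, membership $(\loc',\zone_2',\dvalSeq[']) \in \WSTTSState$, and that $\bigl((\loc,\zone_2,\dvalSeq),(\loc',\zone_2',\dvalSeq['])\bigr) \in \WSTTSTransition$.

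For a transition of $\TSA$, with $(\loc,\guard,\resets,\loc') \in \Transition$ and $\zone_1' = \{\cval[\resets:=0] \mid \cval \in \zone_1,\, \cval \models \guard\}$, I will set $\zone_2' = \{\cval[\resets:=0] \mid \cval \in \zone_2,\, \cval \models \guard\}$. Monotonicity of this operation is immediate from $\zone_1 \subseteq \zone_2$, giving $\zone_1' \subseteq \zone_2'$, and the transition clause is satisfied by construction. For the two time-elapse cases, I will take $\zone_2' = \{\cval + \relativeTime \mid \cval \in \zone_2,\, \relativeTime \in \Rp\} \cap M_{i}$ (or $M_{i,=}$), reusing the \emph{same} index $i$ as for $\zone_1'$; again $\zone_1 \subseteq \zone_2$ together with monotonicity of future-closure and intersection yields $\zone_1' \subseteq \zone_2'$.

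The membership $(\loc',\zone_2',\dvalSeq[']) \in \WSTTSState$ requires non-emptiness, the absolute-time bound $\cval(\absClock) \leq \duration{\signal}$, and the consistency condition that $\dvalSeq['] = \varepsilon$ or $\dvalSeq['] \absConcat \signal(\cval(\absClock)) = \dvalSeq[']$ for all $\cval \in \zone_2'$. Non-emptiness follows from $\zone_1' \subseteq \zone_2'$ and $\zone_1' \neq \emptyset$. The bound holds because in the TSA-transition case the $\absClock$ coordinate is unchanged (so it inherits the bound from $\zone_2$), and in the time-elapse cases the intersection with $M_i$ or $M_{i,=}$ forces $\cval(\absClock) < \sum_{j=0}^{i}\tau_j \leq \duration{\signal}$. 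The signal-consistency clause is the delicate step: in the TSA-transition case $\dvalSeq['] = \varepsilon$ so it is trivial; in the time-elapse cases, every $\cval \in \zone_2'$ has $\cval(\absClock)$ in the same signal piece indexed by $i$, so $\signal(\cval(\absClock)) = a_i$ is the same value used when constructing $\zone_1'$, and the absorbing concatenation $\dvalSeq \absConcat \values(\signal([\tilde{\cval}(\absClock),\tilde{\cval}'(\absClock))))$ depends only on $i$, not on the specific representatives. This is the main obstacle, but it is handled uniformly by observing that the choice of $M_i$ (or $M_{i,=}$) already pins down the signal value on the entire zone.

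Having verified these three points in each case, the transition clause from $(\loc,\zone_2,\dvalSeq)$ to $(\loc',\zone_2',\dvalSeq['])$ follows directly from \cref{def:weighted_zone_graph}, completing the proof. I expect the argument to be short once the case split is set up; the only subtle point is to reuse the same branch (same $\transition$, same index $i$, same $\dvalSeq[']$) that witnessed the original transition from $\zone_1$, rather than picking a fresh one for $\zone_2$.
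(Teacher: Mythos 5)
Your proposal is correct and follows essentially the same route as the paper's proof: a case split on the transition type, reusing the same witnessing transition of $\TSA$ (or the same set $M_{i}$/$M_{i,=}$) to define $\zone_2'$ from $\zone_2$, with monotonicity of the zone operations giving $\zone_1'\subseteq\zone_2'$ and hence non-emptiness. Your extra verification of membership in $\WSTTSState$ (the $\absClock$ bound and the signal-consistency clause) is a detail the paper leaves implicit, but it does not alter the argument.
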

\begin{proof}
 If $\dvalSeq['] = \varepsilon$, there exists $(\loc,\guard,\resets,\loc')\in\Transition$ satisfying $\zone_1'= \{ \cval[\resets:=0] \mid \cval\in\zone_1, \cval\models\guard \}$.
 Since $\zone_1 \subseteq \zone_2$, $\zone_2'= \{ \cval[\resets:=0] \mid \cval\in\zone_2, \cval\models\guard \}$ is nonempty and we have $\bigl((\loc,\zone_2,\dvalSeq), (\loc',\zone_2',\dvalSeq['])\bigr) \in \WSTTSTransition$.
 We also have $\zone'_1 \subseteq \zone'_2$.

 If $\dvalSeq['] \neq \varepsilon$, let $M$ be either $M_{i,=} = \{ \cval \mid \cval(\absClock) = \sum_{j=0}^{i} \tau_j\}$ or $M_{i} = \{ \cval \mid \sum_{j=0}^{i-1} \tau_j < \cval(\absClock) < \sum_{j=0}^{i} \tau_j\}$ satisfying $\zone_1' = \{\cval + \relativeTime \mid \cval + \zone_1, \relativeTime\in\Rp\} \cap M$, where $i\in\{1,2,\dots,n\}$.
Let $\zone_2' = \{\cval + \relativeTime \mid \cval \in \zone_2, \relativeTime\in\Rp\} \cap M$.
 Then, we have $\bigl((\loc,\zone_2,\dvalSeq), (\loc',\zone_2',\dvalSeq['])\bigr) \in \WSTTSTransition$ and since $\zone_1 \subseteq \zone_2$, we have $\zone_1' \subseteq \zone_2'$.
 \qed
\end{proof}

\begin{lemma}
 \label{lemma:WTTS_WSTTS_simulation_path_step}
 Let $\signalWithInside\in\signals$ be a signal and let $\TSWAWithInside$ be a TSWA, where $\TSAWithInside$.
 Let $\WTTS$ and $\WSTTS$ be the WTTS and WSTTS of $\signal$ and $\TSWA$, respectively.
 For any $(\loc,\cval,\absoluteTime,\dvalSeq) \WTTSTransitionRel (\loc',\cval',\absoluteTime',\dvalSeq['])$, there is a zone $\zone'\in\ZonesWithAbs$ satisfying the following.
 \begin{itemize}
  \item $\bigl( (\loc,\{\cval_{\zone}\},\dvalSeq), (\loc',\zone',\dvalSeq[']) \bigr) \in \WSTTSTransition$, where $\cval_{\zone}\in\clockvaluations[\ClockWithAbs]$ is for any $\clock\in\Clock$, $\cval_{\zone}(\clock) = \cval(\clock)$ and $\cval_{\zone}(\absClock) = \absoluteTime$.
  \item There exists $\cval'_{\zone} \in\zone'$ such that for any $\clock\in\Clock$ $\cval'_{\zone}(\clock) = \cval'(\clock)$ and $\cval'_{\zone}(\absClock) = \absoluteTime'$.
  \item $\WTTSWeight\bigl(((\loc,\cval,\absoluteTime,\dvalSeq), (\loc',\cval',\absoluteTime',\dvalSeq[']))\bigr) = \WSTTSWeight\bigl(((\loc,\{\cval_{\zone}\},\dvalSeq), (\loc',\zone',\dvalSeq[']))\bigr)$
 \end{itemize}
\end{lemma}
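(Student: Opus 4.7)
The plan is to do case analysis on the two clauses of the WTTS transition relation (\cref{def:wtts}) and in each case directly construct the witness zone $\zone'$ by invoking the matching clause of the WSTTS transition relation (\cref{def:weighted_zone_graph}), then verify the three bullet points of the conclusion in turn (WSTTS edge, existence of the lifted target valuation, and weight coincidence).

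First I would handle the TSA-transition case. Here a witness $(\loc,\guard,\resets,\loc')\in\Transition$ gives $\cval' = \cval[\resets:=0]$, $\absoluteTime' = \absoluteTime$, $\dvalSeq['] = \varepsilon$, and $\dvalSeq \neq \varepsilon$. Since $\guard$ constrains only clocks in $\Clock$ and $\cval_{\zone}$ agrees with $\cval$ on $\Clock$, the lifted valuation satisfies $\cval_{\zone} \models \guard$; so the same transition $(\loc,\guard,\resets,\loc')$ witnesses a WSTTS step from $(\loc, \{\cval_{\zone}\}, \dvalSeq)$ to $(\loc', \zone', \varepsilon)$ with the singleton zone $\zone' = \{\cval_{\zone}[\resets:=0]\}$. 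Taking $\cval'_{\zone} = \cval_{\zone}[\resets:=0]$, I would check: its $\Clock$-projection is $\cval[\resets:=0] = \cval'$, and $\absClock \notin \resets$ gives $\cval'_{\zone}(\absClock) = \absoluteTime = \absoluteTime'$. Since $\dvalSeq['] = \varepsilon$ on both sides, both weights equal $\costFunc(\Label(\loc), \dvalSeq)$.

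Next I would handle the time-elapse case with witness $\relativeTime \in \Rp$. I would locate the segment index $i \in \{1,\dots,n\}$ containing $\absoluteTime'$, using the punctual clause with $M_{i,=}$ when $\absoluteTime'$ lies exactly on a segment boundary and the non-punctual clause with $M_i$ otherwise. In either sub-case I set $\zone' = \{\cval_{\zone} + \tilde{\relativeTime} \mid \tilde{\relativeTime} \in \Rp\} \cap M$ for the matching region $M$, and choose $\cval'_{\zone} = \cval_{\zone} + \relativeTime$, which lies in $\zone'$ by construction. The $\Clock$-projection equals $\cval + \relativeTime = \cval'$; the $\absClock$-coordinate equals $\absoluteTime'$; and the weight on both sides is $\semiringTimesUnit$ since $\dvalSeq['] \neq \varepsilon$. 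Choosing the representatives $\tilde{\cval} = \cval_{\zone}$ and $\tilde{\cval}' = \cval'_{\zone}$ in the WSTTS definition reduces the equality of suffixes to $\dvalSeq \absConcat \values(\signal([\absoluteTime, \absoluteTime'))) = \dvalSeq[']$, which is exactly the WTTS defining equation.

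The hard part will be the bookkeeping around the WSTTS state-space constraint $\dvalSeq \absConcat \signal(\cval(\absClock)) = \dvalSeq$ for $\dvalSeq \neq \varepsilon$: the lemma implicitly assumes $(\loc, \{\cval_{\zone}\}, \dvalSeq) \in \WSTTSState$, which requires the last letter of $\dvalSeq$ to coincide with $\signal(\absoluteTime)$. I would either strengthen the hypothesis to record this invariant explicitly or observe that it holds automatically along any WTTS path from $\InitWTTSState$ (the only context where this lemma is later invoked), and then verify that the target state $(\loc', \zone', \dvalSeq['])$ inherits the same invariant. A secondary subtlety is justifying that in the non-punctual clause the suffix $\values(\signal([\tilde{\cval}(\absClock), \tilde{\cval}'(\absClock))))$ is independent of the choice of representatives when $\relativeTime$ crosses one or more segment boundaries; this follows from the zone-level fact that every point of $\zone$ shares one signal segment and every point of $\zone'$ shares another, so the sequence of absorbing-concatenated values visited is fixed.
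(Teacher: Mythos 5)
Your proposal is correct and follows essentially the same route as the paper's proof: a case split on the two WTTS transition clauses, taking $\zone' = \{\cval_{\zone}[\resets:=0]\}$ in the discrete case and $\zone' = \{\cval_{\zone}+\tilde{\relativeTime} \mid \tilde{\relativeTime}\in\Rp\}\cap M$ with $M \in \{M_{i,=}, M_i\}$ chosen by where $\absoluteTime'$ falls in the time-elapse case, then checking the edge, the lifted valuation, and the weights. Your closing remarks on the $\WSTTSState$ membership invariant and the representative-independence of the suffix are points the paper's proof silently elides, so flagging them is a welcome extra rather than a deviation.
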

\begin{proof}
 If $\dvalSeq['] = \varepsilon$, there is $(\loc,\guard,\resets,\loc')\in\Transition$ satisfying
 $\cval\models\guard$, 
 $\cval'=\cval[\resets:=0]$, 
 $\absoluteTime'=\absoluteTime$, 
 $\dvalSeq \neq \varepsilon$, and
 $\dvalSeq['] = \varepsilon$.
 Let $\zone'$ be $\zone'=\{\cval_{\zone}[\resets:=0]\}$.
 Since $\cval_{\zone} \models \guard$,
 $\dvalSeq \neq \varepsilon$, and
 $\dvalSeq['] = \varepsilon$, 
 we have $\bigl((\loc,\{\cval_{\zone}\}, \dvalSeq), (\loc',\zone', \dvalSeq['])\bigr)$.
 Since $\cval'=\cval[\resets:=0]$,  for any $\clock\in\Clock$, we have $(\cval_{\zone}[\resets:=0])(\clock) = \cval'(\clock)$.
 Since $\absoluteTime = \absoluteTime'$ and $\absClock \not\in \resets$, we have $(\cval_{\zone}[\resets:=0])(\absClock) = \cval_{\zone}(\absClock) = \absoluteTime = \absoluteTime'$.
 We also have $\WTTSWeight\bigl(((\loc,\cval,\absoluteTime,\dvalSeq), (\loc',\cval',\absoluteTime',\dvalSeq[']))\bigr) = \costFunc(\Label(\loc),\dvalSeq) = \WSTTSWeight\bigl(((\loc,\{\cval_{\zone}\},\dvalSeq), (\loc',\zone',\dvalSeq[']))\bigr)$.

 If $\dvalSeq['] \neq \varepsilon$, $\loc = \loc'$ and there is $\relativeTime\in\Rp$ satisfying $\cval'=\cval + \relativeTime$, $\absoluteTime' = \absoluteTime + \relativeTime$, and $\dvalSeq['] = \dvalSeq \absConcat \signal( [\absoluteTime,\absoluteTime + \relativeTime) )$.
 Let $M$ be either $M_{i,=} = \{ \cval \mid \cval(\absClock) = \sum_{j=0}^{i} \tau_j\}$ or $M_{i} = \{ \cval \mid \sum_{j=0}^{i-1} \tau_j < \cval(\absClock) < \sum_{j=0}^{i} \tau_j\}$ satisfying $\absoluteTime' \in M$, where $i\in\{1,2,\dots,n\}$.
Let $\zone' = \{\cval + \relativeTime \mid \cval \in \zone, \relativeTime\in\Rp\} \cap M$.
We have $\bigl( (\loc,\{\cval_{\zone}\},\dvalSeq), (\loc',\zone',\dvalSeq[']) \bigr) \in \WSTTSTransition$.
 Since $\cval'=\cval + \relativeTime$ and $\absoluteTime' \in M$, there exists $\cval'_{\zone} \in\zone'$ such that for any $\clock\in\Clock$ $\cval'_{\zone}(\clock) = \cval'(\clock)$ and $\cval'_{\zone}(\absClock) = \absoluteTime'$.
 We also have $\WTTSWeight\bigl(((\loc,\cval,\absoluteTime,\dvalSeq), (\loc',\cval',\absoluteTime',\dvalSeq[']))\bigr) = \semiringTimesUnit = \WSTTSWeight\bigl(((\loc,\{\cval_{\zone}\},\dvalSeq), (\loc',\zone',\dvalSeq[']))\bigr)$.
 \qed
\end{proof}

\begin{lemma}
 \label{lemma:WTTS_WSTTS_simulation_path}
 Let $\signal\in\signals$ be a signal and let $\TSWAWithInside$ be a TSWA, where $\TSAWithInside$.
 Let $\WTTS$ and $\WSTTS$ be the WTTS and WSTTS of $\signal$ and $\TSWA$, respectively.
 For any path $\WTTSPath = (\loc_0,\cval_0,\absoluteTime_0,\dvalSeqi{0}),(\loc_1,\cval_1,\absoluteTime_1,\dvalSeqi{1}),\dots,(\loc_n,\cval_n,\absoluteTime_n,\dvalSeqi{n})$ of $\WTTS$, 
 there is a path $\WSTTSPath = (\loc_0,\zone_0,\dvalSeqi{0}),(\loc_1,\zone_1,\dvalSeqi{1}),\dots,(\loc_n,\zone_n,\dvalSeqi{n})$ of $\WTTS$, such that
 $\zone_0 = \{ \cval_{\zone,0} \mid \forall \clock\in\Clock.\,\cval_{\zone,0}(\clock)=\cval_{0}(\clock), \cval_{\zone,0}(\absClock) = \absoluteTime_0\}$ and 
 for any $i \in\{1,2,\dots,n\}$, there exists $\cval_{\zone,i} \in \zone_i$ satisfying $\cval_{\zone,i}(\clock)=\cval_{i}(\clock)$ for any $\clock\in\Clock$ and $\cval_{\zone,i}(\absClock) = \absoluteTime_i$.
\end{lemma}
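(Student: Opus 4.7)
\begin{myproof}
The plan is to proceed by induction on the length $n$ of the WTTS path, using the single-step simulation \cref{lemma:WTTS_WSTTS_simulation_path_step} together with the widening \cref{lemma:WSTTS_widening} to extend the simulating WSTTS path one transition at a time.

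For the base case $n = 0$, I take $\zone_0 = \{\cval_{\zone,0} \mid \forall \clock \in \Clock.\, \cval_{\zone,0}(\clock) = \cval_0(\clock),\, \cval_{\zone,0}(\absClock) = \absoluteTime_0\}$, which is exactly the singleton zone prescribed by the statement. Note that this is a valid element of $\ZonesWithAbs$ and trivially satisfies the required conditions with $\cval_{\zone,0}$ itself as the witness.

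For the inductive step, suppose I have constructed $(\loc_0,\zone_0,\dvalSeqi{0}),\dots,(\loc_i,\zone_i,\dvalSeqi{i})$ as a WSTTS path, together with a witness $\cval_{\zone,i} \in \zone_i$ agreeing with $\cval_i$ on $\Clock$ and satisfying $\cval_{\zone,i}(\absClock) = \absoluteTime_i$. I would then apply \cref{lemma:WTTS_WSTTS_simulation_path_step} to the WTTS transition $(\loc_i,\cval_i,\absoluteTime_i,\dvalSeqi{i}) \WTTSTransitionRel (\loc_{i+1},\cval_{i+1},\absoluteTime_{i+1},\dvalSeqi{i+1})$ using the singleton source $\{\cval_{\zone,i}\}$; this yields a WSTTS transition $\bigl((\loc_i,\{\cval_{\zone,i}\},\dvalSeqi{i}),(\loc_{i+1},\zone'_{i+1},\dvalSeqi{i+1})\bigr) \in \WSTTSTransition$ and a witness $\cval_{\zone,i+1} \in \zone'_{i+1}$ with the required projection onto $\cval_{i+1}$ and $\absoluteTime_{i+1}$. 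Since $\{\cval_{\zone,i}\} \subseteq \zone_i$, I then invoke \cref{lemma:WSTTS_widening} to lift this transition to one starting from $(\loc_i,\zone_i,\dvalSeqi{i})$, obtaining $(\loc_{i+1},\zone_{i+1},\dvalSeqi{i+1})$ with $\zone'_{i+1} \subseteq \zone_{i+1}$, which extends the WSTTS path by one step and preserves $\cval_{\zone,i+1} \in \zone_{i+1}$ as the witness at step $i+1$.

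The only real obstacle is ensuring the witness $\cval_{\zone,i+1}$ survives the widening step --- but this is precisely what the inclusion $\zone'_{i+1} \subseteq \zone_{i+1}$ given by \cref{lemma:WSTTS_widening} guarantees. This bridging step is essential because the single-step simulation lemma only produces a transition out of a \emph{singleton} source zone, while the inductively constructed WSTTS path carries progressively larger zones; widening is what reconciles these two viewpoints and keeps the induction going.
\qed
\end{myproof}
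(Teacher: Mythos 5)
Your proof is correct and follows essentially the same route as the paper's: induction on the path length, applying \cref{lemma:WTTS_WSTTS_simulation_path_step} to a singleton source zone built from the current witness valuation, and then using \cref{lemma:WSTTS_widening} to lift the resulting transition to the (possibly larger) inductively constructed zone while preserving the witness. The only cosmetic difference is that you start the induction at the empty path rather than at a single transition, which changes nothing of substance.
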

\begin{proof}
 We prove the lemma by induction on $n$.

 When $n=1$, by \cref{lemma:WTTS_WSTTS_simulation_path_step}, for $\zone_{0} = \{\cval_{\zone,0} \in\clockvaluations[\ClockWithAbs] \mid \forall \clock\in\Clock.\,\cval_{\zone,0}(\clock) = \cval_{0}(\clock), \cval_{\zone,0}(\absClock) = \absoluteTime_0\}$ there is a zone $\zone_{1}\in\ZonesWithAbs$ satisfying:
 \begin{itemize}
  \item $\bigl((\loc_0,\zone_0,\dvalSeqi{0}),(\loc_1,\zone_1,\dvalSeqi{1})\bigr)\in\WSTTSTransition$; and
  \item there exists $\cval_{\zone,1} \in \zone_1$ satisfying $\cval_{\zone,1}(\clock)=\cval_{1}(\clock)$ for any $\clock\in\Clock$ and $\cval_{\zone,1}(\absClock) = \absoluteTime_1$.
 \end{itemize}

 When $n > 1$, by \cref{lemma:WTTS_WSTTS_simulation_path_step}, for $\zone'_{n-1} = \{\cval_{\zone,n-1} \in\clockvaluations[\ClockWithAbs] \mid \forall \clock\in\Clock.\,\cval_{\zone,n-1}(\clock) = \cval_{n-1}(\clock), \cval_{\zone,n-1}(\absClock) = \absoluteTime_{n-1}\}$ there is a zone $\zone'_{n}\in\ZonesWithAbs$ satisfying:
 \begin{itemize}
  \item $\bigl((\loc_{n-1},\zone'_{n-1},\dvalSeqi{n-1}),(\loc_{n},\zone'_{n},\dvalSeqi{n})\bigr)\in\WSTTSTransition$; and
  \item there exists $\cval_{\zone,n} \in \zone'_{n}$ satisfying $\cval_{\zone,n}(\clock)=\cval_{n}(\clock)$ for any $\clock\in\Clock$ and $\cval_{\zone,n}(\absClock) = \absoluteTime_{n}$.
 \end{itemize}
 By induction hypothesis, there is a path $(\loc_0,\zone_0,\dvalSeqi{0}),(\loc_1,\zone_1,\dvalSeqi{1}),\dots,(\loc_{n-1},\zone_{n-1},\dvalSeqi{n-1})$ of $\WSTTS$, such that
 $\zone_0 = \{ \cval_{\zone,0} \mid \forall \clock\in\Clock.\,\cval_{\zone,0}(\clock)=\cval_{0}(\clock), \cval_{\zone,0}(\absClock) = \absoluteTime_0\}$ and 
 for any $i \in\{1,2,\dots,n-1\}$, there exists $\cval_{\zone,i} \in \zone_i$ satisfying $\cval_{\zone,i}(\clock)=\cval_{i}(\clock)$ for any $\clock\in\Clock$ and $\cval_{\zone,i}(\absClock) = \absoluteTime_i$.
 Since $\zone'_{n-1} \subseteq \zone_{n-1}$ and \cref{lemma:WSTTS_widening}, there exists $\zone_{n} \in\ZonesWithAbs$ satisfying $\zone'_{n} \subseteq \zone_{n}$ and $\bigl((\loc_{n-1},\zone_{n-1},\dvalSeqi{n-1}),(\loc_{n},\zone_{n},\dvalSeqi{n})\bigr)\in\WSTTSTransition$.
 Therefore, $(\loc_0,\zone_0,\dvalSeqi{0}),(\loc_1,\zone_1,\dvalSeqi{1}),\dots,(\loc_n,\zone_n,\dvalSeqi{n})$ is a path of $\WSTTS$, such that
 $\zone_0 = \{ \cval_{\zone,0} \mid \forall \clock\in\Clock.\,\cval_{\zone,0}(\clock)=\cval_{0}(\clock), \cval_{\zone,0}(\absClock) = \absoluteTime_0\}$ and 
 for any $i \in\{1,2,\dots,n\}$, there exists $\cval_{\zone,i} \in \zone_i$ satisfying $\cval_{\zone,i}(\clock)=\cval_{i}(\clock)$ for any $\clock\in\Clock$ and $\cval_{\zone,i}(\absClock) = \absoluteTime_i$.
 \qed
\end{proof}

\begin{lemma}
 \label{lemma:WTTS_WSTTS_simulation_path_value} 
 Let $\signal\in\signals$ be a signal and let $\TSWAWithInside$ be a TSWA, where $\TSAWithInside$.
 Let $\WTTS$ and $\WSTTS$ be the WTTS and WSTTS of $\signal$ and $\TSWA$, respectively.
 For any path $\WTTSPath$ of $\WTTS$, 
 there is a path $\WSTTSPath$ of $\WSTTS$ satisfying $\pathValue(\WTTSPath) = \symbolicPathValue(\WSTTSPath)$.
 Moreover, for any $\WTTSPath\in\ARun(\WTTS)$,
 there is $\WSTTSPath\in\ARun(\WSTTS)$ satisfying $\pathValue(\WTTSPath) = \symbolicPathValue(\WSTTSPath)$.
\end{lemma}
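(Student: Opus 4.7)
The plan is to lift the path-level simulation already given by \cref{lemma:WTTS_WSTTS_simulation_path} to a weight-preserving simulation, and then verify that the endpoint conditions defining $\ARun$ carry over from $\WTTS$ to $\WSTTS$.

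First, given an arbitrary path $\WTTSPath = (\loc_0,\cval_0,\absoluteTime_0,\dvalSeqi{0}),\dots,(\loc_n,\cval_n,\absoluteTime_n,\dvalSeqi{n})$ of $\WTTS$, I invoke \cref{lemma:WTTS_WSTTS_simulation_path} to obtain a path $\WSTTSPath = (\loc_0,\zone_0,\dvalSeqi{0}),\dots,(\loc_n,\zone_n,\dvalSeqi{n})$ of $\WSTTS$ with the same sequences of locations and of observed subsequences $\dvalSeqi{i}$, and such that each $\cval_i$ embeds (together with $\absoluteTime_i$ on the $\absClock$-component) into $\zone_i$.

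Next, I compare the transition weights step by step. Inspecting \cref{def:wtts} and \cref{def:weighted_zone_graph}, both $\WTTSWeight$ and $\WSTTSWeight$ depend only on $\loc_{i-1}$, $\dvalSeqi{i-1}$, and the Boolean test whether $\dvalSeqi{i}=\varepsilon$: in that case the weight is $\costFunc(\Label(\loc_{i-1}),\dvalSeqi{i-1})$, and otherwise it is $\semiringTimesUnit$. Since $\loc_{i-1}$, $\dvalSeqi{i-1}$, and $\dvalSeqi{i}$ agree along the two paths, we get
\begin{displaymath}
\WTTSWeight\bigl((\loc_{i-1},\cval_{i-1},\absoluteTime_{i-1},\dvalSeqi{i-1}),(\loc_i,\cval_i,\absoluteTime_i,\dvalSeqi{i})\bigr)
= \WSTTSWeight\bigl((\loc_{i-1},\zone_{i-1},\dvalSeqi{i-1}),(\loc_i,\zone_i,\dvalSeqi{i})\bigr)
\end{displaymath}
for every $i\in\{1,\dots,n\}$. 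Taking $\bigotimes_{i=1}^n$ of both sides gives $\pathValue(\WTTSPath) = \symbolicPathValue(\WSTTSPath)$, which is the first part of the statement.

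For the $\ARun$-part, assume additionally $\WTTSPath\in\ARun(\WTTS)$, i.e.\ $(\loc_0,\cval_0,\absoluteTime_0,\dvalSeqi{0})\in\InitWTTSState$ and $(\loc_n,\cval_n,\absoluteTime_n,\dvalSeqi{n})\in\AccWTTSState$. By definition this means $\loc_0\in\InitLoc$, $\cval_0 = \zerovalue$, $\absoluteTime_0 = 0$, $\dvalSeqi{0} = \varepsilon$, and $\loc_n\in\AccLoc$, $\absoluteTime_n = \duration{\signal}$, $\dvalSeqi{n} = \varepsilon$. The guarantee of \cref{lemma:WTTS_WSTTS_simulation_path} is exactly $\zone_0 = \{\zerovalue[\ClockWithAbs]\}$, so $(\loc_0,\zone_0,\dvalSeqi{0})\in\WSTTSInitState$; and it provides some $\cval_{\zone,n}\in\zone_n$ with $\cval_{\zone,n}(\absClock) = \absoluteTime_n = \duration{\signal}$, so $(\loc_n,\zone_n,\dvalSeqi{n})\in\WSTTSAccState$. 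Hence $\WSTTSPath\in\ARun(\WSTTS)$.

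The core work has already been absorbed into \cref{lemma:WTTS_WSTTS_simulation_path} (the zone-construction along transitions) and into \cref{lemma:WSTTS_widening} (monotonicity of symbolic transitions under zone enlargement), so no genuine obstacle remains; the only point requiring a moment of care is the weight equality, which reduces to reading off that both weight functions are defined by the \emph{same} case distinction on $\dvalSeqi{i}=\varepsilon$ and use the same label $\Label(\loc_{i-1})$ and the same observed sequence $\dvalSeqi{i-1}$.
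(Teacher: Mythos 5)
Your proposal is correct and follows essentially the same route as the paper's own proof: invoke \cref{lemma:WTTS_WSTTS_simulation_path} to obtain the matching symbolic path, observe that both weight functions are given by the identical case split on whether the target's observed sequence is $\varepsilon$ (so the stepwise weights, and hence their $\bigotimes$-product, coincide), and then check the initial/accepting endpoint conditions to transfer membership in $\ARun$. The only cosmetic difference is your closing remark crediting \cref{lemma:WSTTS_widening}, which is used inside the proof of \cref{lemma:WTTS_WSTTS_simulation_path} rather than here directly; this does not affect correctness.
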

\begin{proof}
 By \cref{lemma:WTTS_WSTTS_simulation_path},  
 for any path $\WTTSPath = (\loc_0,\cval_0,\absoluteTime_0,\dvalSeqi{0}),(\loc_1,\cval_1,\absoluteTime_1,\dvalSeqi{1}),\dots,(\loc_n,\cval_n,\absoluteTime_n,\dvalSeqi{n})$ of $\WTTS$, 
 there is a path $\WSTTSPath = (\loc_0,\zone_0,\dvalSeqi{0}),(\loc_1,\zone_1,\dvalSeqi{1}),\dots,(\loc_n,\zone_n,\dvalSeqi{n})$ of $\WTTS$.
 For any $i\in\{1,2,\dots,n\}$, we have 
 \begin{align*}
  &\WTTSWeight\bigl( (\loc_{i-1},\cval_{i-1},\absoluteTime_{i-1},\dvalSeqi{{i-1}}),(\loc_i,\cval_i,\absoluteTime_i,\dvalSeqi{i}) \bigr)\\
  =&
  \begin{cases}
   \costFunc(\Label(\loc_{i-1},\dvalSeqi{i-1})) & \text{if $\dvalSeqi{i} = \varepsilon$}\\
   \semiringTimesUnit & \text{if $\dvalSeqi{i} \neq \varepsilon$}\\
  \end{cases}\\
  =& \WSTTSWeight\bigl( (\loc_{i-1},\zone_{i-1},\dvalSeqi{{i-1}}),(\loc_i,\zone_i,\dvalSeqi{i}) \bigr)\\
 \end{align*}
 Therefore, we have $\pathValue(\WTTSPath) = \symbolicPathValue(\WSTTSPath)$.

 When $\WTTSPath\in\ARun(\WTTS)$, we have
 $\loc_0\in\InitLoc$, 
 $\cval_0 = \zerovalue$, 
 $\absoluteTime_0 = 0$,
 $\dvalSeqi{0} = \varepsilon$,
 $\loc_n\in\AccLoc$,
 $\absoluteTime_n = \duration{\signal}$, and
 $\dvalSeqi{n} = \varepsilon$.
 By \cref{lemma:WTTS_WSTTS_simulation_path}, we have $\zone_0 = \{\zerovalue[\ClockWithAbs]\}$ and 
 there is $\cval_{\zone,n} \in \zone_n$ satisfying $\cval_{\zone,n}(\absClock) = \absoluteTime_n$.
 Therefore $\WSTTSPath\in\ARun(\WSTTS)$ also holds.
 \qed
\end{proof}

\begin{theorem}
 \label{theorem:WTTS_WSTTS_simulation_trace_value} 
 Let $\signal\in\signals$ be a signal and let $\TSWAWithInside$ be a TSWA, where $\TSAWithInside$.
 Let $\WTTS$ and $\WSTTS$ be the WTTS and WSTTS of $\signal$ and $\TSWA$, respectively.
 If the semiring $\semiring$ is idempotent,
 we have $\traceValue(\WTTS) \preceq \symbolicTraceValue(\WSTTS)$.
\end{theorem}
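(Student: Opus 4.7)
The plan is to lift the path-level correspondence of \cref{lemma:WTTS_WSTTS_simulation_path_value} to a sum-of-paths statement, relying on the idempotence of $\semiring$ to absorb duplicated summands. Since $\traceValue(\WTTS) = \bigoplus_{\WTTSPath \in \ARun(\WTTS)} \pathValue(\WTTSPath)$ and $\symbolicTraceValue(\WSTTS) = \bigoplus_{\WSTTSPath \in \ARun(\WSTTS)} \symbolicPathValue(\WSTTSPath)$ (both observations appear immediately after the corresponding definitions), the goal reduces to showing that the first sum is absorbed by the second in the canonical order of $\semiring$.

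First, I would use \cref{lemma:WTTS_WSTTS_simulation_path_value} to choose, for each $\WTTSPath \in \ARun(\WTTS)$, a symbolic witness $\Phi(\WTTSPath) \in \ARun(\WSTTS)$ with $\pathValue(\WTTSPath) = \symbolicPathValue(\Phi(\WTTSPath))$. This yields a (not necessarily injective) map $\Phi \colon \ARun(\WTTS) \to \ARun(\WSTTS)$. Next, partitioning the domain as $\ARun(\WTTS) = \coprod_{\WSTTSPath \in \Phi(\ARun(\WTTS))} \Phi^{-1}(\WSTTSPath)$ and applying the partition axiom of the complete semiring, followed by collapsing each nonempty fiber of identical summands via idempotence, I would obtain $\traceValue(\WTTS) = \bigoplus_{\WSTTSPath \in \Phi(\ARun(\WTTS))} \symbolicPathValue(\WSTTSPath)$.

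Finally, because $\Phi(\ARun(\WTTS)) \subseteq \ARun(\WSTTS)$, one more use of the partition axiom splits $\symbolicTraceValue(\WSTTS) = \traceValue(\WTTS) \oplus R$ for the residual $R = \bigoplus_{\WSTTSPath \in \ARun(\WSTTS) \setminus \Phi(\ARun(\WTTS))} \symbolicPathValue(\WSTTSPath)$. By the definition of the canonical order $\preceq$, this is exactly $\traceValue(\WTTS) \preceq \symbolicTraceValue(\WSTTS)$.

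The main technical obstacle will be the fiber-collapsing step, because the index set $\Phi^{-1}(\WSTTSPath)$ can have arbitrary cardinality: continuously many concrete clock valuations populate each zone, and each gives rise to its own WTTS path. Ordinary induction from $s \oplus s = s$ only handles finite fibers, so I would instead reindex each inner sum by its set of attained values, which for the constant family is the singleton $\{\symbolicPathValue(\WSTTSPath)\}$, and then appeal to the singleton clause of the completeness axiom to collapse it. Everything else is a mechanical unwinding of the definitions in combination with the simulation lemma.
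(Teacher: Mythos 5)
Your proposal is correct and follows essentially the same route as the paper's proof: both use the path-value simulation lemma to obtain a map from accepting WTTS runs into accepting WSTTS runs, collapse the sum over that map's fibers onto its image using idempotence and completeness, and then absorb the image sum into $\symbolicTraceValue(\WSTTS)$ to witness the canonical order. Your explicit handling of the potentially uncountable fibers via the singleton clause of the completeness axiom is a detail the paper's one-line reindexing step silently elides, but it does not change the argument.
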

\begin{proof}
 By \cref{lemma:WTTS_WSTTS_simulation_path_value}, there is a mapping $f\colon\ARun(\WTTS) \to \ARun(\WSTTS)$ satisfying $\WSTTSWeight(f(\WTTSPath)) = \WTTSWeight(\WTTSPath)$.
 We have
 \begin{align*}
  & \traceValue(\WTTS) \semiringPlus  \symbolicTraceValue(\WSTTS)\\
  =& \bigl( \bigoplus_{\WTTSPath\in\ARun(\WTTS)} \pathValue(\WTTSPath) \bigr) \semiringPlus \bigl(\bigoplus_{\WSTTSPath\in\ARun(\WSTTS)} \symbolicPathValue(\WSTTSPath)\bigr)\\
  =& \bigl( \bigoplus_{\WSTTSPath\in f(\ARun(\WTTS))} \symbolicPathValue(\WSTTSPath) \bigr) \semiringPlus \bigl(\bigoplus_{\WSTTSPath\in\ARun(\WSTTS)} \symbolicPathValue(\WSTTSPath)\bigr)\\
  =& \bigl(\bigoplus_{\WSTTSPath\in\ARun(\WSTTS)} \symbolicPathValue(\WSTTSPath)\bigr) = \symbolicTraceValue(\WSTTS)
 \end{align*}
 Therefore, we have $\traceValue(\WTTS) \preceq \symbolicTraceValue(\WSTTS)$. 
 \qed
\end{proof}

\begin{lemma}
 \label{lemma:WSTTS_WTTS_simulation_path_step}
 Let $\signal\in\signals$ be a signal and let $\TSWAWithInside$ be a TSWA, where $\TSAWithInside$.
 Let $\WTTS$ and $\WSTTS$ be the WTTS and WSTTS of $\signal$ and $\TSWA$, respectively.
 For any $\bigl((\loc,\zone,\dvalSeq),  (\loc',\zone',\dvalSeq['])\bigr) \in \WSTTSTransition$, and $\cval'_{\zone} \in\zone'$,
 there is a clock valuation $\cval_{\zone}\in\zone$ satisfying the following.
 \begin{itemize}
  \item $(\loc,\project{\cval_{\zone}}{\Clock},\cval_{\zone}(\absClock),\dvalSeq) \WTTSTransitionRel (\loc',\project{\cval'_{\zone}}{\Clock},\cval'_{\zone}(\absClock),\dvalSeq['])$, where
        $\project{\cval_{\zone}}{\Clock},\project{\cval'_{\zone}}{\Clock}\in\clockvaluations$ are for any $\clock\in\Clock$,
        $\project{\cval_{\zone}}{\Clock}(\clock) = \cval_{\zone}(\clock)$ and 
        $\project{\cval'_{\zone}}{\Clock}(\clock) = \cval'_{\zone}(\clock)$.
  \item $\WTTSWeight\bigl(((\loc,\project{\cval_{\zone}}{\Clock},\cval_{\zone}(\absClock),\dvalSeq), (\loc',\project{\cval'_{\zone}}{\Clock},\cval'_{\zone}(\absClock),\dvalSeq[']))\bigr) = \WSTTSWeight\bigl(((\loc,\zone,\dvalSeq), (\loc',\zone',\dvalSeq[']))\bigr)$
 \end{itemize}
\end{lemma}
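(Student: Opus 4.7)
The plan is to mirror the argument of \cref{lemma:WTTS_WSTTS_simulation_path_step} in reverse, case-splitting on whether the given WSTTS transition $\bigl((\loc,\zone,\dvalSeq),(\loc',\zone',\dvalSeq['])\bigr)$ arises from a transition of $\TSA$ or from a time-elapse step (punctual or non-punctual). In each case the definition of $\WSTTSTransition$ describes $\zone'$ as a concrete image of $\zone$, so unpacking that image at the chosen point $\cval'_{\zone}\in\zone'$ will directly produce the required witness $\cval_{\zone}\in\zone$.

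For the $\TSA$-transition case, the definition yields $(\loc,\guard,\resets,\loc')\in\Transition$ with $\zone'=\{\cval[\resets:=0]\mid\cval\in\zone,\cval\models\guard\}$ and $\dvalSeq\neq\varepsilon$, $\dvalSeq[']=\varepsilon$. From $\cval'_{\zone}\in\zone'$ we extract some $\cval_{\zone}\in\zone$ with $\cval_{\zone}\models\guard$ and $\cval'_{\zone}=\cval_{\zone}[\resets:=0]$. Restricting to $\Clock$ gives $\project{\cval'_{\zone}}{\Clock}=\project{\cval_{\zone}}{\Clock}[\resets:=0]$, and because $\absClock\notin\resets$ we have $\cval'_{\zone}(\absClock)=\cval_{\zone}(\absClock)$, matching the constraint $\absoluteTime'=\absoluteTime$ in \cref{def:wtts}. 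Thus $(\loc,\project{\cval_{\zone}}{\Clock},\cval_{\zone}(\absClock),\dvalSeq)\WTTSTransitionRel(\loc',\project{\cval'_{\zone}}{\Clock},\cval'_{\zone}(\absClock),\varepsilon)$ is a valid WTTS $\TSA$-transition, and both weights evaluate to $\costFunc(\Label(\loc),\dvalSeq)$, so they agree.

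For the time-elapse case, the WSTTS definition provides $i\in\{1,\dots,n\}$ and a set $M\in\{M_{i,=},M_i\}$ with $\zone'=\{\cval+\relativeTime\mid\cval\in\zone,\relativeTime\in\Rp\}\cap M$. From $\cval'_{\zone}\in\zone'$ we extract $\cval_{\zone}\in\zone$ and $\relativeTime\in\Rp$ with $\cval'_{\zone}=\cval_{\zone}+\relativeTime$. Writing $\absoluteTime=\cval_{\zone}(\absClock)$ and $\absoluteTime'=\cval'_{\zone}(\absClock)=\absoluteTime+\relativeTime$, the WSTTS equation $\dvalSeq[']=\dvalSeq\absConcat\values(\signal([\tilde{\cval}(\absClock),\tilde{\cval}'(\absClock))))$ combined with the footnote in \cref{def:weighted_zone_graph} (stating that this concatenation is independent of the chosen representatives $\tilde{\cval}\in\zone$ and $\tilde{\cval}'\in\zone'$) lets us substitute $\cval_{\zone}$ and $\cval'_{\zone}$ for $\tilde{\cval}$ and $\tilde{\cval}'$, so $\dvalSeq[']=\dvalSeq\absConcat\values(\signal([\absoluteTime,\absoluteTime+\relativeTime)))$. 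This is precisely the condition in the WTTS time-elapse clause, and both weights equal $\semiringTimesUnit$.

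The delicate step, and the only place where the proof uses more than definitional unfolding, is the substitution of representatives in the time-elapse case: one must verify that the $\signal$-segment appearing in the WSTTS label really is the $\signal$-segment that the chosen $\cval_{\zone}$ and $\cval'_{\zone}$ see when interpreted in the WTTS. This is where the constraint $\dvalSeq\absConcat\signal(\cval(\absClock))=\dvalSeq$ imposed on $\WSTTSState$ (together with the restriction of $M$ to a single punctual instant or a single open segment between consecutive breakpoints of $\signal$) pays off: all absolute times in $\zone$ and $\zone'$ observe consistent sequences of signal values, so any representative choice yields the same $\values(\signal(\cdot))$. I expect this invariance to be the main technical hinge; everything else is a direct unpacking of \cref{def:wtts} and \cref{def:weighted_zone_graph}.
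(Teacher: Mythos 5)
Your proposal is correct and follows essentially the same route as the paper's own proof: case-split on whether $\dvalSeq[']=\varepsilon$, extract the witness $\cval_{\zone}$ from the explicit image description of $\zone'$ in \cref{def:weighted_zone_graph}, and check the WTTS transition conditions and weight equality in each case. The only difference is that you spell out the representative-independence of the signal segment in the time-elapse case (via the footnote and the invariant on $\WSTTSState$), which the paper's proof uses implicitly; this is a welcome clarification, not a divergence.
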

\begin{proof}
 If $\dvalSeq['] = \varepsilon$, we have 
 $\dvalSeq\neq\varepsilon$ and 
 there is $(\loc,\guard,\resets,\loc')\in\Transition$ satisfying
 $\zone'= \{ \cval[\resets:=0] \mid \cval\in\zone, \cval\models\guard \}$, which is nonempty.
 Because of $\zone'= \{ \cval[\resets:=0] \mid \cval\in\zone, \cval\models\guard \}$, there exists $\cval_{\zone}\in\zone$ such that $\cval'_{\zone} = \cval_{\zone}[\resets:=0]$.
 Since $\project{\cval_{\zone}}{\Clock} \models \guard$, we have 
 $(\loc,\project{\cval_{\zone}}{\Clock},\cval_{\zone}(\absClock),\dvalSeq) \WTTSTransitionRel (\loc',\project{\cval'_{\zone}}{\Clock},\cval'_{\zone}(\absClock),\dvalSeq['])$.
 We also have the following.
 \begin{displaymath}
  \WTTSWeight\bigl(((\loc,\project{\cval_{\zone}}{\Clock},\cval_{\zone}(\absClock),\dvalSeq), (\loc',\project{\cval'_{\zone}}{\Clock},\cval'_{\zone}(\absClock),\dvalSeq[']))\bigr) = \costFunc(\Label(\loc),\dvalSeq) = \WSTTSWeight\bigl(((\loc,\zone,\dvalSeq), (\loc',\zone',\dvalSeq[']))\bigr)  
 \end{displaymath}

 If $\dvalSeq['] \neq \varepsilon$,  
 we have $\loc=\loc'$, $\dvalSeq['] = \dvalSeq\absConcat\signal([\cval_{\zone}(\absClock),\cval'_{\zone}(\absClock)))$, and
 for any $\cval'\in\zone'$, there exists $\cval\in\zone$ and $\relativeTime\in\Rp$ satisfying $\cval' = \cval + \relativeTime$, where $\cval_{\zone}\in\zone$.
 Let $\cval_{\zone}\in\zone$ and $\relativeTime\in\Rp$ be such that $\cval_{\zone}' = \cval_{\zone} + \relativeTime$.
 Because of
\begin{itemize}
 \item $\loc = \loc'$, 
 \item $\project{\cval'_{\zone}}{\Clock}=\project{\cval_{\zone}}{\Clock} + \relativeTime$, 
 \item $\cval'_{\zone}(\absClock) = \cval_{\zone}(\absClock) + \relativeTime$, and
 \item  $\dvalSeq['] = \dvalSeq \absConcat \signal( [\cval_{\zone}(\absClock),\cval'_{\zone}(\absClock)) )$,
\end{itemize}
 we have
 \begin{math}
  (\loc,\project{\cval_{\zone}}{\Clock},\cval_{\zone}(\absClock),\dvalSeq) \WTTSTransitionRel (\loc',\project{\cval'_{\zone}}{\Clock},\cval'_{\zone}(\absClock),\dvalSeq['])  
 \end{math}.
 Thus we  also have 
 \begin{displaymath}
  \WTTSWeight\bigl(((\loc,\project{\cval_{\zone}}{\Clock},\cval_{\zone}(\absClock),\dvalSeq), (\loc',\project{\cval'_{\zone}}{\Clock},\cval'_{\zone}(\absClock),\dvalSeq[']))\bigr) = \semiringTimesUnit = \WSTTSWeight\bigl(((\loc,\zone,\dvalSeq), (\loc',\zone',\dvalSeq[']))\bigr)\enspace.
 \end{displaymath}
 \qed
\end{proof}

\begin{lemma}
 \label{lemma:WSTTS_WTTS_simulation_path}
 Let $\signal\in\signals$ be a signal and let $\TSWAWithInside$ be a TSWA, where $\TSAWithInside$.
 Let $\WTTS$ and $\WSTTS$ be the WTTS and WSTTS of $\signal$ and $\TSWA$, respectively.
 For any path $\WSTTSPath = (\loc_0,\zone_0,\dvalSeqi{0}),(\loc_1,\zone_1,\dvalSeqi{1}),\dots,(\loc_n,\zone_n,\dvalSeqi{n})$ of $\WTTS$
 and for any $\cval_{\zone,n} \in\zone_n$, 
 there is a path $\WTTSPath = (\loc_0,\cval_0,\absoluteTime_0,\dvalSeqi{0}),(\loc_1,\cval_1,\absoluteTime_1,\dvalSeqi{1}),\dots,(\loc_n,\cval_n,\absoluteTime_n,\dvalSeqi{n})$ of $\WTTS$ such that
 we have $\project{(\cval_{\zone,n})}{\Clock}=\cval_{n}$ and $\cval_{\zone,n}(\absClock) = \absoluteTime_n$, and
 for any $i \in\{0,1,\dots,n-1\}$, there exists $\cval_{\zone,i} \in \zone_i$ satisfying $\project{(\cval_{\zone,i})}{\Clock}=\cval_{i}$ and $\cval_{\zone,i}(\absClock) = \absoluteTime_i$.
\end{lemma}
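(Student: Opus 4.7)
\begin{myproof}[Proof proposal]
The plan is to prove the lemma by induction on the length $n$ of the symbolic path, unfolding it \emph{backwards} starting from the final state. This backward direction is essential because the hypothesis fixes a clock valuation $\cval_{\zone,n}\in\zone_n$ at the end of the path, whereas the intermediate valuations $\cval_{\zone,i}$ must be constructed compatibly so that concrete WTTS transitions actually exist from them. The workhorse will be \cref{lemma:WSTTS_WTTS_simulation_path_step}, which, given a symbolic edge $\bigl((\loc,\zone,\dvalSeq),(\loc',\zone',\dvalSeq['])\bigr)\in\WSTTSTransition$ and any successor valuation $\cval'_{\zone}\in\zone'$, yields a compatible predecessor valuation $\cval_{\zone}\in\zone$ together with the concrete WTTS transition and an equality of weights.

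First I would handle the base case $n=0$, which is immediate: we take the singleton WTTS path $(\loc_0,\project{(\cval_{\zone,0})}{\Clock},\cval_{\zone,0}(\absClock),\dvalSeqi{0})$ for the given $\cval_{\zone,0}\in\zone_0$, trivially satisfying the required compatibility. For the inductive step, assume the statement for paths of length $n-1$. Given a symbolic path $\WSTTSPath=(\loc_0,\zone_0,\dvalSeqi{0}),\dots,(\loc_n,\zone_n,\dvalSeqi{n})$ and $\cval_{\zone,n}\in\zone_n$, I would apply \cref{lemma:WSTTS_WTTS_simulation_path_step} to the last edge $\bigl((\loc_{n-1},\zone_{n-1},\dvalSeqi{n-1}),(\loc_n,\zone_n,\dvalSeqi{n})\bigr)$ with the distinguished valuation $\cval_{\zone,n}$. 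This produces some $\cval_{\zone,n-1}\in\zone_{n-1}$ together with a concrete WTTS transition $(\loc_{n-1},\project{(\cval_{\zone,n-1})}{\Clock},\cval_{\zone,n-1}(\absClock),\dvalSeqi{n-1})\WTTSTransitionRel(\loc_n,\project{(\cval_{\zone,n})}{\Clock},\cval_{\zone,n}(\absClock),\dvalSeqi{n})$, with matching weights.

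Then, by the induction hypothesis applied to the truncated symbolic path $(\loc_0,\zone_0,\dvalSeqi{0}),\dots,(\loc_{n-1},\zone_{n-1},\dvalSeqi{n-1})$ and the chosen $\cval_{\zone,n-1}\in\zone_{n-1}$, I obtain a WTTS path $(\loc_0,\cval_0,\absoluteTime_0,\dvalSeqi{0}),\dots,(\loc_{n-1},\cval_{n-1},\absoluteTime_{n-1},\dvalSeqi{n-1})$ whose final state is exactly $(\loc_{n-1},\project{(\cval_{\zone,n-1})}{\Clock},\cval_{\zone,n-1}(\absClock),\dvalSeqi{n-1})$, together with the compatibility witnesses $\cval_{\zone,i}\in\zone_i$ for $i<n-1$. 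Concatenating this path with the concrete final transition obtained above yields the desired WTTS path of length $n$, where by construction $\cval_n=\project{(\cval_{\zone,n})}{\Clock}$ and $\absoluteTime_n=\cval_{\zone,n}(\absClock)$, and the witnesses $\cval_{\zone,i}$ for $i<n$ are supplied by the induction hypothesis (with $\cval_{\zone,n-1}$ coming from the step lemma).

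The main obstacle, which is really just a bookkeeping point, is to be careful that the induction is set up correctly: the statement must be inducted on with the successor valuation \emph{universally quantified}, so that when we invoke the induction hypothesis in the step we can pass the specific $\cval_{\zone,n-1}$ produced by \cref{lemma:WSTTS_WTTS_simulation_path_step}. Once this backward chaining is arranged, all the compatibility conditions and the WTTS transition relation are supplied directly by the step lemma, and no further reasoning about zones, resets, or time-elapse semantics is needed here.
\end{myproof}
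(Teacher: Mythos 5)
Your proposal is correct and takes essentially the same approach as the paper: induction on the path length $n$, with \cref{lemma:WSTTS_WTTS_simulation_path_step} supplying, for each symbolic edge and each chosen successor valuation, a compatible predecessor valuation together with the concrete WTTS transition. The only cosmetic difference is that you peel off the \emph{last} edge and apply the induction hypothesis to the prefix, whereas the paper applies the induction hypothesis to the suffix and then uses the step lemma on the \emph{first} edge; both orderings go through precisely because the final valuation is universally quantified in the statement being inducted on, as you note.
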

\begin{proof}
 We prove the lemma by induction on $n$.
 When $n=1$, by \cref{lemma:WSTTS_WTTS_simulation_path_step}, for any $\cval_{\zone,1}\in\zone_1$, there is $\cval_{\zone,0}\in\zone_0$ satisfying 
$(\loc,\project{(\cval_{\zone,0})}{\Clock},\cval_{\zone,0}(\absClock),\dvalSeq) \WTTSTransitionRel (\loc',\project{(\cval_{\zone,1})}{\Clock},\cval_{\zone,1}(\absClock),\dvalSeq['])$.

 When $n > 1$,  by induction hypothesis, for any $\cval_{\zone,n} \in\zone_n$,  there is a path
 $(\loc_1,\cval_1,\absoluteTime_1,\dvalSeqi{1}),(\loc_2,\cval_2,\absoluteTime_2,\dvalSeqi{2}),\dots,(\loc_{n},\cval_{n},\absoluteTime_{n},\dvalSeqi{n})$ of $\WTTS$, such that
 we have $\project{(\cval_{\zone,n})}{\Clock}=\cval_{n}$ and $\cval_{\zone,n}(\absClock) = \absoluteTime_n$, and
 for any $i \in\{1,2,\dots,n-1\}$, there exists $\cval_{\zone,i} \in \zone_i$ satisfying $\project{(\cval_{\zone,i})}{\Clock}=\cval_{i}$ and $\cval_{\zone,i}(\absClock) = \absoluteTime_i$.
 By \cref{lemma:WSTTS_WTTS_simulation_path_step}, 
 there is a clock valuation $\cval_{\zone,0}\in\zone_0$ satisfying
 $(\loc,\project{(\cval_{\zone,0})}{\Clock},\cval_{\zone,0}(\absClock),\dvalSeq) \WTTSTransitionRel (\loc',\cval_1,\absoluteTime_1,\dvalSeq['])$.
 Therefore, $(\loc_0,\cval_{\zone,0},\cval_{\zone,0}(\absClock),\dvalSeqi{0}),(\loc_1,\cval_1,\absoluteTime_1,\dvalSeqi{1}),\dots,(\loc_n,\cval_n,\absoluteTime_n,\dvalSeqi{n})$ is a path of $\WTTS$, such that
 we have $\project{(\cval_{\zone,n})}{\Clock}=\cval_{n}$ and $\cval_{\zone,n}(\absClock) = \absoluteTime_n$, and
 for any $i \in\{1,2,\dots,n-1\}$, there exists $\cval_{\zone,i} \in \zone_i$ satisfying $\project{(\cval_{\zone,i})}{\Clock}=\cval_{i}$ and $\cval_{\zone,i}(\absClock) = \absoluteTime_i$.
 \qed
\end{proof}

\begin{lemma}
 \label{lemma:WSTTS_WTTS_simulation_path_value} 
 Let $\signal\in\signals$ be a signal and let $\TSWAWithInside$ be a TSWA, where $\TSAWithInside$.
 Let $\WTTS$ and $\WSTTS$ be the WTTS and WSTTS of $\signal$ and $\TSWA$, respectively.
 For any path $\WSTTSPath$ of $\WSTTS$, 
 there is a path $\WTTSPath$ of $\WTTS$ satisfying $\pathValue(\WTTSPath) = \symbolicPathValue(\WSTTSPath)$.
 Moreover, for any $\WSTTSPath\in\ARun(\WSTTS)$,
 there is $\WTTSPath\in\ARun(\WTTS)$ satisfying $\pathValue(\WTTSPath) = \symbolicPathValue(\WSTTSPath)$.
\end{lemma}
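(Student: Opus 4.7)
The plan is to mirror the proof of \cref{lemma:WTTS_WSTTS_simulation_path_value}, using \cref{lemma:WSTTS_WTTS_simulation_path} as the structural workhorse in place of \cref{lemma:WTTS_WSTTS_simulation_path}. Given a symbolic path $\WSTTSPath = (\loc_0,\zone_0,\dvalSeqi{0}),\dots,(\loc_n,\zone_n,\dvalSeqi{n})$ of $\WSTTS$, I would first choose an arbitrary witness $\cval_{\zone,n}\in\zone_n$ (the zone $\zone_n$ is nonempty by the definition of $\WSTTSState$), and invoke \cref{lemma:WSTTS_WTTS_simulation_path} to extract a concrete path $\WTTSPath = (\loc_0,\cval_0,\absoluteTime_0,\dvalSeqi{0}),\dots,(\loc_n,\cval_n,\absoluteTime_n,\dvalSeqi{n})$ of $\WTTS$ together with witnesses $\cval_{\zone,i}\in\zone_i$ satisfying $\project{(\cval_{\zone,i})}{\Clock}=\cval_i$ and $\cval_{\zone,i}(\absClock)=\absoluteTime_i$ for all $i$.

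Second, I would verify that the step weights agree. This is immediate once the witnesses are in hand: for each $i\in\{1,\dots,n\}$, both $\WTTSWeight((\loc_{i-1},\cval_{i-1},\absoluteTime_{i-1},\dvalSeqi{i-1}),(\loc_i,\cval_i,\absoluteTime_i,\dvalSeqi{i}))$ and $\WSTTSWeight((\loc_{i-1},\zone_{i-1},\dvalSeqi{i-1}),(\loc_i,\zone_i,\dvalSeqi{i}))$ equal $\costFunc(\Label(\loc_{i-1}),\dvalSeqi{i-1})$ when $\dvalSeqi{i}=\varepsilon$ and $\semiringTimesUnit$ otherwise, exactly as in the concluding step of \cref{lemma:WTTS_WSTTS_simulation_path_value}. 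Taking $\bigotimes_{i=1}^n$ of these equalities yields $\pathValue(\WTTSPath)=\symbolicPathValue(\WSTTSPath)$.

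For the second clause, I would specialize the choice of the terminal witness $\cval_{\zone,n}$. When $\WSTTSPath\in\ARun(\WSTTS)$, we have $(\loc_0,\zone_0,\dvalSeqi{0})\in\WSTTSInitState$, so $\loc_0\in\InitLoc$, $\zone_0=\{\zerovalue[\ClockWithAbs]\}$ and $\dvalSeqi{0}=\varepsilon$, and $(\loc_n,\zone_n,\dvalSeqi{n})\in\WSTTSAccState$, so $\loc_n\in\AccLoc$, $\dvalSeqi{n}=\varepsilon$, and there exists $\cval_{\zone,n}\in\zone_n$ with $\cval_{\zone,n}(\absClock)=\duration{\signal}$. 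Using this particular $\cval_{\zone,n}$ when invoking \cref{lemma:WSTTS_WTTS_simulation_path}, the produced concrete path starts at $(\loc_0,\zerovalue,0,\varepsilon)\in\InitWTTSState$ and ends at $(\loc_n,\cval_n,\duration{\signal},\varepsilon)\in\AccWTTSState$, i.e.\ $\WTTSPath\in\ARun(\WTTS)$, while the preceding weight-equality argument still applies.

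The only subtle point -- and hence the main obstacle to watch for -- is ensuring that the initial state produced by \cref{lemma:WSTTS_WTTS_simulation_path} lands exactly on the unique valuation $\zerovalue$. Because $\zone_0=\{\zerovalue[\ClockWithAbs]\}$ is a singleton, the only possible $\cval_{\zone,0}$ is $\zerovalue[\ClockWithAbs]$, so the witness $\cval_0=\project{(\cval_{\zone,0})}{\Clock}$ is forced to be $\zerovalue$ and $\absoluteTime_0=0$; hence $(\loc_0,\cval_0,\absoluteTime_0,\dvalSeqi{0})\in\InitWTTSState$ without any extra argument. The rest is a routine propagation of the witnesses through the already-established step-wise simulation.
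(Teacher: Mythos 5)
Your proposal is correct and follows essentially the same route as the paper's proof: it invokes \cref{lemma:WSTTS_WTTS_simulation_path} to lift the symbolic path to a concrete one, checks weight equality transition-by-transition, and specializes the terminal witness $\cval_{\zone,n}$ with $\cval_{\zone,n}(\absClock)=\duration{\signal}$ to handle the accepting-run case. The only addition is your explicit remark that the singleton $\zone_0=\{\zerovalue[\ClockWithAbs]\}$ forces the initial valuation, which the paper leaves implicit; no further changes are needed.
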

\begin{proof}
 By \cref{lemma:WSTTS_WTTS_simulation_path},  
 for any path 
 \begin{displaymath}
  \WSTTSPath = (\loc_0,\zone_0,\dvalSeqi{0}),(\loc_1,\zone_1,\dvalSeqi{1}),\dots,(\loc_n,\zone_n,\dvalSeqi{n})  
 \end{displaymath}
 of $\WSTTS$.
 there is a path
 \begin{displaymath}
  \WTTSPath = (\loc_0,\cval_0,\absoluteTime_0,\dvalSeqi{0}),(\loc_1,\cval_1,\absoluteTime_1,\dvalSeqi{1}),\dots,(\loc_n,\cval_n,\absoluteTime_n,\dvalSeqi{n}) 
 \end{displaymath}
 of $\WTTS$.
 For any $i\in\{1,2,\dots,n\}$, we have 
 \begin{align*}
  &\WTTSWeight\bigl( (\loc_{i-1},\cval_{i-1},\absoluteTime_{i-1},\dvalSeqi{{i-1}}),(\loc_i,\cval_i,\absoluteTime_i,\dvalSeqi{i}) \bigr)\\
  =&
  \begin{cases}
   \costFunc(\Label(\loc_{i-1},\dvalSeqi{i-1})) & \text{if $\dvalSeqi{i} = \varepsilon$}\\
   \semiringTimesUnit & \text{if $\dvalSeqi{i} \neq \varepsilon$}\\
  \end{cases}\\
  =& \WSTTSWeight\bigl( (\loc_{i-1},\zone_{i-1},\dvalSeqi{{i-1}}),(\loc_i,\zone_i,\dvalSeqi{i}) \bigr)\\
 \end{align*}
 Therefore, we have $\pathValue(\WTTSPath) = \symbolicPathValue(\WSTTSPath)$.

 When $\WSTTSPath\in\ARun(\WSTTS)$, we have
 $\loc_0\in\InitLoc$, 
 $\zone_0 = \zerovalue[\ClockWithAbs]$, 
 $\dvalSeqi{0} = \varepsilon$,
 $\loc_n\in\AccLoc$,
 $\exists \cval_{\zone,n}\in\zone_n.\, \cval_{\zone,n} = \duration{\signal}$, and
 $\dvalSeqi{n} = \varepsilon$.
 By \cref{lemma:WSTTS_WTTS_simulation_path}, for $\cval_{\zone,n}\in\zone_n$ satisfying $\cval_{\zone,n} = \duration{\signal}$,
 there is a path $\WTTSPath = (\loc_0,\cval_0,\absoluteTime_0,\dvalSeqi{0}),(\loc_1,\cval_1,\absoluteTime_1,\dvalSeqi{1}),\dots,(\loc_n,\cval_n,\absoluteTime_n,\dvalSeqi{n})$ of $\WTTS$ such that
 we have $\project{(\cval_{\zone,n})}{\Clock}=\cval_{n}$, $\cval_{\zone,n}(\absClock) = \absoluteTime_n$
 $\cval_{0} = \zerovalue$, $\absoluteTime_0 = 0$, and
 for any $i \in\{1,2,\dots,n-1\}$, there exists $\cval_{\zone,i} \in \zone_i$ satisfying $\project{(\cval_{\zone,i})}{\Clock}=\cval_{i}$ and $\cval_{\zone,i}(\absClock) = \absoluteTime_i$.
 Therefore $\WTTSPath\in\ARun(\WTTS)$ also holds.
 \qed
\end{proof}

\begin{theorem}
 \label{theorem:WSTTS_WTTS_simulation_trace_value}
 Let $\signal\in\signals$ be a signal and let $\TSWAWithInside$ be a TSWA, where $\TSAWithInside$.
 Let $\WTTS$ and $\WSTTS$ be the WTTS and WSTTS of $\signal$ and $\TSWA$, respectively.
 If the semiring $\semiring$ is idempotent,
 we have $\symbolicTraceValue(\WSTTS) \preceq \traceValue(\WTTS)$.
\end{theorem}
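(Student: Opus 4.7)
The plan is to mirror the argument used for \cref{theorem:WTTS_WSTTS_simulation_trace_value}, but now exploiting \cref{lemma:WSTTS_WTTS_simulation_path_value} in the opposite direction. That lemma gives, for every $\WSTTSPath \in \ARun(\WSTTS)$, a corresponding $\WTTSPath \in \ARun(\WTTS)$ with $\pathValue(\WTTSPath) = \symbolicPathValue(\WSTTSPath)$. So I would first invoke the axiom of choice to fix a function $g\colon \ARun(\WSTTS) \to \ARun(\WTTS)$ with $\pathValue(g(\WSTTSPath)) = \symbolicPathValue(\WSTTSPath)$ for every $\WSTTSPath \in \ARun(\WSTTS)$.

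Next I would compute $\symbolicTraceValue(\WSTTS) \semiringPlus \traceValue(\WTTS)$ by unfolding both trace values as suprema of path values. Using completeness of $\semiring$ to reindex and the relation $\pathValue(g(\WSTTSPath)) = \symbolicPathValue(\WSTTSPath)$, the sum
\begin{math}
\bigoplus_{\WSTTSPath \in \ARun(\WSTTS)} \symbolicPathValue(\WSTTSPath)
\end{math}
equals
\begin{math}
\bigoplus_{\WTTSPath \in g(\ARun(\WSTTS))} \pathValue(\WTTSPath),
\end{math}
which is a subsum of $\traceValue(\WTTS) = \bigoplus_{\WTTSPath\in\ARun(\WTTS)} \pathValue(\WTTSPath)$. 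Hence
\begin{math}
\symbolicTraceValue(\WSTTS) \semiringPlus \traceValue(\WTTS) = \traceValue(\WTTS),
\end{math}
where I use idempotency of $\semiring$ to collapse any doubled contributions. By the definition of the canonical order $\preceq$, this yields exactly $\symbolicTraceValue(\WSTTS) \preceq \traceValue(\WTTS)$.

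Most of the work is already sequestered into \cref{lemma:WSTTS_WTTS_simulation_path_value}, so the remaining calculation is almost entirely algebraic. The only subtle point is making sure the reindexing is valid: the map $g$ need not be injective, but because $\semiring$ is idempotent, repeated occurrences of the same path value in the symbolic sum do not change the result, and the partition axiom of a complete semiring lets me split $\ARun(\WTTS)$ into $g(\ARun(\WSTTS))$ and its complement cleanly. This is the same step that was implicit in the proof of \cref{theorem:WTTS_WSTTS_simulation_trace_value} using $f$, so no new difficulty arises. Together with \cref{theorem:WTTS_WSTTS_simulation_trace_value}, this yields $\traceValue(\WTTS) = \symbolicTraceValue(\WSTTS)$, establishing \cref{corollary:trace_value_correctness}.
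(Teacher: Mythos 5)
Your proposal is correct and follows essentially the same route as the paper's own proof: fix a choice function $g\colon\ARun(\WSTTS)\to\ARun(\WTTS)$ from \cref{lemma:WSTTS_WTTS_simulation_path_value}, reindex $\symbolicTraceValue(\WSTTS)$ as a subsum of $\traceValue(\WTTS)$ using completeness and idempotency, and read off $\symbolicTraceValue(\WSTTS)\preceq\traceValue(\WTTS)$ from the canonical order. Your explicit remark that non-injectivity of $g$ is harmless because equal path values collapse under idempotency is a point the paper leaves implicit, but otherwise the two arguments coincide.
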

\begin{proof}
 By \cref{lemma:WSTTS_WTTS_simulation_path_value}, there is a mapping $f\colon\ARun(\WSTTS) \to \ARun(\WTTS)$ satisfying $\WTTSWeight(f(\WSTTSPath)) = \WSTTSWeight(\WSTTSPath)$.
 We have
 \begin{align*}
  & \symbolicTraceValue(\WSTTS) \semiringPlus \traceValue(\WTTS) \\
  =& \bigl(\bigoplus_{\WSTTSPath\in\ARun(\WSTTS)} \symbolicPathValue(\WSTTSPath)\bigr) \semiringPlus \bigl( \bigoplus_{\WTTSPath\in\ARun(\WTTS)} \pathValue(\WTTSPath) \bigr)\\
  =& \bigl(\bigoplus_{\WTTSPath\in f(\ARun(\WSTTS))} \pathValue(\WTTSPath)\bigr) \semiringPlus \bigl( \bigoplus_{\WTTSPath\in \ARun(\WTTS)} \pathValue(\WTTSPath) \bigr)\\
  =& \bigl(\bigoplus_{\WTTSPath\in\ARun(\WTTS)} \pathValue(\WTTSPath)\bigr) = \traceValue(\WTTS)
 \end{align*}
 Therefore, we have $\symbolicTraceValue(\WSTTS) \preceq \traceValue(\WTTS)$.
 \qed
\end{proof}

\begin{proof}
 [\cref{corollary:trace_value_correctness}]
 By \cref{theorem:WTTS_WSTTS_simulation_trace_value} and \cref{theorem:WSTTS_WTTS_simulation_trace_value}, we have $\traceValue(\WTTS) \preceq \symbolicTraceValue(\WSTTS)$. and $\symbolicTraceValue(\WSTTS) \preceq \traceValue(\WTTS)$. Therefore, we have the following.
 \begin{displaymath}
  \symbolicTraceValue(\WSTTS)
  = \traceValue(\WTTS) \semiringPlus \symbolicTraceValue(\WSTTS)
  = \symbolicTraceValue(\WSTTS) \semiringPlus \traceValue(\WTTS)
  = \traceValue(\WTTS)
 \end{displaymath}
 \qed
\end{proof}

\subsection{Proof of \cref{theorem:incremental_correctness}}

For locations $\loc,\loc'$ of $\TSAWithInside$, 
$\zone,\zone'\in\ZonesWithAbs$,
$\dvalSeq,\dvalSeq[']\in\DValSeq$, and
the WSTTS $\WSTTS$ of a signal $\signal$ and $\TSWAWithInside$,
We denote the set of paths from $(\loc,\zone,\dvalSeq)$ to $(\loc',\zone',\dvalSeq['])$ of $\WSTTS$ as follows.
\begin{align*}
&\Path(\WSTTS,\loc,\zone,\dvalSeq,\loc',\zone',\dvalSeq['])\\ =& 
 \{
 \WSTTSPath \mid
  \text{
 $(\loc,\zone,\dvalSeq),\WSTTSstate_1,\WSTTSstate_2,\dots,\WSTTSstate_n,(\loc',\zone',\dvalSeq['])$ is a path of $\WSTTS$
 }
 \}
\end{align*}

By symbolic path value $\symbolicPathValue(\WSTTSPath)$, we can rewrite the increment function $\incrementFunc(a,\absoluteTime)$ as follows.

\begin{align*}
  \incrementFunc(a,\absoluteTime)(\weights) = 
  \{(\loc',\zone',\dvalSeq['],\semiringElem')\mid &
  \loc'\in\Loc, 
  \zone'\in\ZonesWithAbs,
  \forall \nu'\in\zone'.\, \nu'(\absClock)= \absoluteTime,\\
  \dvalSeq[']\in\DValSeq,
 \semiringElem' &=
  \bigoplus_{
  \substack{(\loc,\zone,\dvalSeq,\semiringElem)\in w\\
  \WSTTSPath\in\Path(\WSTTS_{a,\absoluteTime},\loc,\zone,\dvalSeq,\loc',\zone',\dvalSeq['])
  }}
  \semiringElem \semiringTimes \symbolicPathValue(\WSTTSPath)
  \}  
\end{align*}

\begin{lemma}
 Let $\WSTTS$ be the WSTTS of a signal $\signalWithInside$ and a TSWA $\TSWAWithInside$, where $\TSAWithInside$.
 For any $i\in\{1,2,\dots,n\}$ and
 for any path
 \begin{displaymath}
  \WSTTSPath = (\loc_0,\{\zerovalue[\ClockWithAbs]\},\dvalSeqi{0}),(\loc_1,\zone_1,\dvalSeqi{1}),\dots,(\loc_{m},\zone_{m},\dvalSeqi{m}) 
 \end{displaymath}
 of $\WSTTS$,
 if for any $\cval_m \in \zone_m$, $\cval_m(\absClock) = \sum_{k=1}^{i} \tau_k$ holds,
 for any $j\in\{1,2,\dots,i-1\}$, 
 there exist $\loc'\in\Loc$, $\zone' \in \ZonesWithAbs$, and $\dvalSeq{'}\in\DValSeq$ such that 
 for any $\cval' \in \zone'$, we have $\cval'(\absClock) = \sum_{k=1}^{j} \tau_k$,
 and there are paths $\WSTTSPath' \in \Path(\WSTTS, \loc_0,\{\zerovalue[\ClockWithAbs]\},\dvalSeqi{0}, \loc',\zone',\dvalSeq['])$ and
 $\WSTTSPath'' \in \Path(\WSTTS, \loc',\zone',\dvalSeq['], \loc_m,\zone_m,\dvalSeqi{m})$ satisfying
 $\symbolicPathValue(\WSTTSPath) =  \symbolicPathValue(\WSTTSPath') \semiringTimes \symbolicPathValue(\WSTTSPath'')$.
\end{lemma}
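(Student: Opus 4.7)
The plan is to identify a transition in $\WSTTSPath$ where the absolute time first crosses the boundary $\sum_{k=1}^{j}\tau_k$, and, if that transition is a time-elapse that overshoots this boundary, to split it into two consecutive time-elapse transitions joined through a fresh state whose zone lies in $M_{j,=}$. The resulting factorization will yield the desired $\WSTTSPath'$ and $\WSTTSPath''$ with $\symbolicPathValue(\WSTTSPath) = \symbolicPathValue(\WSTTSPath') \semiringTimes \symbolicPathValue(\WSTTSPath'')$.

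First, I would establish by a short induction on the transition index that, along $\WSTTSPath$, the value $\cval(\absClock)$ is constant on each $\zone_{m'}$ whenever the latest preceding time-elapse is punctual (or no time-elapse has yet occurred), and otherwise $\zone_{m'} \subseteq M_{k'}$ for some $k'$. In both cases $\TSA$-transitions preserve the $\absClock$-coordinate while time-elapses can only increase it strictly. Combined with $\zone_0 = \{\zerovalue[\ClockWithAbs]\}$ and $\zone_m \subseteq M_{i,=}$, this produces a smallest index $m^{\ast} \in \{1,\dots,m\}$ such that every $\cval \in \zone_{m^{\ast}}$ satisfies $\cval(\absClock) \geq \sum_{k=1}^{j}\tau_k$.

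Because $\TSA$-transitions do not change $\absClock$, the $m^{\ast}$-th transition must be a time-elapse, so $\zone_{m^{\ast}} = \{\cval + \relativeTime \mid \cval \in \zone_{m^{\ast}-1}, \relativeTime \in \Rp\} \cap M$, where $M$ is either $M_{k,=}$ or $M_{k}$ for some $k \geq j$. I then set $\zone' = \{\cval + \relativeTime \mid \cval \in \zone_{m^{\ast}-1}, \relativeTime \in \Rp\} \cap M_{j,=}$, $\loc' = \loc_{m^{\ast}-1}$, and define $\dvalSeq[']$ as the $\absConcat$-accumulation of $\dvalSeqi{m^{\ast}-1}$ with $\values(\signal([\tilde{\cval}(\absClock), \sum_{k=1}^{j}\tau_k)))$ for some $\tilde{\cval} \in \zone_{m^{\ast}-1}$. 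A direct check shows that both $(\loc_{m^{\ast}-1}, \zone_{m^{\ast}-1}, \dvalSeqi{m^{\ast}-1}) \to (\loc', \zone', \dvalSeq['])$ and $(\loc', \zone', \dvalSeq[']) \to (\loc_{m^{\ast}}, \zone_{m^{\ast}}, \dvalSeqi{m^{\ast}})$ are valid time-elapse transitions in $\WSTTS$ and that $\dvalSeqi{m^{\ast}}$ agrees with $\dvalSeq['] \absConcat \values(\signal([\sum_{k=1}^{j}\tau_k, \tilde{\cval}'(\absClock))))$ for any $\tilde{\cval}' \in \zone_{m^{\ast}}$.

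Finally, I take $\WSTTSPath'$ to be the prefix of $\WSTTSPath$ through $(\loc_{m^{\ast}-1}, \zone_{m^{\ast}-1}, \dvalSeqi{m^{\ast}-1})$ extended by the inserted transition to $(\loc', \zone', \dvalSeq['])$, and $\WSTTSPath''$ to be the inserted transition from $(\loc', \zone', \dvalSeq['])$ to $(\loc_{m^{\ast}}, \zone_{m^{\ast}}, \dvalSeqi{m^{\ast}})$ followed by the suffix of $\WSTTSPath$. Since both inserted time-elapse edges carry weight $\semiringTimesUnit$, the identity $\symbolicPathValue(\WSTTSPath) = \symbolicPathValue(\WSTTSPath') \semiringTimes \symbolicPathValue(\WSTTSPath'')$ reduces to $\semiringTimesUnit \semiringTimes \semiringTimesUnit = \semiringTimesUnit$ together with the fact that every other edge of $\WSTTSPath$ appears exactly once and in the same order in the concatenation of $\WSTTSPath'$ and $\WSTTSPath''$. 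The main obstacle is verifying that both inserted transitions lie in $\WSTTSTransition$: this entails a small case split on whether the original $m^{\ast}$-th transition is punctual or non-punctual and on whether $k = j$ or $k > j$, together with a careful check that $\dvalSeq[']$ satisfies the compatibility condition $\dvalSeq['] = \varepsilon$ or $\dvalSeq['] \absConcat \signal(\cval(\absClock)) = \dvalSeq[']$ required by the definition of $\WSTTSState$.
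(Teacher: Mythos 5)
Your proposal follows essentially the same route as the paper's proof: locate the transition of $\WSTTSPath$ at which the absolute time crosses the boundary $\sum_{k=1}^{j}\tau_k$, note that it must be a time elapse, insert an intermediate state whose zone is the time-successor of the preceding zone cut at the level set $\{\cval \mid \cval(\absClock)=\sum_{k=1}^{j}\tau_k\}$, and observe that both inserted edges are time elapses of weight $\semiringTimesUnit$, so the symbolic path value factorizes. Your preliminary induction (each reachable zone projects onto a single piece $M_{k'}$ or a single level set of $\absClock$) is a reasonable way to justify the existence of the crossing index, which the paper uses implicitly.

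One sub-case breaks as literally written. If the path already visits a state whose zone lies in $M_{j,=}$ --- in your notation, the case $k=j$, which forces $M=M_{j,=}$ and hence $\zone'=\zone_{m^{\ast}}$ --- then the second inserted edge would be a time elapse of duration $0$, which is not a transition of $\WSTTS$ since $\relativeTime$ ranges over $\Rp$ (strictly positive reals). The paper handles exactly this situation as a separate first case, in which no state is inserted and the path is simply split at the index $h$ with $\zone_h\subseteq M_{j,=}$. You flag that a case split on $k=j$ versus $k>j$ is needed but leave it unresolved; the resolution is to fall back to a plain split rather than an insertion in the $k=j$ case. With that repair, and after carrying out the deferred verification that the two inserted edges satisfy the time-elapse clauses of \cref{def:weighted_zone_graph} (which is where the paper spends its effort), your argument coincides with the paper's.
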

\begin{proof}
 If there exists $h \in \{1,2,\dots,m\}$ such that for any $\cval_{h} \in \zone_h$, we have $\cval_{h}(\absClock) = \sum_{k=1}^{j} \tau_k$,
 we have the following.
 \begin{align*}
  \WSTTSPath' =& (\loc_0,\{\zerovalue[\ClockWithAbs]\},\dvalSeqi{0}),(\loc_1,\zone_1,\dvalSeqi{1}),\dots,(\loc_{h},\zone_{h},\dvalSeqi{h})\\
  \in& \Path(\WSTTS, \loc_0,\{\zerovalue[\ClockWithAbs]\},\dvalSeqi{0}, \loc',\zone',\dvalSeq['])\\
 \WSTTSPath'' =& (\loc_{h},\zone_{h},\dvalSeqi{h}),(\loc_{h+1},\zone_{h+1},\dvalSeqi{h+1}),\dots,(\loc_{m},\zone_{m},\dvalSeqi{m})\\
 \in& \Path(\WSTTS, \loc',\zone',\dvalSeq['], \loc_m,\zone_m,\dvalSeqi{m})\\
  \symbolicPathValue(\WSTTSPath) =&  \symbolicPathValue(\WSTTSPath') \semiringTimes \symbolicPathValue(\WSTTSPath'')
\end{align*}

 Assume for any $h \in \{1,2,\dots,m\}$, there is $\cval_{h} \in \zone_h$ satisfying $\cval_{h}(\absClock) \neq \sum_{k=1}^{j} \tau_k$.
 Let $h \in \{1,2,\dots,m-1\}$ such that
 for any $\cval_{h} \in \zone_h$, we have $\cval_{h}(\absClock) < \sum_{k=1}^{j} \tau_k$ and
 for any $\cval_{h+1} \in \zone_{h+1}$, we have $\cval_{h+1}(\absClock) > \sum_{k=1}^{j} \tau_k$.
 Since $\bigl( (\loc_{h},\zone_{h},\dvalSeqi{h}), (\loc_{h+1},\zone_{h+1},\dvalSeqi{h+1}) \bigr) \in \WSTTSTransition$, 
 there is $g\in \{1,2,\dots,n\}$, satisfying
 $\loc_{h}=\loc_{h+1}$, 
 $\dvalSeqi{h+1} = \dvalSeqi{h}\absConcat\values(\signal([\cval_{h}(\absClock),\cval_{h+1}(\absClock))))$, and
 $\zone_{h+1} = \{\cval + \relativeTime \mid \cval \in \zone, \relativeTime\in\Rp\} \cap M_{g}$,
 where $\cval_{h}\in\zone_{h},\cval_{h+1}\in\zone_{h+1}$ and 
 $M_{g} = \{ \cval \mid \sum_{k=0}^{g-1} \tau_k < \cval(\absClock) < \sum_{k=0}^{g} \tau_k\}$.
 Let $\zone' = \{\cval + \relativeTime \mid \cval \in \zone, \relativeTime\in\Rp\} \cap \{ \cval \mid \sum_{k=0}^{j} \tau_k = \cval(\absClock)\}$ and
 $\dvalSeq['] = \dvalSeqi{h}\absConcat\values(\signal([\cval_{h}(\absClock),\sum_{k=0}^{j} \tau_k)))$.
 Since 
 $\bigl( (\loc_{h},\zone_{h},\dvalSeqi{h}), (\loc_{h},\zone',\dvalSeq[']) \bigr) \in \WSTTSTransition$ and
 $\bigl( (\loc_{h+1},\zone',\dvalSeq[']), (\loc_{h+1},\zone_{h+1},\dvalSeqi{h+1}) \bigr) \in \WSTTSTransition$ hold, 
 we have the following.
 \begin{align*}
  \WSTTSPath' =& (\loc_0,\{\zerovalue[\ClockWithAbs]\},\dvalSeqi{0}),(\loc_1,\zone_1,\dvalSeqi{1}),\dots,(\loc_{h},\zone_{h},\dvalSeqi{h}),(\loc_{h},\zone',\dvalSeq['])\\
  \in& \Path(\WSTTS, \loc_0,\{\zerovalue[\ClockWithAbs]\},\dvalSeqi{0}, \loc_{h'},\zone',\dvalSeq['])*\\
  \WSTTSPath'' =& (\loc_{h+1},\zone',\dvalSeq[']),(\loc_{h+1},\zone_{h+1},\dvalSeqi{h+1}),\dots,(\loc_{m},\zone_{m},\dvalSeqi{m})\\
  \in& \Path(\WSTTS, \loc_{h+1},\zone',\dvalSeq['], \loc_m,\zone_m,\dvalSeqi{m})\\
  \symbolicPathValue(\WSTTSPath) =&  \symbolicPathValue(\WSTTSPath') \semiringTimes \symbolicPathValue(\WSTTSPath'')
 \end{align*}
 \qed
\end{proof}

\begin{lemma}
 \label{lemma:incremental_and_non_incremental}
 For a TSWA $\TSWA$, a signal $\signalWithInside$, and $i\in\{1,2,\dots,n\}$,  we have the following.
 \begin{align*}
  \incrementalWeight_i = \bigl\{
  (\loc_i,\zone_i,\dvalSeqi{i},\semiringElem_i) \,\bigm|\, &
  \loc_i \in \Loc,
  \zone_i \in \ZonesWithAbs,
  \dvalSeqi{i} \in \DValSeq,\\
  &
  \semiringElem_i  =
  \bigoplus_{
  \substack{\initLoc\in\InitLoc\\
  \WSTTSPath\in\Path(\WSTTS,\initLoc,\zerovalue{\ClockWithAbs},\varepsilon,\loc_i,\zone_i,\dvalSeqi{i})
  }}
  \symbolicPathValue(\WSTTSPath)\\
  &
  \forall \nu'\in\zone'.\, \nu'(\absClock)= \sum_{j=1}^{i} \tau_{j}
  \bigr\}
 \end{align*} 
\end{lemma}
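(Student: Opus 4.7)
\begin{myproof}
We prove the lemma by induction on $i$, using the preceding unnamed lemma (path-splitting at the time boundary $\sum_{k=1}^{j}\tau_{k}$) as the engine that connects the incremental construction with full paths of $\WSTTS$.

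For the base case $i=1$, we unfold $\incrementalWeight_{1} = \incrementFunc(a_{1},T_{1})(\{(\initLoc,\{\zerovalue[\ClockWithAbs]\},\varepsilon,\semiringTimesUnit)\mid \initLoc\in\InitLoc\})$. By the rewriting of $\incrementFunc$ in terms of symbolic path values, the semiring component attached to $(\loc_{1},\zone_{1},\dvalSeqi{1})$ is $\bigoplus_{\initLoc,\WSTTSPath'}\semiringTimesUnit\semiringTimes\symbolicPathValue(\WSTTSPath')$, where $\WSTTSPath'$ ranges over paths from $(\initLoc,\{\zerovalue[\ClockWithAbs]\},\varepsilon)$ to $(\loc_{1},\zone_{1},\dvalSeqi{1})$ in the WSTTS $\WSTTS_{a_{1},T_{1}}$ of the constant signal $a_{1}^{T_{1}}$ and $\TSWA$. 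I would then observe that, because no state in such a path can have $\absClock$-value exceeding $T_{1}$, the transitions of $\WSTTS_{a_{1},T_{1}}$ reachable before $\absClock=T_{1}$ coincide with those of $\WSTTS$ restricted to the first constant segment; hence the two sets of paths (and their symbolic path values) are in bijection. This yields precisely the claimed expression.

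For the inductive step, assume the claim for $\incrementalWeight_{i-1}$. By definition, $\incrementalWeight_{i} = \incrementFunc(a_{i},T_{i})(\incrementalWeight_{i-1})$, so the weight assigned to $(\loc_{i},\zone_{i},\dvalSeqi{i})$ (with every $\nu'\in\zone_{i}$ satisfying $\nu'(\absClock)=T_{i}$) is
\[
\bigoplus_{(\loc',\zone',\dvalSeq[']) }\Bigl(\bigoplus_{\initLoc,\WSTTSPath'}\symbolicPathValue(\WSTTSPath')\Bigr)\semiringTimes\Bigl(\bigoplus_{\WSTTSPath''}\symbolicPathValue(\WSTTSPath'')\Bigr),
\]
where $\WSTTSPath'$ ranges over paths from an initial state to $(\loc',\zone',\dvalSeq['])$ with $\absClock=T_{i-1}$, and $\WSTTSPath''$ ranges over paths from $(\loc',\zone',\dvalSeq['])$ to $(\loc_{i},\zone_{i},\dvalSeqi{i})$ inside $\WSTTS_{a_{i},T_{i}-T_{i-1}}$ (shifted). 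Using distributivity of $\semiringTimes$ over $\bigoplus$ (available because $\semiring$ is complete), this becomes a single sum of products $\symbolicPathValue(\WSTTSPath')\semiringTimes\symbolicPathValue(\WSTTSPath'')$ along pairs that concatenate into a full path of $\WSTTS$ from an initial state to $(\loc_{i},\zone_{i},\dvalSeqi{i})$ passing through the hyperplane $\absClock=T_{i-1}$.

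The key step is then to argue that this sum ranges over exactly the set of all paths $\WSTTSPath\in\Path(\WSTTS,\initLoc,\{\zerovalue[\ClockWithAbs]\},\varepsilon,\loc_{i},\zone_{i},\dvalSeqi{i})$. The ``$\subseteq$'' direction is immediate because every concatenation $\WSTTSPath'\WSTTSPath''$ is a valid path of $\WSTTS$ and $\symbolicPathValue$ is multiplicative along concatenation. For ``$\supseteq$'', I invoke the path-splitting lemma just above the statement: any such path can be split (by inserting at most one punctual time-elapse transition) at $\absClock=T_{i-1}$ into a prefix $\WSTTSPath'$ and a suffix $\WSTTSPath''$ with $\symbolicPathValue(\WSTTSPath)=\symbolicPathValue(\WSTTSPath')\semiringTimes\symbolicPathValue(\WSTTSPath'')$, and this split is unique once the intermediate state $(\loc',\zone',\dvalSeq['])$ at the boundary is fixed. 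The main obstacle is precisely this bookkeeping: I must check that the split is well-defined, that idempotency of $\semiring$ absorbs any duplication caused by multiple admissible intermediate states lying in the same zone, and that the suffix paths indeed live in the zone-graph of the constant signal $a_{i}^{T_{i}-T_{i-1}}$ (which uses that transitions of $\TSA$ and time-elapses across a constant segment do not reference $a_{j}$ for $j\neq i$). Combining these observations with the induction hypothesis for $\incrementalWeight_{i-1}$ yields the desired equality. \qed
\end{myproof}
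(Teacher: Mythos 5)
Your proposal is correct and follows essentially the same route as the paper's proof: induction on $i$, unfolding $\incrementFunc$ via its symbolic-path-value formulation, applying distributivity of $\semiringTimes$ over $\bigoplus$, and using the preceding path-splitting lemma to identify concatenations of prefix/suffix paths with the full set $\Path(\WSTTS,\initLoc,\{\zerovalue[\ClockWithAbs]\},\varepsilon,\loc_i,\zone_i,\dvalSeqi{i})$. You merely make explicit (the two inclusions, the identification of paths in $\WSTTS_{a_i,\cdot}$ with paths of $\WSTTS$ over the $i$-th constant segment) what the paper leaves implicit in its chain of equalities.
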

\begin{proof}
 We prove by induction on $i$.
 If $i = 1$, we have the following.
 \begin{align*}
  \incrementalWeight_1 =& \incrementFunc\bigl(a_{1}, \tau_{1} \bigr)(\{(\initLoc,\zerovalue[\ClockWithAbs],\varepsilon,\semiringTimesUnit)\mid \initLoc \in\InitLoc\})\\
 =& \{(\loc',\zone',\dvalSeq['],\semiringElem')\mid 
  \loc'\in\Loc, 
  \zone'\in\ZonesWithAbs,
  \dvalSeq[']\in\DValSeq,\\
  &\qquad
  \semiringElem' =
  \bigoplus_{
  \substack{
  \initLoc \in\InitLoc\\
  \WSTTSPath\in\Path(\WSTTS_{a_1,\tau_1},\initLoc,\zerovalue[\ClockWithAbs],\varepsilon,\loc',\zone',\dvalSeq['])
  }}
  \semiringTimesUnit \semiringTimes \symbolicPathValue(\WSTTSPath),\\
  &\qquad
  \forall \nu'\in\zone'.\, \nu'(\absClock)= \tau_1
  \}  \\
 =& \{(\loc',\zone',\dvalSeq['],\semiringElem')\mid 
  \loc'\in\Loc, 
  \zone'\in\ZonesWithAbs,
  \dvalSeq[']\in\DValSeq,\\
  &\qquad
  \semiringElem' =
  \bigoplus_{
  \substack{
  \initLoc \in\InitLoc\\
  \WSTTSPath\in\Path(\WSTTS,\initLoc,\zerovalue[\ClockWithAbs],\varepsilon,\loc',\zone',\dvalSeq['])
  }}
  \semiringTimesUnit \semiringTimes \symbolicPathValue(\WSTTSPath),\\
  &\qquad
  \forall \nu'\in\zone'.\, \nu'(\absClock)= \tau_1
  \}  
 \end{align*}

 If $i > 1$, by induction hypothesis, we have the following.
 \begin{align*}
  \incrementalWeight_{i} =& \incrementFunc(a_{i}, \tau_{i})(\incrementalWeight_{i-1})\\
 =& \bigl\{(\loc',\zone',\dvalSeq['],\semiringElem')\mid 
  \loc'\in\Loc, 
  \zone'\in\ZonesWithAbs,
  \dvalSeq[']\in\DValSeq,\\
  &\qquad
  \semiringElem' =
  \bigoplus_{
  \substack{
  (\loc,\zone,\dvalSeq,\semiringElem) \in \incrementalWeight_{i-1}\\
  \WSTTSPath\in\Path(\WSTTS_{a_{i},\tau_{i}},\loc,\zone,\dvalSeq,\loc',\zone',\dvalSeq['])
  }}
  \semiringElem \semiringTimes \symbolicPathValue(\WSTTSPath),\\
  &\qquad
  \forall \nu'\in\zone'.\, \nu'(\absClock)= \sum_{j=1}^{i} \tau_j
  \bigr\}  \\
 =& \bigl\{(\loc',\zone',\dvalSeq['],\semiringElem')\mid
  \loc'\in\Loc, 
  \zone'\in\ZonesWithAbs,
  \dvalSeq[']\in\DValSeq,\\
  &\qquad
  \semiringElem' =
  \bigoplus_{
  \substack{
  \initLoc \in\InitLoc,
  \loc\in\Loc, 
  \zone\in\ZonesWithAbs,
  \dvalSeq\in\DValSeq,\\
  \WSTTSPath\in\Path(\WSTTS_{a_i,\tau_i},\loc,\zone,\dvalSeq,\loc',\zone',\dvalSeq['])\\
  \WSTTSPath'\in\Path(\WSTTS,\initLoc,\zerovalue[\ClockWithAbs],\varepsilon,\loc,\zone,\dvalSeq)\\
  \forall \nu\in\zone.\, \nu(\absClock)= \sum_{j=1}^{i-1} \tau_j
  }}
  \symbolicPathValue(\WSTTSPath') \semiringTimes \symbolicPathValue(\WSTTSPath),\\
  &\qquad
  \forall \nu'\in\zone'.\, \nu'(\absClock)= \sum_{j=1}^{i} \tau_j
  \bigr\}\\
 =& \{(\loc',\zone',\dvalSeq['],\semiringElem')\mid 
  \loc'\in\Loc, 
  \zone'\in\ZonesWithAbs,
  \dvalSeq[']\in\DValSeq,\\
  &\qquad
  \semiringElem' =
  \bigoplus_{
  \substack{
  \initLoc \in\InitLoc\\
  \WSTTSPath\in\Path(\WSTTS,\initLoc,\zerovalue[\ClockWithAbs],\varepsilon,\loc',\zone',\dvalSeq['])
  }}
  \symbolicPathValue(\WSTTSPath),\\
  &\qquad
  \forall \nu'\in\zone'.\, \nu'(\absClock)= \sum_{j=1}^{i} \tau_j
  \}  
 \end{align*}
 \qed
\end{proof}

\begin{proof}
 [\cref{theorem:incremental_correctness}]
 By \cref{lemma:incremental_and_non_incremental},
 we have the following.
 \begin{align*}
&\bigoplus_{\substack{
  (\loc,\zone,\dvalSeq) \in \WSTTSAccState\\
  (\loc,\zone,\dvalSeq,\semiringElem)\in\incrementalWeight_n
  }} \semiringElem  \\
=& \bigoplus_{\substack{
  (\loc,\zone,\dvalSeq) \in \WSTTSInitState\\
  (\loc',\zone',\dvalSeq') \in \WSTTSAccState\\
  \WSTTSPath \in \Path(\WSTTS, \loc,\zone,\dvalSeq, \loc',\zone',\dvalSeq')
  }} \symbolicPathValue(\WSTTSPath)  \\
=& \bigoplus_{
  \WSTTSPath \in \ARun(\WSTTS)
  } \symbolicPathValue(\WSTTSPath)  \\
=&\symbolicTraceValue(\WSTTS)
 \end{align*}
 \qed
\end{proof}

\section{Performance comparison between the benchmarks}
\label{appendix:comparizon_performance_benchmarks}

In \cref{fig:result_long_signal}, 
we observe that the execution time and memory usage of \textsc{Ringing} are higher than those of \textsc{Overshoot}. This is because the TSA of \textsc{Ringing} of is more complex than that of \textsc{Overshoot}: it has more states and clock variables, and it contains a loop.
We also observe that for \textsc{Ringing}, the execution time for the tropical semiring is shorter. This is because staying at the locations with $\top$ minimizes the weight for tropical semiring, and we need less exploration.

\section{Detailed experimental results}

 \begin{table*}[tb]
 \centering
 \caption{Execution time and memory usage under long signals for \textsc{Overshoot} and \textsc{Ringing} for sup-inf semiring}
  \label{table:result_long_signal}
  \scalebox{0.75}{
   \pgfplotstabletypeset[
   sci,
   sci zerofill,
   multicolumn names, 
   columns={[index]0,[index]1,[index]2,[index]3,[index]4},
   display columns/0/.style={
   column name=\begin{tabular}{c}$|\sigma|$\end{tabular}, 
   fixed,fixed zerofill,precision=0,
   },
   display columns/1/.style={
   column name=\begin{tabular}{c}Execution Time [s]\\\textsc{(Overshoot)}\end{tabular},
   precision=2},
   display columns/2/.style={
   fixed,precision=2,
   column name=\begin{tabular}{c}Memory Usage [KiB]\\\textsc{(Overshoot)}\end{tabular}},
   display columns/3/.style={
   column name=\begin{tabular}{c}Execution Time [s]\\\textsc{(Ringing)}\end{tabular},
   precision=2},
   display columns/4/.style={
   fixed,precision=2,
   column name=\begin{tabular}{c}Memory Usage [KiB]\\\textsc{(Ringing)}\end{tabular}},
   fixed,fixed zerofill,precision=1,
   every head row/.style={
   before row={\toprule}, 
   after row={\midrule} 
   },
   string replace={0}{},
   empty cells with={$< 0.01$},
   every last row/.style={after row=\bottomrule}, 
   ]{./dat/qtpm-all-BrCCPulse.tsv}}
 \end{table*}
 \begin{table*}[tb]
 \centering
 \caption{Execution time and memory usage under long signals for \textsc{Overshoot (Unbounded)} for sup-inf semiring}
  \label{table:result_long_signal_unbounded}
  \scalebox{0.75}{
   \pgfplotstabletypeset[
   sci,
   sci zerofill,
   multicolumn names, 
   columns={[index]0,[index]1,[index]2},
   display columns/0/.style={
   column name=\begin{tabular}{c}$|\sigma|$\end{tabular}, 
   fixed,fixed zerofill,precision=0,
   },
   display columns/1/.style={
   column name=\begin{tabular}{c}Execution Time [s]\end{tabular},
   precision=2},
   display columns/2/.style={
   fixed,precision=2,
   column name=\begin{tabular}{c}Memory Usage [KiB]\end{tabular}},
   fixed,fixed zerofill,precision=1,
   every head row/.style={
   before row={\toprule}, 
   after row={\midrule} 
   },
   string replace={0}{},
   empty cells with={$< 0.01$},
   every last row/.style={after row=\bottomrule}, 
   ]{./dat/qtpm-all-BrCCPulseUnbounded.tsv}}
 \end{table*}
 \begin{table*}[tb]
 \centering
 \caption{Execution time and memory usage under high frequency for \textsc{Overshoot} and \textsc{Ringing} for sup-inf semiring}
  \label{table:result_high_freq}
 \scalebox{0.75}{
  \pgfplotstableset{create on use/freq/.style={create
  col/expr={6000/\thisrow{0}}}}
  \pgfplotstabletypeset[
   sci,
   sci zerofill,
   multicolumn names, 
   columns={freq,[index]1,[index]2,[index]3,[index]4},
   display columns/0/.style={
   column name=\begin{tabular}{c}Sampling Freq. [Hz]\end{tabular}, 
   fixed,fixed zerofill,precision=1,
   },
   display columns/1/.style={
   column name=\begin{tabular}{c}Execution Time [s]\\\textsc{(Overshoot)}\end{tabular},
   precision=2},
   display columns/2/.style={
   fixed,precision=2,
   column name=\begin{tabular}{c}Memory Usage [KiB]\\\textsc{(Overshoot)}\end{tabular}},
   display columns/3/.style={
   column name=\begin{tabular}{c}Execution Time [s]\\\textsc{(Ringing)}\end{tabular},
   precision=2},
   display columns/4/.style={
   fixed,precision=2,
   column name=\begin{tabular}{c}Memory Usage [KiB]\\\textsc{(Ringing)}\end{tabular}},
   every head row/.style={
   before row={\toprule}, 
   after row={\midrule} 
   },
   string replace={0}{},
   empty cells with={$< 0.01$},
   every last row/.style={after row=\bottomrule}, 
   ]{./dat/qtpm-all-BrCCPulseDensity.tsv}}
 \end{table*}

 \begin{table*}[tb]
 \centering
 \caption{Execution time and memory usage under long signals for \textsc{Overshoot} and \textsc{Ringing} for tropical semiring}
  \label{table:result_long_signal_tropical}
  \scalebox{0.75}{
   \pgfplotstabletypeset[
   sci,
   sci zerofill,
   multicolumn names, 
   columns={[index]0,[index]1,[index]2,[index]3,[index]4},
   display columns/0/.style={
   column name=\begin{tabular}{c}$|\sigma|$\end{tabular}, 
   fixed,fixed zerofill,precision=0,
   },
   display columns/1/.style={
   column name=\begin{tabular}{c}Execution Time [s]\\\textsc{(Overshoot)}\end{tabular},
   precision=2},
   display columns/2/.style={
   fixed,precision=2,
   column name=\begin{tabular}{c}Memory Usage [KiB]\\\textsc{(Overshoot)}\end{tabular}},
   display columns/3/.style={
   column name=\begin{tabular}{c}Execution Time [s]\\\textsc{(Ringing)}\end{tabular},
   precision=2},
   display columns/4/.style={
   fixed,precision=2,
   column name=\begin{tabular}{c}Memory Usage [KiB]\\\textsc{(Ringing)}\end{tabular}},
   fixed,fixed zerofill,precision=1,
   every head row/.style={
   before row={\toprule}, 
   after row={\midrule} 
   },
   string replace={0}{},
   empty cells with={$< 0.01$},
   every last row/.style={after row=\bottomrule}, 
   ]{./dat/qtpm-all-BrCCPulse-tropical.tsv}}
 \end{table*}
 \begin{table*}[tb]
 \centering
 \caption{Execution time and memory usage under long signals for \textsc{Overshoot (Unbounded)} for tropical semiring}
  \label{table:result_long_signal_unbounded_tropical}
  \scalebox{0.75}{
   \pgfplotstabletypeset[
   sci,
   sci zerofill,
   multicolumn names, 
   columns={[index]0,[index]1,[index]2},
   display columns/0/.style={
   column name=\begin{tabular}{c}$|\sigma|$\end{tabular}, 
   fixed,fixed zerofill,precision=0,
   },
   display columns/1/.style={
   column name=\begin{tabular}{c}Execution Time [s]\end{tabular},
   precision=2},
   display columns/2/.style={
   fixed,precision=2,
   column name=\begin{tabular}{c}Memory Usage [KiB]\end{tabular}},
   fixed,fixed zerofill,precision=1,
   every head row/.style={
   before row={\toprule}, 
   after row={\midrule} 
   },
   string replace={0}{},
   empty cells with={$< 0.01$},
   every last row/.style={after row=\bottomrule}, 
   ]{./dat/qtpm-all-BrCCPulseUnbounded-tropical.tsv}}
 \end{table*}
 \begin{table*}[tb]
 \centering
 \caption{Execution time and memory usage under high frequency for \textsc{Overshoot} and \textsc{Ringing} for tropical semiring}
  \label{table:result_high_freq_tropical}
 \scalebox{0.75}{
  \pgfplotstableset{create on use/freq/.style={create
  col/expr={6000/\thisrow{0}}}}
  \pgfplotstabletypeset[
   sci,
   sci zerofill,
   multicolumn names, 
   columns={freq,[index]1,[index]2,[index]3,[index]4},
   display columns/0/.style={
   column name=\begin{tabular}{c}Sampling Freq. [Hz]\end{tabular}, 
   fixed,fixed zerofill,precision=1,
   },
   display columns/1/.style={
   column name=\begin{tabular}{c}Execution Time [s]\\\textsc{(Overshoot)}\end{tabular},
   precision=2},
   display columns/2/.style={
   fixed,precision=2,
   column name=\begin{tabular}{c}Memory Usage [KiB]\\\textsc{(Overshoot)}\end{tabular}},
   display columns/3/.style={
   column name=\begin{tabular}{c}Execution Time [s]\\\textsc{(Ringing)}\end{tabular},
   precision=2},
   display columns/4/.style={
   fixed,precision=2,
   column name=\begin{tabular}{c}Memory Usage [KiB]\\\textsc{(Ringing)}\end{tabular}},
   every head row/.style={
   before row={\toprule}, 
   after row={\midrule} 
   },
   string replace={0}{},
   empty cells with={$< 0.01$},
   every last row/.style={after row=\bottomrule}, 
   ]{./dat/qtpm-all-BrCCPulseDensity-tropical.tsv}}
 \end{table*}

The detailed experimental results are summarized in \crefrange{table:result_long_signal}{table:result_high_freq_tropical}.}
\end{document}

